\newcommand{\AlmostSort}{\textsc{Almost-Sort}}
\newcommand{\EpsilonWaySort}{\textsc{$n^{\epsilon}$-Way-Sort}}
\newcommand{\MainSort}{\textsc{Full-Sort}}
\newcommand{\TunableSort}{\textsc{Sort-Adaptive}}
\newtheorem{theorem}{Theorem}[section]
\newtheorem{corollary}{Corollary}[theorem]
\newtheorem{lemma}[theorem]{Lemma}
\renewcommand{\epsilon}{\varepsilon}
\newcommand{\band}{{stage}\xspace}
\newcommand{\bands}{{stages}\xspace}
\newcommand{\hide}[1]{}
\newcommand{\ceil}[1]       {\left\lceil #1 \right\rceil}
\newcommand{\highlight}[1]{\textit{\textbf{#1}}}
\definecolor{orange}{rgb}{1,0.3,0}
\newcommand{\vgap}{\vspace{-0.2cm}}
\newcommand{\para}[1]{\vspace{-0.2cm}\noindent{\bf{#1.}}}
\newcommand{\codesize}{\small}
\newcommand{\T}{\hspace{1em}}
\definecolor{gray}{rgb}{0.3,0.3,0.3}
\newcommand{\Oh}[1]{{\mathcal O}\left({#1}\right)}
\newcommand{\oh}[1]{{o}\left({#1}\right)}
\newcommand{\Om}[1]{{\Omega}\left({#1}\right)}
\newcommand{\Th}[1]{{\Theta}\left({#1}\right)}
\newcommand{\xif}{{\bf{{if~}}}}
\newcommand{\xthen}{{\bf{{then~}}}}
\newcommand{\xelse}{{\bf{{else~}}}}
\newcommand{\xfor}{{\bf{{for~}}}}
\newcommand{\xto}{{\bf{{to~}}}}
\newcommand{\xdo}{{\bf{{do~}}}}
\newcommand{\xand}{{\bf{{and~}}}}
\newcommand{\xreturn}{{\bf{{return~}}}}
\newcommand{\xparallelfor}{{\bf{{parallel for~}}}}
\newcommand{\xpar}{{\bf{{par:~}}}}
\newcommand{\xblankpar}{{$\phantom{waa:}$}}
\newcommand{\xcomment}{\hfill $\rhd$ }
\newcommand{\vsitem}{\vspace{-0.15cm}\item}
\newcommand{\ALOOP}[1]{\ALC@it\algorithmicloop\ #1%
  \begin{ALC@loop}}
\newcommand{\ENDALOOP}{\end{ALC@loop}\ALC@it\algorithmicendloop}
\definecolor{gray}{rgb}{0.3,0.3,0.3}
\colorlet{lightblue}{blue!15}
\colorlet{lightred}{red!20}
\colorlet{lightyellow}{green!20}
\colorlet{framecolor}{yellow!20}
\colorlet{frameloopcolor}{yellow!20}
   \newcommand\figcaption{\def\@captype{figure}\caption}
   \newcommand\tabcaption{\def\@captype{table}\caption}
\colorlet{algotitlebarcolor}{green!20}
\colorlet{algotitlebarcolortop}{green!25}
\colorlet{algotitlebarcolorbottom}{green!20}
\colorlet{framecolor}{yellow}
\colorlet{framecolortop}{yellow!15}
\colorlet{framecolorbottom}{yellow!15}
\colorlet{bluetop}{cyan!40}
\colorlet{bluebottom}{cyan!20}
\colorlet{greentop}{white!40}
\colorlet{greenbottom}{white!20}
\colorlet{shadowcolor}{gray}
\colorlet{algotitlecolor}{black}
\colorlet{algoframecolor}{gray}
\colorlet{tabletitlecolor}{green!30}
\colorlet{tablefillcolor}{yellow!10}
\colorlet{tablefillcolor2}{yellow!10}
\colorlet{algocolor}{black}
\newtcolorbox{mycolorbox}[1]
{skin=enhanced,colbacktitle=algotitlebarcolor,coltitle=algotitlecolor,colback=algocolor,colframe=algoframecolor,boxrule=1pt,left=5mm,right=1mm,title style={top color=algotitlebarcolortop, bottom color=algotitlebarcolorbottom},interior style={top color=framecolortop, bottom color=framecolorbottom},drop shadow,title={\codesize #1}}
\newcommand{\algotopspace}{\vspace{-0.2cm}}
\newcommand{\algomiddlespace}{\vspace{0.1cm}}
\newcommand{\algobottomspace}{\vspace{-0.15cm}}
\newcommand{\algoinput}{{\bf{{Input:~}}}}
\newcommand{\algooutput}{{\\\bf{{Output:~}}}}
\newcommand{\algorequire}{{\bf{{Require:~}}}}
\newsavebox{\tablebox}
  \savebox{\tablebox}{%
  %\scalebox{0.85}{
    \begin{tabular}{#1}%
	  \rowcolor{tabletitlecolor}
      \BODY%
    \end{tabular}}
\setlist[enumerate,1]{leftmargin=\dimexpr 26pt-0.3cm}
\title{Low-Depth Parallel Algorithms for the\\
Binary-Forking Model without Atomics}
\date{\vspace{-5ex}}
\author[1]{Zafar Ahmad}
\author[1]{Rezaul Chowdhury}
\author[1]{Rathish Das}
\author[1]{Pramod Ganapathi}
\author[2]{Aaron Gregory}
\author[1]{Mohammad Mahdi Javanmard}
\affil[1]{Department of Computer Science, Stony Brook University}
\affil[2]{Department of Applied Mathematics \& Statistics, Stony Brook University}
\begin{document}
\maketitle

\begin{abstract}
The binary-forking model is a parallel computation model, formally defined by Blelloch et al. very recently, in which a thread can fork a concurrent child thread, recursively and asynchronously. The model incurs a cost of $\Theta(\log n)$ to spawn or synchronize $n$ tasks or threads. The binary-forking model realistically captures the performance of parallel algorithms implemented using modern multithreaded programming languages on multicore shared-memory machines. In contrast, the widely studied theoretical PRAM model does not consider the cost of spawning and synchronizing threads, and as a result, algorithms achieving optimal performance bounds in the PRAM model may not be optimal in the binary-forking model. Often, algorithms need to be redesigned to achieve optimal performance bounds in the binary-forking model and the non-constant synchronization cost makes the task challenging.

Though the binary-forking model allows the use of atomic {\em test-and-set} (TS) instructions to reduce some synchronization overhead, assuming the availability of such instructions puts a stronger requirement on the hardware and may limit the portability of the algorithms using them. In this paper, we avoid the use of locks and atomic instructions in our algorithms except possibly inside the join operation which is implemented by the runtime system. 

In this paper, we design efficient parallel algorithms in the binary-forking model without atomics for three fundamental problems: Strassen's (and Strassen-like) matrix multiplication (MM), comparison-based sorting, and the Fast Fourier Transform (FFT). All our results improve over known results for the corresponding problem in the binary-forking model both with and without atomics.

We present techniques for designing efficient algorithms without using locks and atomic instructions. We use extra space to prevent the work blow-up in the highly parallel asynchronous computations performed by our MM and FFT algorithms. Extra space also allows us to avoid the use of atomic TS instructions in our sorting algorithm as well as to achieve a stronger bound (i.e., with high probability) on its work. Though space plays a major role in the design of all our algorithms, our MM and FFT algorithms do not use asymptotically more space than their PRAM counterparts. We present space-adaptive algorithms for MM and sorting that achieve provably good performance bounds for any given amount of space. 

\end{abstract}
\clearpage
\section{Introduction}
We present efficient algorithms with optimal/near-optimal span\footnote{Span/depth is the running time of an algorithm with an unbounded number of processors.} for several fundamental problems in the binary-forking model without locks and atomic instructions. The binary-forking model was introduced in Blelloch et al. \cite{BlellochFiGuSu2019} (see also \cite{acar2000data, ben2016parallel, blelloch2008provably, blelloch2011scheduling, das2019data}) to accurately capture the performance of algorithms designed for modern multi-core shared-memory machines. In this model, the computation starts with a single thread, and as the computation progresses, threads are created dynamically and asynchronously; the computation finishes when all threads end. A thread can spawn/fork a concurrent asynchronous child thread while it progresses simultaneously and such forking of threads can happen recursively; hence the model is called the binary-forking model. The model also includes a ``join" operation to synchronize the threads. Though the model introduced in \cite{BlellochFiGuSu2019} allows the use of atomic instructions, we do not use them in this paper.

The binary-forking model is closely related to the well-studied PRAM model~\cite{jaja1997introduction}. The PRAM model is strictly more powerful than the binary-forking model; however, it does not correlate well with modern architectures. In the PRAM model, computation progresses in synchronous steps. Modern architectures employ new techniques such as use of multiple caches, processor pipelining, branch prediction, hyper-threading, and many more, which give rise to many asynchronous events such as cache misses, varying clock speed, interrupts, etc., thus demanding the development of a  parallel computation model where computation can proceed asynchronously. Asynchronous thread creation in the binary-forking model makes it an ideal candidate for modeling parallel computation in modern architectures. As pointed out in \cite{BlellochFiGuSu2019}, this is the model underlying many widely used parallel programming languages/environments such as Cilk~\cite{frigo1998implementation}, the Java fork-join framework~\cite{forkjoin}, Intel TBB~\cite{tbb}, and the Microsoft Task Parallel Library~\cite{micpparlib}. 

One can trivially reduce any algorithm designed for the PRAM model to an algorithm for the binary-forking model, incurring an $O(\log n)$-factor blow-up in the span while keeping the work\footnote{Work is the number of operations performed by a parallel algorithm on a serial computer.} asymptotically the same as in the PRAM model. Spawning $n$ threads takes $\Th{1}$ time and $\Th{n}$ work in the PRAM model---making the synchronization cost (span) constant. This synchronization can be simulated by using a binary tree of $\log n$ depth and $\Th{n}$ nodes in the binary-forking model. Each internal node in the binary tree corresponds to a binary-forking operation, and the $n$ leaves correspond to the $n$ spawned threads. 

A direct simulation of an optimal-span PRAM algorithm may not produce an algorithm with optimal span in the binary-forking model. For example, Cole's parallel merge sort~\cite{cole1988parallel} achieves optimal $\Th{\log n}$ span and $\Th{n \log n}$ work in the PRAM model. The binary-tree reduction increases the span to $\Th{\log^2 n}$ while keeping the work asymptotically the same.  On the other hand, by increasing work to $\Th{n^2}$, it becomes trivial to get a $\Th{\log n}$ span sorting algorithm --- each item independently computes its rank in the final sorted list in $\Th{\log n}$ time and $\Th{n}$ work by comparing itself with all $n$ elements. However, neither algorithm is optimal in the binary-forking model --- the former has non-optimal span while the latter performs non-optimal work. Cole and Ramachandran~\cite{ColeRa2017} presented a deterministic sorting algorithm with $\Oh{\log n\log\log n}$ span and optimal $\Th{n \log n}$ work in the binary-forking model. Recently, Ramachandran and Shi~\cite{ramachandran2020data} gave a data-oblivious sorting algorithm in the binary-forking model with optimal work and $\Oh{\log n \log \log n}$ span. Very recently, Blelloch et al.~\cite{BlellochFiGuSu2019} used atomic test-and-set operations to design a randomized sorting algorithm with $\Th{\log n}$ span w.h.p.\footnote{An event $\xi$ occurs with high probability (w.h.p.) in $n$ provided it occurs with probability at least $1 - \frac{c}{n^\alpha}$ for constants $\alpha \geq 1$ and $c > 0$.} in $n$ and $\Th{n \log n}$ work in expectation. Hence, finding an optimal (both in span and work) sorting algorithm without using atomic instructions remains an interesting and non-trivial open problem in the binary-forking model. We encounter the span blow-up problem when running other fundamental low-span PRAM algorithms, such as those for Strassen's matrix multiplication and Fast Fourier Transform (FFT), in the binary-forking model. Both algorithms have $\Th{\log n}$ span in the PRAM model, which blow up by a factor of $\log n$ and $\log\log n$, respectively, in the binary-forking model. 

The binary-forking model introduced in~\cite{BlellochFiGuSu2019} allows the  use of atomic \textit{test-and-set} (TS) operations. When performed on a shared memory location $L$, TS performs the following as a single undivided operation: it reads the value stored at $L$ and if the value is zero, sets $L$ to one and returns zero, otherwise, returns one without changing $L$. While a TS operation makes the binary-forking model arguably more powerful, it also puts a stronger requirement on the memory hardware. Hence, an algorithm may become more portable by avoiding the use of TS. Such an algorithm is also arguably more elegant~\cite{BlellochFiGuSu2019}. While several parallel algorithms have been designed for the binary-forking model without locks and atomic instructions \cite{frigo2012cache, CormenLeRiSt2009, ChowdhuryRa2006, ChowdhuryRa2008, blelloch2008provably, ChowdhuryRa2010, BlellochGiSi2010, blelloch2011scheduling, TithiMaMeCh2013, chowdhury2016autogen, ChowdhuryGaTsTiBaLeSoKuTa2017, ChowdhuryGaTaTi2017, JavanmardGaDaAhTsCh2019, das2019data,  TithiGaTaAgCh2015, Tithi2015, Ganapathi2016, TangYoKaTiGaCh2014}, developing algorithms with optimal/near-optimal span and work for several fundamental problems in the binary-forking model without locks and atomic instructions remain open.

Algorithms for the binary-forking model without atomics face two major challenges: 1) how to avoid the  blow-up in span (synchronization cost) without blowing up work? 2) how to avoid the use of atomic operations without asymptotically increasing span and work? Surprisingly, it turns out that using extra space, we can tackle both challenges. By extra space, we mean the space allocated from RAM (from heap memory), not from processors registers or stack memory. 

\paragraph{Our Contributions.} In this paper, we present results for three fundamental problems in the binary-forking model without atomics. Our major results include: \textbf{(1)} an optimal $O(\log n)$ span algorithm for Strassen's Matrix Multiplication (MM) with only a $\Th{\log\log n}$-factor blow-up in work as well as a near-optimal $O(\log n \log\log n)$ span algorithm with no asymptotic blow-up in work; \textbf{(2)} a randomized comparison-based sorting algorithm with optimal $O(\log n)$ span and $O(n\log n)$ work, both with w.h.p. in $n$; and \textbf{(3)} a near-optimal $O(\log n \log\log\log n)$ span algorithm for FFT with less than a $\log n$-factor blow-up in work for all practical values of $n$ (i.e., $n \le 10 ^{10,000}$). 

Though space played a major role in the design of all our algorithms in this paper, our algorithms for Strassen's matrix multiplication and FFT do not use asymptotically more space than their standard PRAM counterparts. We present a space-adaptive algorithm for Strassen's MM which always achieves a span within $\Th{\log n}$ factor of optimal for any given amount of space.

We list our major results in Table~\ref{tab:summary}.

\paragraph{Major Techniques.} The extra $\log{n}$ factor in the span of the standard parallelization of Strassen's MM algorithm in the binary-forking model arises from the fact that it spends $\Th{\log{{\frac{n}{2^i}}}}$ time (synchronization cost) computing intermediate results at recursion level $i$ for each $i \in [0, \log_{2}{n}]$ which requires only $\Oh{1}$ time in the PRAM model. We observe that none of those intermediate matrices need to be explicitly computed or stored to compute the final output. Indeed, each cell in the final output matrix can be computed directly in $\Oh{\log{n}}$ time from the two original input matrices of the algorithm. This \textbf{\textit{single-point computation method}} can be used to compute all the cells in the output matrix simultaneously in $\Th{\log n}$ span. However, this approach blows up the work performed by the algorithm by up to a $\Th{n^2}$  factor because the approach does not reuse intermediate results. We avoid this work blow-up by computing and temporarily storing the intermediate results at $\Th{\log\log{n}}$ carefully `chosen levels' of recursion, which eliminates the need for implicitly recomputing the intermediate results over and over again. So, all single-point computations proceed in stages where a stage includes all levels of recursion between two consecutive `chosen levels,' and synchronizations happen only at stage boundaries with all threads executing asynchronously within every stage. We show that this \highlight{stage-based approach} reduces the work blow-up from $\Th{n^2}$ factor to only $\Th{\log\log{n}}$ factor while achieving the optimal $\Th{\log_{2}{n}}$ span.
A similar approach works for FFT.

The technique described above works for all Strassen-like algorithms, including Victor Pan's $\Oh{n^{2.795}}$ work algorithm \cite{Pan78}. We remark that while we use additional techniques specific to the problems to achieve better work and span bounds, the main contribution is devising the general technique to enable limited work-sharing among the single-point computations using extra space.

Using extra space helps to avoid atomic TS operations, too. Blelloch et al. \cite{BlellochFiGuSu2019} achieve optimal span (with high probability) and optimal work (in expectation) for sorting, semisorting, and random permutation using TS. They use TS to distribute a set of $n$ items into $m = o(n)$ buckets where each item knows its destination bucket, but how many items will fall in a bucket is unknown. They reduce the problem to a variant of a balls and bins problem. In particular, when $n$ items try to find unoccupied cells randomly among $c\cdot n$ cells ($c$ is a constant) in parallel, it is enough for each item to try $\Th{\log n}$ times to find an unoccupied cell with high probability; the span of this process is thus $\Th{\log n}$. An item tries to put itself in a random cell using a TS on a flag to reserve it. If the TS fails, it tries again since the cell is already taken. In the absence of TS, it would take $\Th{\log n}$ synchronization time to figure out the items that fail to find a cell after each attempt, thus making the overall span $\Th{\log ^2 n}$ for the $\Th{\log n}$ attempts. Our approach can avoid this $\log n$ synchronization steps by increasing the space by a factor of $\log n$ and by allowing each item to simultaneously attempt to place itself in $\log n$ random cells. We assume arbitrary concurrent writes meaning that if multiple concurrent threads try to write to any given shared memory location simultaneously only one arbitrary thread succeeds. This approach does not increase span and work.

While Blelloch et al.'s~\cite{BlellochFiGuSu2019} sorting algorithm performs $\Th{n\log n}$ work in expectation, we achieve the same bound w.h.p. in $n$. The success probability of the distribution step in a recursive call of Blelloch et al.'s algorithm  is dependent on the size of the input to that recursive call. Since input size decreases doubly exponentially with the increase of recursion level, though the success probability is high in the corresponding input size, it reduces rapidly as execution moves deeper in the recursion tree and does not remain high w.r.t. the original input size $n_0$. 

To achieve a $\Th{n\log n}$ work bound w.h.p. in $n$ (along with a $\Th{\log{n}}$ span also w.h.p. in $n$) our {\MainSort} algorithm proceeds in two phases --- a recursive {\AlmostSort} phase and a non-recursive leftover integration phase. Extra space is used throughout {\MainSort} to simulate TS operations as described in the previous paragraph. Given an input of size $n_0$, {\AlmostSort} sorts $n_0 - o(n_0)$ items of the input, then the remaining $o(n_0)$ items are merged with the sorted $n_0 - o(n_0)$ items in the integration step. {\AlmostSort} is a recursive bucketing algorithm with some similarity to Blelloch et al.'s algorithm. However, the bucket size in each of its recursive calls is an $r(n_0)$ factor larger than the ones used in Blelloch et .al.'s paper, where $r(n_0) = \Th{\frac{\log n_0 \log \log \log n_0}{\log \log n_0}}$, and recursion in {\AlmostSort} is terminated much sooner than in Blelloch et .al.'s algorithm, so that throughout {\AlmostSort} partitioning succeeds w.h.p. in $n_0$. Each item tries to put itself into its destination bucket twice and fails with probability $1/(r(n_0))^2$. We set aside the failed items to be incorporated into the final sorted sequence during the integration step later and move to the next level of recursion without them. \hide{By leaving behind the failed items we make sure that the success probability of a distribution step is not tied to the size of the input given to the recursive call executing it.} By allowing some items to remain unsorted, we make the work done by each recursive step effectively independent of the probability that single items are successfully written to their chosen location. We show that with high probability in $n_0$, at most $\Th{n_0/(r(n_0))^2}$ items fail to move from any level of recursion to the next level. By switching to Cole-Ramachandran's deterministic sorting algorithm~\cite{ColeRa2017} after $\log\log\log n_0$ levels of recursion, we ensure that {\AlmostSort} performs $O(n_0 \log n_0)$ work w.h.p. in $n_0$. The integration phase then combines the $\Oh{n_0 \log\log\log{n_0} /(r(n_0))^2}$ (w.h.p. in $n_0$) leftover items with the already sorted sequence in $O(n_0 \log n_0)$ work w.h.p. in $n_0$.

\begin{table*}
\centering
\scalebox{0.9}{
{ 
\begin{colortabular}{ | l | l | l | l | l | l |}
\hline                       
%& & \multicolumn{3}{c|}{\cellcolor{tabletitlecolor} \idp{} } & \multicolumn{3}{c|}{\cellcolor{tabletitlecolor} \rdp{} } \\ \cline{3-6}
%\rowcolor{tabletitlecolor} & Work & Serial cache & Span & Parallelism \\  
%\rowcolor{tabletitlecolor} Problem & $(T_1)$ & comp. $(Q_1)$ & $(T_{\infty})$ & $(T_1 / T_{\infty})$ \\  \hline

Algorithm & Work $(T_1)$ & Space $(S_{\infty})$ & Span $(T_{\infty})$ & Result \\\hline
\rowcolor{lightred} \multicolumn{5}{|l|}{Strassen's Matrix Multiplication}\\ \hline

%AS & $\Th{n}$ & $\Th{1}$ & $\Th{n}$ & $\Th{ n/B + 1 }$ \\
%AS-HD & $\Th{n}$ & $\Th{r}$ & $\Th{n/r \times \log r}$ & $\Th{ n/B + n/r }$ \\
%AS-HS & $\Th{n}$ & $\Th{r}$ & $\Th{n/r + \log r}$ & $\Th{ n/B + r }$ \\
%AS-N & $\Th{n}$ & $\Th{n}$ & $\Th{\log n}$ & $\Th{ n/B + 1 }$ \\ \hline

Strassen's MM \cite{Strassen1969, CormenLeRiSt2009} & $\Oh{n^w}$ & $\Th{ n^w }$ & $\Oh{\log^2 n}$ &  \\

\faThumbsOUp{} \textsc{Strassen-S}  & $\Oh{n^w \log \log n}$ & $\Th{ n^w }$ & $\Oh{\log n}$ & Th. \ref{thm:strassen-s}
 \\
\faThumbsOUp{} \textsc{Strassen-W}  & $\Oh{n^w}$ & $\Th{ n^w /\log \log n }$ & $\Oh{\log n \log \log \log n}$ & Th. \ref{thm:strassen-w}\\
\faThumbsOUp{} \textsc{Strassen-S-Adaptive}  & $\Oh{n^w \log \log n}$ & $\Th{ s }$ & $\Oh{(n^w/s) \log n}$ & Th. \ref{thm:strassen-s-tunable}\\
\faThumbsOUp{} \textsc{Strassen-W-Adaptive}  & $\Oh{n^w}$ & $\Th{ s }$ & $\Oh{(n^w/s) \log^2 n}$ & Th. \ref{thm:strassen-w-tunable} \\

\hline
\rowcolor{lightred} \multicolumn{5}{|l|}{Sorting}\\ \hline
%\textsc{Sort-naive} & $\Oh{n \log n}$ & $\Th{ n \log n }$ & $\Oh{\log^2 n}$ &  \\
\textsc{Cole-Ramachandran} \cite{ColeRa2017} & $\Oh{n \log n}$ & $\Th{ n }$ & $\Oh{\log n \log \log n}$ &  \\
Blelloch et al. (atomic) \cite{BlellochFiGuSu2019} & $\Oh{n \log n}$ exp. & $\Th{ n }$ & $\Oh{\log n}$ whp &  \\
%\textsc{Sort-$n^{\epsilon}$way} & $\Oh{(1/\epsilon) n^{1 + \epsilon}}$ & $\Th{ (1/\epsilon) n }$ & $\Oh{(1/\epsilon + \epsilon) \log n}$ & \faThumbsOUp{} \\
 \faThumbsOUp{} {\MainSort}  & $\Oh{n \log n}$ whp & $\Th{ \frac{n \log n \log \log \log n}{\log \log n}}$ & $\Oh{\log n} $ whp & Th. \ref{thm:sorting-main-sort} \\
\faThumbsOUp{} {\TunableSort}  & $\Oh{n \log n} $ whp & $\Th{ s }$ & $\Oh{(n/s)\log^2 n} $ whp & Th. \ref{thm:sorting-tunable-sort} \\

\hline
\rowcolor{lightred} \multicolumn{5}{|l|}{Fast Fourier Transform}\\ \hline
%\textsc{FFT-2way} & $\Oh{n \log n}$ & $\Th{ n }$ & $\Oh{\log^2 n}$ &  \\
Cooley-Tukey $\sqrt{n}$-way \cite{CooleyTu65} & $\Oh{n \log n}$ & $\Th{ n }$ & $\Oh{\log n \log \log n}$ &  \\
%\textsc{FFT-opt} & $\Oh{(1/\phi) n^{1 + \phi}}$ & $\Th{(1/\phi) n  }$ & $\Oh{(1/\phi) \log n}$ & \faThumbsOUp{} \\
\faThumbsOUp{} \textsc{FFT} & $\Oh{n \log^{g(n)} n}$ & $\Th{ n }$ & $\Oh{\log n \log \log \log n}$ & Cor. \ref{cor:fft}\\

\hline

\end{colortabular}
}
}
\caption{{\small The complexity analyses of our new algorithms (marked with \faThumbsOUp{}) in the binary-forking model without atomics. Here, $n=$ problem size, $s=$ space $\in [\text{input size}, \text{work}]$, $w = \log_2 7$, exp. $ = $ expected time, whp $ = $ with high probability in $n$, $g(n) < 2$ for $n < 10^{10,000}$. \hide{The running time is $T_p = \Oh{T_1 / p + T_{\infty}}$ when we use a randomized work-stealing scheduler on a $p$-processor parallel machine with private caches.}\hide{$M=$ cache size, $B=$ block size, $f = (rn^2/M)^{1/3}$,  , $r=$\#copies of the output matrix/tensor, For tensor contraction, $X_{u \times v} = U_{u \times x} V_{x \times v}$, $z = u + v + x$.}
}}
\vspace{-0.3cm}
\label{tab:summary}
\end{table*}  

\paragraph{Binary-Forking Model.} Binary-forking model captures the current multi-core shared-memory systems. Many parallel algorithms are based on binary-forking model~\cite{acar2000data, ben2016parallel, blelloch2008provably, blelloch2011scheduling, das2019data}. Computations in the binary-forking model can be viewed as a series-parallel DAG where each node represents a thread's instruction. The root of the tree is the first instruction of the starting thread. Each node has at most two children. If node $u$ denotes the $i$-th instruction of thread $t$ and $u$ has only one child $v$, then $v$ denotes the $(i+1)$-th instruction of thread $t$. If node $u$ has two children $v$ and $w$, then $v$ represents the $(i+1)$-th instruction of thread $t$ and $w$ represents the first instruction of the new forked thread $t^{'}$. The binary-forking model includes ``join'' instructions to join the forking threads. They are modeled as a node with two incoming edges. The work of the computation is the number of nodes in the series-parallel DAG and the span of the computation is the length of the longest path in the DAG assuming unbounded resources such as processors and space.

\paragraph{Performance Metrics of a Parallel Program.} We use the \textit{work-span} model \cite{CormenLeRiSt2009} to analyze the performance of parallel programs executed on shared-memory multicore machines. The \textit{work} of a multithreaded program, denoted by $T_1(n)$, where $n$ is the input parameter, is defined as the total number of CPU operations it performs when executed on a single processor. The \textit{span} $T_{\infty}(n)$ of a program which is also known as its \textit{critical-path length} or \textit{depth}, is the maximum number of operations performed on any single processor when the program is run on an unbounded number of processors. The \textit{parallel running time} $T_p(n)$ of a program when run on $p$ processors under a greedy scheduler is given by $T_p(n) = \Oh{T_1(n) / p + T_{\infty}(n)}$. The \textit{parallelism}, computed by the ratio of $T_1(n)$ and $T_{\infty}(n)$, is the average amount of work performed by the program in each step of its critical path.

\hide{
Several parallel algorithms have been designed for the binary-forking model without locks and atomic instructions \cite{frigo2012cache, CormenLeRiSt2009, ChowdhuryRa2006, ChowdhuryRa2008, blelloch2008provably, ChowdhuryRa2010, BlellochGiSi2010, blelloch2011scheduling, TithiMaMeCh2013, chowdhury2016autogen, ChowdhuryGaTsTiBaLeSoKuTa2017, ChowdhuryGaTaTi2017, JavanmardGaDaAhTsCh2019, das2019data,  TithiGaTaAgCh2015, Tithi2015, Ganapathi2016, TangYoKaTiGaCh2014}. 
}

\section{Strassen's Matrix Multiplication}
\label{sec:strassen}

Suppose $w = \log_2 7$. Strassen's matrix multiplication (MM) algorithm \cite{Strassen1969} performs $\Oh{n^{w}}$ work (i.e., multiplications and additions), unlike the classic MM algorithm that performs $\Oh{n^3}$ work. A straightforward parallelization of Strassen's MM leads to $\Th{\log^2 n}$ span. Our goal is to design a parallel Strassen's MM in the binary-forking model without using locks and atomic instructions to achieve an optimal span of $\Oh{\log n}$ without affecting the work bound of $\Th{n^{w}}$. 

In this paper, we present parallel Strassen MM algorithms $(i)$ having optimal $\Oh{\log n}$ span and $\Oh{n^{w} \log \log n}$ work, i.e., work very close to that of the standard Strassen's MM; and $(ii)$ having $\Oh{n^{w}}$ work and $\Oh{\log n \log \log \log n}$ span, i.e., very close to optimal span.

The core ideas and techniques used in our algorithms are as follows. We first perform \highlight{single-point computation}, i.e., computation of a single cell of the output matrix independently from that of other cells/entries. This implies that all cells of the output matrix are computed independently in $\Oh{\log n}$ span. However, as there is no work-sharing across multiple threads, the total work blows up to $\Oh{n^{w + 2}}$. We enable partial work-sharing across threads by saving intermediate computations at carefully selected levels of recursion. By splitting the recursion tree into \highlight{stages} and allowing work-sharing across stages, we are able to reduce the work to very close to $\Oh{n^w}$. Hence, by using single-point computations in stages, we are able to obtain good work and span bounds. We use this algorithm to design other parallel Strassen's MM algorithms with different advantages.

\hide{We first describe the $k$-way Strassen algorithm, a simple generic parallel Strassen algorithm with branching factor at most $k$. We then present our parallel Strassen algorithm performing work that matches that of the serial Strassen algorithm and achieves optimal span. We use this algorithm to design other parallel Strassen's MM algorithms with different advantages.}

\paragraph{$k$-way Strassen's MM \cite{Strassen1969,CormenLeRiSt2009}.} The $k$-way Strassen's MM, for $k \in [1, 7]$, executes the child nodes in exactly $\ceil{7/k}$ parallel steps without executing more than $k$ child nodes at a time.

\begin{lemma}[\cite{Strassen1969,CormenLeRiSt2009}]
\label{thm:7way}
The $k$-way Strassen's MM has a complexity of $\Oh{n^w}$ work, $\Oh{\log^2 n}$ span if $k = 7$, $\Oh{n^{\log_2 \ceil{7/k}}}$ span if $k \ne 7$, $\Oh{n^2 \log n}$ space if $k = 4$, and $\Oh{n^{\max{(2, \log_2 k)}}}$ space if $k \ne 4$.
\hide{\textsc{Strassen-1way} does $\Oh{n^w}$ work, has $\Oh{n^w}$ span, and uses $\Oh{n^2}$ space. \textsc{Strassen-7way} does $\Oh{n^w}$ work, has $\Oh{\log^2 n}$ span, and uses $\Oh{n^w}$ space.} 
\end{lemma}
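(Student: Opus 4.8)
The plan is to treat $k$-way Strassen as the ordinary Strassen recursion — block each $n\times n$ input into four $n/2\times n/2$ blocks, form an $\Oh{1}$ number of block sums/differences, recursively compute the seven products $M_1,\dots,M_7$ of size $n/2$, and assemble the four output blocks by $\Oh{1}$ block add/subtracts — with the single change that the seven recursive products are issued in $\ceil{7/k}$ rounds of at most $k$ concurrent calls. I would first record two facts used repeatedly: an entry-wise add/subtract of two $m\times m$ matrices costs $\Th{m^2}$ work and $\Oh{\log m}$ span in the binary-forking model (fork a balanced binary tree with one leaf per output cell), and forking $k\le 7$ children costs $\Oh{1}$ span. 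Then I would set up and solve the three recurrences, with base cases $T_1(\Oh{1})=T_\infty(\Oh{1})=S_\infty(\Oh{1})=\Oh{1}$.

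For \textbf{work}, the schedule does not change the number of operations, so $T_1(n)=7\,T_1(n/2)+\Th{n^2}$ (the $\Th{n^2}$ being the block arithmetic at this level), and since $\log_2 7 = w > 2$ the master theorem gives $T_1(n)=\Th{n^{w}}$, independent of $k$. For \textbf{span}, each of the $\ceil{7/k}$ rounds forks $\le k$ recursive calls (cost $\Oh{\log k}=\Oh{1}$) and the block arithmetic at this level costs $\Oh{\log n}$, so $T_\infty(n)=\ceil{7/k}\cdot T_\infty(n/2)+\Oh{\log n}$. If $k=7$ the coefficient is $1$ and telescoping gives $T_\infty(n)=\Oh{\sum_{i=0}^{\log_2 n}\log(n/2^i)}=\Oh{\log^2 n}$; if $k\ne 7$ then $c:=\ceil{7/k}\ge 2$, and since $\log n=\oh{n^{\log_2 c}}$ the master theorem gives $T_\infty(n)=\Th{n^{\log_2\ceil{7/k}}}$.

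For \textbf{space} (computed with unbounded processors), the peak space at a size-$n$ node is $\Th{n^2}$ for the $\Oh{1}$ block sums plus the seven $n/2\times n/2$ product matrices — a completed round leaves behind only these $\Th{n^2}$ cells, its internal temporaries having been freed — plus the live temporaries of the recursive calls running concurrently, of which there are at most $k$ at once (one round's worth). Hence $S_\infty(n)=k\cdot S_\infty(n/2)+\Th{n^2}$, and the master theorem yields $\Th{n^2}$ for $k<4$, $\Th{n^2\log n}$ for $k=4$, and $\Th{n^{\log_2 k}}$ for $k>4$; i.e.\ $\Oh{n^2\log n}$ when $k=4$ and $\Oh{n^{\max(2,\log_2 k)}}$ otherwise.

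The step I expect to need the most care is the space bound: I must argue that issuing the seven products in rounds never makes more than $k$ recursive subtrees live simultaneously, so that each of the $\ceil{7/k}-1$ earlier rounds contributes only its $\Th{n^2}$ worth of product cells rather than a full $S_\infty(n/2)$ each — this is exactly what makes the recurrence coefficient $k$ (not $7$) and produces the $k=4$ threshold between $\Th{n^2}$ and $\Th{n^{\log_2 k}}$. Everything else is a direct application of the master theorem.
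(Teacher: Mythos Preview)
Your proposal is correct and takes essentially the same approach as the paper: you set up exactly the three recurrences the paper states ($T_1(n)=7T_1(n/2)+\Oh{n^2}$, $T_\infty(n)=\ceil{7/k}T_\infty(n/2)+\Oh{\log n}$, $S_\infty(n)=kS_\infty(n/2)+\Oh{n^2}$) and solve them by the master theorem with the same case splits. If anything, your write-up is more careful than the paper's own proof, which simply states the recurrences and asserts the solutions without your justification for the space coefficient being $k$ rather than $7$.
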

\begin{proof}
The work, span, and extra space recurrences for the $k$-way Strassen's MM are as follows. If $n = 1$, then $T_{1}(n) = \Oh{1}$ and $T_{\infty}(n) = \Oh{1}$. If $n > 1$, then
\begin{align*}
&T_{1}(n) = 7 T_{1}(n/2) + \Oh{n^2}, &&T_{\infty}(n) =  \ceil{7/k} T_{\infty}(n/2) + \Oh{\log n}, &&&S_{\infty}(n) = k S_{\infty}(n/2) + \Oh{n^2}.
\end{align*}

Solving these recurrences, we have the lemma.
\hide{\textcolor{red}{from meeting:
$$S_\infty = \begin{cases}
\Oh{n^2} & \text{if } k < 4 \\
\Oh{n^2 \log n} & \text{if } k = 4 \\
\Oh{n^{\log_2 k}} & \text{if } k > 4
\end{cases},$$
$$T_\infty = \begin{cases}
\Oh{\log^2 n} & \text{if } k = 7 \\
\Oh{n^{\log_2 \ceil{7/k}}} & \text{else}
\end{cases}$$
$$T_1(n) = S_\infty(n) \biggr|_{k=7} = \Oh{n^{\log_2 k}}$$
}
}
\end{proof}

The work of the $k$-way Strassen's MM for any value of $k$ is $\Oh{n^{\log_2 7}}$. The $k$-way algorithm gives a trade-off between span and space. When $k = 1$, we get the standard Strassen's algorithm \cite{Strassen1969}. When $k = 7$, we get the standard parallel Strassen's MM \cite{CormenLeRiSt2009} that spawns all the child nodes in parallel achieving $\mathcal{O}(\log^2 n)$ span and occupying $\Oh{n^{\log_2 7}}$ space. 

\begin{figure*}[!ht]
\centering
\scalebox{0.94}{
\begin{minipage}{\textwidth}
\begin{mycolorbox}{$\textsc{Strassen-S}(X, U, V)$ \xcomment $X \gets U \times V$}
\begin{minipage}{0.99\textwidth}
{\codesize
\algotopspace{}
\noindent
\begin{enumerate}
\setlength{\itemindent}{-1.5em}

\vsitem ($\overline{U}, \overline{V}, \overline{X}$, $U$quad, $V$quad, $X$branch) $\gets$ \textsc{Construct-Data-Structures}($U, V$)
\vsitem \textsc{Compute-Input-Matrices}($\overline{U}, 0, 0$, $U$quad);  \textsc{Compute-Input-Matrices}($\overline{V}, 0, 0$, $V$quad)
\vsitem \textsc{Compute-Output-Matrices}($\overline{X}, 0, 0$, $X$branch)
\vsitem $X \gets \overline{X}[0][0]$ 

\algobottomspace{}
\end{enumerate}
}
\end{minipage}
\end{mycolorbox}
\vgap{}
\begin{mycolorbox}{$\textsc{Compute-Input-Matrices}(Z, \text{stage\_id, root\_id, quad})$}
\begin{minipage}{0.99\textwidth}
{\codesize
\algotopspace{}
\noindent
\begin{enumerate}
\setlength{\itemindent}{-1.5em}

\vsitem height $\gets$ \#levels in the stage; \#leaves $\gets 7^{\text{height}}$ 
\vsitem \xparallelfor node $\gets 0$ \xto \#leaves $-1$ \xdo \label{line:leafnodeslaunching}
\vsitem \T leaf\_id $\gets$ (root\_id $- 1$) $\times$ \#leaves $+ $ node 
\vsitem \T \xparallelfor $i \gets 0$ \xto $n-1$ \xdo \label{line:spawninputrows}
\vsitem \T \T \xparallelfor $j \gets 0$ \xto $n-1$ \xdo \label{line:spawninputcols}
\vsitem \T \T \T $Z$[stage\_id][leaf\_id][$i, j$] $\gets$  \textsc{Compute-Input-Cell}($Z$, stage\_id, leaf\_id, $i, j, n$, height, quad)
\vsitem \T \xif not last stage \xthen \textsc{Compute-Input-Matrices}($Z$, stage\_id + 1, leaf\_id, quad)
		
\algobottomspace{}
\end{enumerate}
}
\end{minipage}
\end{mycolorbox}
\vgap{}
\begin{mycolorbox}{$\textsc{Compute-Input-Cell}(Z, \text{stage\_id, node\_id, $i, j, n$, height, quad})$}
\begin{minipage}{0.99\textwidth}
{\codesize
\algotopspace{}
\noindent
\begin{enumerate}
\setlength{\itemindent}{-1.5em}

\vsitem \xif height $= 0$ \xthen \xreturn $Z$[stage\_id][node\_id][$i, j$]
\vsitem parent\_id $\gets$ (node\_id $/$ 7), branch\_id $\gets$ node\_id mod 7
\vsitem \xparallelfor $k \gets 0$ \xto $1$ \xdo
\vsitem \T $t[k] \gets 0$; coeff $\gets$ quad[branch\_id][$k$].coeff
\vsitem \T \xif coeff $\ne 0$ \xthen
\vsitem \T \T new\_$i \gets n \times $ quad[branch\_id][$k$].shift\_$i + i$; new\_$j \gets n \times $ quad[branch\_id][$k$].shift\_$j + j$
\vsitem \T \T $t[k] \gets$ coeff $\times$ \textsc{Compute-Input-Cell}($Z$, stage\_id, parent\_id, new\_$i$, new\_$j$, $2n$, height $-1$, quad)

\vsitem \xreturn $t[0] + t[1]$

\algobottomspace{}
\end{enumerate}
}
\end{minipage}
\end{mycolorbox}
\vgap{}
\begin{mycolorbox}{$\textsc{Compute-Output-Matrices}(Z, \text{stage\_id, root\_id, branch})$}
\begin{minipage}{0.99\textwidth}
{\codesize
\algotopspace{}
\noindent
\begin{enumerate}
\setlength{\itemindent}{-1.5em}

\vsitem height $\gets$ \#levels in the stage; \#leaves $\gets 7^{\text{height}}$ 

\vsitem \xparallelfor node $\gets 0$ \xto \#leaves $-1$ \xdo
\vsitem \T leaf\_id $\gets$ (root\_id $- 1$) $\times$ \#leaves $+ $ node 
\vsitem \T \xif last stage \xthen $Z$[stage\_id][leaf\_id][0,0] $\gets$ $U$[stage\_id][leaf\_id][0,0] $\times$ $V$[stage\_id][leaf\_id][0,0]
\vsitem \T \xelse \textsc{Compute-Output-Matrices}$(Z, \text{stage\_id + 1, leaf\_id, branch})$

\vsitem \xparallelfor $i \gets 0$ \xto $n-1$ \xdo \label{line:spawnoutputrows}
\vsitem \T \xparallelfor $j \gets 0$ \xto $n-1$ \xdo \label{line:spawnoutputcols}
\vsitem \T \T  \textsc{Compute-Output-Cell}($Z$, stage\_id, root\_id, $i, j, n$, height, branch)

\algobottomspace{}
\end{enumerate}
}
\end{minipage}
\end{mycolorbox}
\vgap{}
\begin{mycolorbox}{$\textsc{Compute-Output-Cell}(Z, \text{stage\_id, node\_id, $i, j, n$, height, branch})$}
\begin{minipage}{0.99\textwidth}
{\codesize
\algotopspace{}
\noindent
\begin{enumerate}
\setlength{\itemindent}{-1.5em}

\vsitem \xif height $= 0$ \xthen \xreturn $Z$[stage\_id + 1][node\_id][$i, j$]
\vsitem shift\_$i \gets [i > n/2]$; shift\_$j \gets [j > n/2]$ \xcomment{[ ] is the Iversion bracket}
\vsitem quad\_id $\gets 2 $ shift\_$i + $ shift\_$j$; new\_$i \gets i - (n/2) $ shift\_$i$; new\_$j \gets j - (n/2) $ shift\_$j$

\vsitem \xparallelfor $k \gets 0$ \xto $6$ \xdo
\vsitem \T $t[k] \gets 0$; coeff $\gets$ branch[quad\_id][$k$]
\vsitem \T \xif coeff $\ne 0$ \xthen
\vsitem \T \T child\_id $\gets$ (node\_id $- 1$) $\times 7 + k$
\vsitem \T \T $t[k] \gets$ coeff $\times$ \textsc{Compute-Output-Cell}($Z$, stage\_id, child\_id, new\_$i$, new\_$j$, $n/2$, height $- 1$, branch)
\vsitem \xreturn $\sum_{k = 0}^{6} t[k]$

\algobottomspace{}
\end{enumerate}
}
\end{minipage}
\end{mycolorbox}
\end{minipage}
}
\vgap{}
\caption{The \textsc{Strassen-S} MM algorithm.}
\label{fig:strassen-algorithm}
\end{figure*}

\begin{figure*}[!ht]
\centering
\begin{minipage}{\textwidth}
{\scriptsize
\centering
\begin{minipage}{0.6\textwidth}
% Uquad
\centering
\begin{colortabular}{c | c c}
 & $k = 0$ & $k = 1$\\
\rowcolor{tabletitlecolor} \rotatebox{90}{branch\_id} & $\langle$ \rotatebox{90}{shift\_i}, \rotatebox{90}{shift\_j}, \rotatebox{90}{coeff} $\rangle$ & $\langle$ \rotatebox{90}{shift\_i}, \rotatebox{90}{shift\_j}, \rotatebox{90}{coeff} $\rangle$ \\ \hline
$0$ & $\langle 0, 0, 1 \rangle$ & $\langle 1, 1, 1 \rangle$ \\
$1$ & $\langle 1,0,1 \rangle$ & $\langle 1,1,1 \rangle$ \\
$2$ & $\langle 0,0,1 \rangle$ & $\langle -,-,0 \rangle$ \\
$3$ & $\langle 1,1,1 \rangle$ & $\langle -,-,0 \rangle$ \\
$4$ & $\langle 0,0,1 \rangle$ & $\langle 0,1,1 \rangle$ \\
$5$ & $\langle 1,0,1 \rangle$ & $\langle 0,0,-1 \rangle$ \\
$6$ & $\langle 0,1,1 \rangle$ & $\langle 1,1,-1 \rangle$ \\
\end{colortabular}
%Vquad
\begin{colortabular}{c|cc}
 & $k = 0$ & $k = 1$\\
\rowcolor{tabletitlecolor} \rotatebox{90}{branch\_id} & $\langle$ \rotatebox{90}{shift\_i}, \rotatebox{90}{shift\_j}, \rotatebox{90}{coeff} $\rangle$ & $\langle$ \rotatebox{90}{shift\_i}, \rotatebox{90}{shift\_j}, \rotatebox{90}{coeff} $\rangle$ \\ \hline
$0$ & $\langle 0,0,1 \rangle$ & $\langle 1,1,1 \rangle$ \\
$1$ & $\langle 0,0,1 \rangle$ & $\langle -,-,0 \rangle$ \\
$2$ & $\langle 0,1,1 \rangle$ & $\langle 1,1,-1 \rangle$ \\
$3$ & $\langle 1,0,1 \rangle$ & $\langle 0,0,-1 \rangle$ \\
$4$ & $\langle 1,1,1 \rangle$ & $\langle -,-,0 \rangle$ \\
$5$ & $\langle 0,0,1 \rangle$ & $\langle 0,1,1 \rangle$ \\
$6$ & $\langle 1,0,1 \rangle$ & $\langle 1,1,1 \rangle$ \\
\end{colortabular}
\end{minipage}
\begin{minipage}{0.4\textwidth}
% Xbranch
\centering
\begin{colortabular}{c|ccccccc|}
& \multicolumn{7}{c|}{$k$} \\
\rowcolor{tabletitlecolor}  \rotatebox{90}{quad\_id} & 0 & 1 & 2 & 3 & 4 & 5 & 6 \\ \hline
$0$ & $1$ & $0$ & $0$ & $1$ & $-1$ & $0$ & $1$\\
$1$ & $0$ & $0$ & $1$ & $0$ & $1$ & $0$ & $0$\\
$2$ & $0$ & $1$ & $0$ & $1$ & $0$ & $0$ & $0$\\
$3$ & $1$ & $-1$ & $1$ & $0$ & $0$ & $1$ & $0$\\
\end{colortabular}
\end{minipage}
}
\end{minipage}
\vgap{}
\caption{Data structures required for the \textsc{Strassen-S} MM algorithm. Left: $Uquad$ and $Vquad$. Right: $Xbranch$.}
\label{fig:strassen-algorithm-ds}
\end{figure*}

\paragraph{\textsc{Strassen-S} MM.} In this section, we present a parallel Strassen's MM algorithm, as shown in Figure \ref{fig:strassen-algorithm}, that achieves the optimal span of $\Oh{\log n}$ with only a $\Oh{\log \log n}$ factor increase in the work compared with the classical sequential Strassen's MM algorithm. In this algorithm, we multiply two matrices $U$ and $V$ and store the matrix product in $X$. We first construct the required data structures as shown in Figure \ref{fig:strassen-algorithm-ds}. We then compute the input matrices ($U$ and $V$) at all nodes in the recursion tree in parallel in $\Oh{\log n}$ span. Finally, we compute the output matrix ($X$) at all nodes in the recursion tree in $\Oh{\log n}$ span. 

\hide{
Before we present our algorithm, let's look at the data layout of our algorithm. Figure \ref{fig:strassen-algorithm} shows our algorithm.

In this step, we allocate space for auxiliary matrices for all nodes in the recursion tree. Suppose that the levels in the recursion tree are $0, 1, \ldots, (\ell = \log n - 1)$. We first create an $\ell$-sized array such that its $i$th entry points to a $7^i$-sized index array, as shown in Figure \ref{fig:indexing}. The indices in this index array is stored in the base-7 system. The $j$th index in the $i$th index array represents the $j$th node at the $i$th level in the entire recursion tree as per the BFS ordering. The index further points to the auxiliary matrices ($U$, $V$, and $X$) for the input and output at that particular node. In this way, every node in the recursion tree can be indexed in parallel in $\Oh{\log n}$ span.

\begin{figure}[!ht]
\centering
\includegraphics[width=0.3\textwidth]{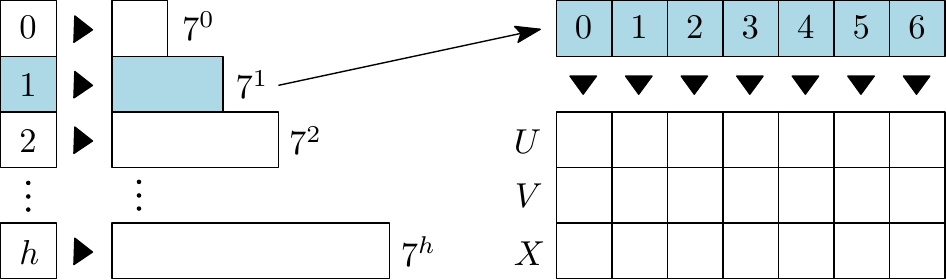}
\vgap{}
\caption{\small Data layout for the indices and the auxiliary matrices for all nodes in the recursion tree. Here, $\ell = \log n$.}
\label{fig:indexing}
\vgap{}
\end{figure}
}

\vspace{0.2cm}
\noindent
\textbf{[Step 1. Compute the Input Matrices.]} Consider the standard 7-way parallel Strassen's MM. The height of the recursion tree is $\Oh{\log n}$ and in each level, the total cost of forking and synchronizing threads to compute the input matrices is $\Oh{\log n}$. Hence, the total span for computing input matrices at all nodes in the recursion tree is $\Oh{\log^2 n}$. We can reduce the span to $\Oh{\log n}$ using single-point computation.

A cell of an input matrix ($U$ or $V$) at a node of the recursion depends on at most two cells of the corresponding input matrix at its parent node. This implies that each cell in an input matrix at a leaf node depends on at most $2^{\log n} = n$ cells in the corresponding input matrix at the root node. If we were to compute all input cells of all input matrices at all nodes, the total work would explode to $\Oh{n^w \times n} = \Oh{n^{w+1}}$. To keep the work very close to $\Oh{n^w}$, we split the entire recursion tree into stages. We then use single-point computation of input cells in stages.

For this algorithm, we have $\Oh{\log \log n}$ stages so that the work performed in each stage is $\Oh{n^w}$. Using single-point computation in each stage, we are able to achieve the desired optimal span of $\Oh{\log n}$ limiting the total work to $\Oh{n^w \log \log n}$.

In this step, we compute the input matrices of all nodes in the recursion tree. The step consists of $h+1$ sequential stages: $0, 1, \ldots, h$, as shown in Figure \ref{fig:strassen-stages}, such that the height of stage $i$ is fixed at $c_i \log n$, where $h$ and $c_i$ are given below:
\begin{align}
&c_i =  \begin{cases}
0 & \text{if } i = - 1,\\
1-\alpha^{i + 1} & \text{if } i \in [0, h - 1],\\
1 & \text{if } i = h.
  \end{cases} \text{ such that } w = \log_2 7, \alpha = \frac{1}{w-1}, \text{ and } h = \log_{w - 1} \frac{\log n}{\log \log \log n}. \label{eq:stages-bounds}
\end{align}

\begin{figure}[!ht]
\centering
\includegraphics[width=0.7\textwidth]{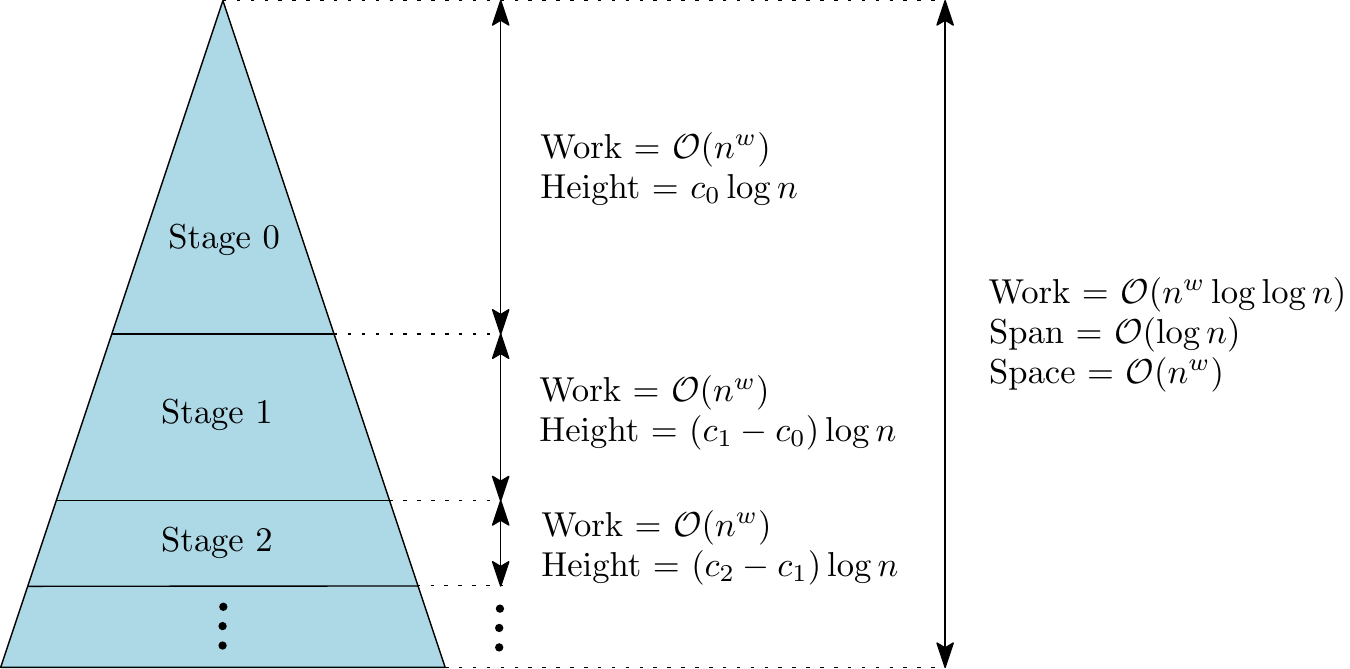}
\caption{Stages in the \textsc{Strassen-S} MM algorithm for computing the input matrices.}
\label{fig:strassen-stages}
\end{figure}

Please refer to Figure \ref{fig:strassen-algorithm} (the \textsc{Compute-Input-Matrices} algorithm) for computing the input matrices ($U$ and $V$) for all the leaf nodes in all stages. We start from stage 0. For any given stage, denoted by stage id, we can easily compute the topmost level, called the root level and the bottommost level, called the leaf level. It is also easy to list out all indices of the leaf nodes in a given stage. So, for all leaf nodes, for all cells in the input matrix in a particular leaf node, we invoke the function \textsc{Compute-Input-Cell}. This function computes the value of a specific cell in the input matrix of a leaf node.

\begin{figure}[!ht]
\centering
\begin{minipage}{\textwidth}
\begin{minipage}{0.5\textwidth}
\includegraphics[width=\textwidth]{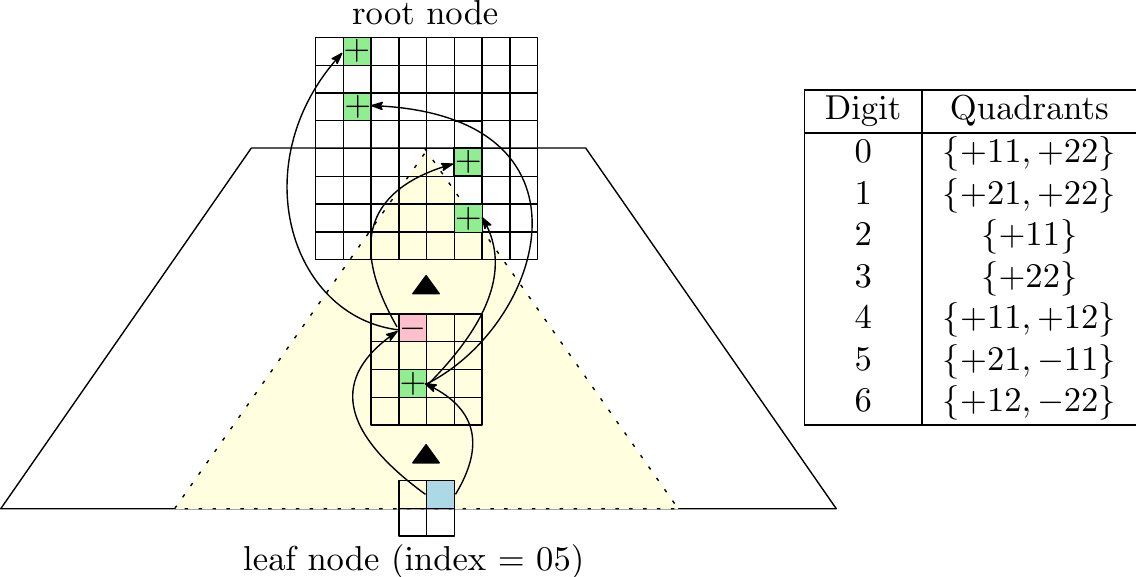}
\end{minipage}
\begin{minipage}{0.5\textwidth}
\includegraphics[width=\textwidth]{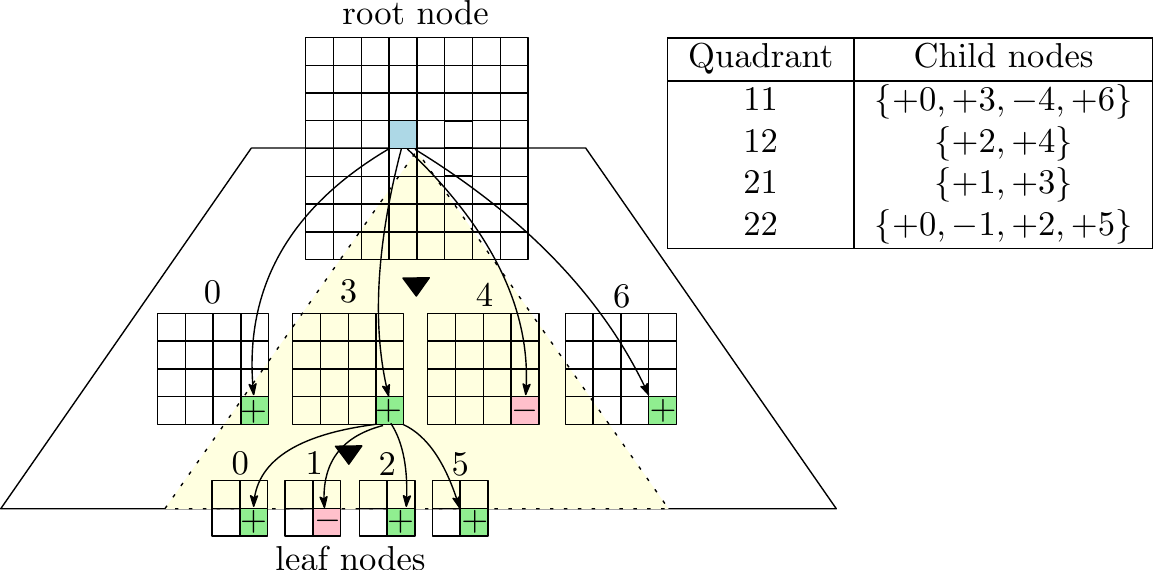}
\end{minipage}
\end{minipage}
\caption{\small Left: Single-point computation of a cell in the input matrix $U$ at a leaf node in a stage. Right: Single-point computation of a cell in the output matrix $X$ at a root node in a stage. (If there is an arrow from cell $a$ to cell $b$, it means that cell $a$ depends on cell $b$.)}
\label{fig:strassen-singlepoint-computation}
\end{figure}

The working of the \textsc{Compute-Input-Cell} algorithm is as shown in Figure \ref{fig:strassen-singlepoint-computation} (left). The figure shows the way in which a highlighted cell in the $U$ matrix at a leaf node with id 05 (in base-7 system, for simplicity) is computed. As the last digit of the index is 5, it means that the leaf node is the 5th child of its parent. From the logic of the Strassen's MM algorithm, we know that the $U$ matrix in the 5th child node of a parent node is computed by subtracting the first quadrant $(-11)$ from the third quadrant $(+21)$ of the $U$ matrix of the parent node. Hence, we can exactly know the two cells in the $U$ matrix of the parent node on which the highlighted cell in the $U$ matrix of the leaf node depends. Also, we can compute the highlighted cell in $\Oh{1}$ time using the two cells of the parent node. The first digit of the index of the lead node is 0. This means that the parent node of the leaf node is the 0th child of its parent (i.e., the leaf node's grandparent). From the logic of the Strassen's MM algorithm, we know that the $U$ matrix in the 0th child node of a parent node is computed by adding the first quadrant $(+11)$ to the fourth quadrant $(+22)$ of the $U$ matrix of the parent node. Using this approach, we can trace the path from the leaf node to its ancestor at the root level. So, each cell in the leaf node depends on 2 cells in its parent node which in turn depends on 4 cells in its parent node and so on until we reach a node at the root level. In this way, we can spawn multiple threads that recursively compute each cell at the leaf node using cells from its ancestor at the root level of the stage. The span for computing each cell is simply the height of the stage i.e, the number of levels in that stage.

Once all the cells in a leaf node with id 05 are computed, the algorithm recursively and asynchronously invokes \textsc{Compute-Input-Matrices} for the next stage with this leaf node as the new root. The base case of the \textsc{Compute-Input-Matrices} algorithm is when the algorithm reaches the last stage at which we compute the cells at the leaf nodes using the exact same idea.  

\begin{lemma}
\label{lem:computing-input-matrices}
\textsc{Compute-Input-Matrices} has a complexity of $\Oh{n^w \log \log n}$ work, $\Oh{\log n}$ span, and $\Oh{n^w}$ space.
\end{lemma}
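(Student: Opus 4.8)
The plan is to bound, stage by stage, the number of cell evaluations performed and the cost of each. In stage $i$ the leaf level is global level $c_i\log_2 n$, which contains $7^{\,c_i\log_2 n}=n^{w c_i}$ nodes, and the input matrix stored at such a node has side $m_i=n/2^{\,c_i\log_2 n}=n^{\alpha^{i+1}}$, hence $n^{2\alpha^{i+1}}$ cells. A single call to \textsc{Compute-Input-Cell} at one of these cells unwinds a binary recursion of depth $g_i:=(c_i-c_{i-1})\log_2 n$ (the number of levels in the stage) with no work-sharing inside the stage, costing $O(2^{g_i})=O(n^{\,c_i-c_{i-1}})$. Multiplying, stage $i<h$ costs $n^{\,w c_i+2\alpha^{i+1}+c_i-c_{i-1}}$ work; substituting $c_i=1-\alpha^{i+1}$, $c_{i-1}=1-\alpha^i$ and using $\alpha^i=(w-1)\alpha^{i+1}$ (which is exactly $\alpha=1/(w-1)$), I would check that the exponent collapses to precisely $w$, so each such stage does $\Theta(n^w)$ work — this cancellation is the whole point of the choice of $c_i$ in \eqref{eq:stages-bounds}. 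The last stage is handled apart: it has $n^w$ size-$1$ leaves and height $g_h=\alpha^h\log_2 n=\Theta(\log\log\log n)$ (which is exactly what $h=\log_{w-1}\frac{\log n}{\log\log\log n}$ delivers), so its per-cell cost is $2^{g_h}=\Theta(\log\log n)$ and its work is $O(n^w\log\log n)$. Since $h=\Theta(\log\log n)$, summing $\Theta(n^w)$ over the $h$ earlier stages and adding the last gives $O(n^w\log\log n)$; the \textsc{par for} spawning overheads and \textsc{Construct-Data-Structures} add only $O(n^w)$.

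\textbf{Span.}
Stages run sequentially, so I would unroll the recursion of \textsc{Compute-Input-Matrices}: in stage $i$ it spawns $7^{g_i}$ independent leaf tasks ($O(g_i)$ span to build the forking tree); each such task spawns an $m_i\times m_i$ grid of cell computations ($O(\log m_i)=O(\alpha^{i+1}\log n)$ span), runs one \textsc{Compute-Input-Cell} of recursion depth $g_i$ ($O(g_i)$ span), and then recurses into stage $i+1$, all stage-$(i{+}1)$ subtrees proceeding in parallel. Hence $T_\infty=\sum_{i=0}^{h}O(g_i+\log m_i)$. The first sum telescopes to $\sum_i g_i=\log_2 n$, and the second is a geometric series $\log_2 n\sum_i\alpha^{i+1}=O(\log n)$ because $\alpha<1$; so $T_\infty=O(\log n)$, matching the $\Omega(\log n)$ lower bound.

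\textbf{Space.}
The only persistent storage is the value of the input matrices at the $h{+}1$ stage boundaries. The bottom of stage $i$ holds $n^{w c_i}$ matrices with $n^{2\alpha^{i+1}}$ cells each, i.e.\ $n^{\,w-(w-2)\alpha^{i+1}}$ scalars, and the bottom of stage $h$ (the size-$1$ leaves of the whole recursion tree) is $n^w$. I would then note that the exponent deficits $(w-2)\alpha^{i+1}$ grow by the constant factor $w-1>1$ as $i$ decreases, so the non-leaf boundary sizes form a rapidly converging series summing to $O\!\big(n^{w}/(\log\log n)^{w-2}\big)=o(n^w)$; together with the $n^w$ leaf layer and the $O(n^2)$ original inputs this is $\Theta(n^w)$ total, as claimed.

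\textbf{Main obstacle.}
The one genuinely delicate step is the work analysis: one must recognise that the recurrence defining the $c_i$ is the fixed point that simultaneously (i) makes the stage-$i$ work exponent equal exactly $w$ for every $i<h$ and (ii) keeps each boundary-$i$ space exponent strictly below $w$, and one must get the index offsets and the $2n$-versus-$n/2$ rescalings in \textsc{Compute-Input-Cell} right. Handling the fact that $c_i\log_2 n$ need not be an integer is minor — rounding perturbs each $g_i$ by $O(1)$ and the totals by $O(h)=O(\log\log n)$, which is absorbed. Once the per-stage work is pinned at $\Theta(n^w)$, the span and space bounds follow from routine geometric-series estimates.
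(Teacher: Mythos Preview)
Your proposal is correct and follows essentially the same approach as the paper: you compute the per-stage work exponent $wc_i+2(1-c_i)+(c_i-c_{i-1})$ and verify it equals $w$ via the defining relation $\alpha=1/(w-1)$, handle stage $h$ separately to get the $\log\log n$ factor, and bound span by a telescoping sum plus a geometric series in $\alpha$. Your space argument is in fact slightly more careful than the paper's --- you sum the storage at all stage boundaries and observe that the non-leaf layers contribute $o(n^w)$, whereas the paper simply asserts that the $n^w$ size-$1$ leaf matrices dominate --- but this is the same idea made explicit.
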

\begin{proof} $[$Work.$]$ We compute the input matrices in $h+1$ \bands $S_0, S_1, \ldots, S_{h}$, where\\
	$h = \log_{w -1}(\log n/\log \log \log n)$. Suppose $W_i$ define the work done at \band $S_i$. We first come up with a generic formula for $W_i$. We use a direct proof to show that $W_i = \Oh{n^w}$, which implies that the total work is $\sum_{i = 0}^{h} W_i = \Oh{n^w \log \log n}$.
	
	We compute $W_i$ for $i \in [0, h - 1]$. \#Nodes at the leaf level of \band $S_i$ is $7^{c_{i}\log n} = n^{wc_i}$. The \#cells in a matrix at the leaf level is $(n/2^{c_i \log n})^2 = n^{2(1-c_i)}$. Each cell in a matrix at the leaf level depends on $\mathcal{O}(2^{(c_{i}-c_{i-1})\log n}) = \Oh{n^{c_{i}-c_{i-1}}}$ cells in a matrix at the root level of the stage. Hence, $W_i = \Oh{n^{w c_i} n^{2(1-c_i)} n^{c_{i}-c_{i-1}}}$ $=$ $\Oh{n^{c_i (w - 1) - c_{i-1} + 2}}$. To show that $W_i = \Oh{n^w}$ for all $i \in [0, h-1]$, it is enough to prove that $c_i (w - 1) - c_{i-1} + 2 = w$. We substitute the values of $c_i$ and $c_{i - 1}$ from equation \ref{eq:stages-bounds} to get: $c_i (w - 1) - c_{i-1} + 2 = (1 - (1/(w-1))^{i+1}) (w - 1) - (1 - (1/(w-1))^i) + 2 = w$.
	
	We now compute $W_h$. The height of the first $h$ stages is $c_{h-1} \log n$. So, the height of the last stage $S_h$ is $\log n - c_{h - 1} \log n$. Substituting the value of $c_{h-1}$ from equation \ref{eq:stages-bounds} and simplifying, we get the height of stage $S_h$ as $\log \log \log n$. There are $n^w$ nodes at $S_h$. The size of a matrix at a leaf node is $1 \times 1$. Each cell depends on $2^{ \log\log\log n} = \log\log n$ cells in a matrix at the root level of stage $S_h$. Hence, work done at the last stage is $W_h = \Oh{n^w \log\log n}$.
	
	Combining the work of the first $h$ \bands and the last \band, we get $T_1(n) = \sum_{i = 0}^{h-1} W_i + W_h = \Oh{n^w \log\log n}$.
	
	\vspace{0.1cm}
	\noindent
	[Span.] Let $T_{\infty}(m, i)$ denote the span of the \textsc{Compute-Input-Matrices} algorithm starting from stage $i$ where a matrix at the root level is of size $m \times m$. We give a recursive formula to compute $T_{\infty}(m, i)$. Then, the total span for the algorithm is $T_{\infty}(n, 0)$.
	
	Consider the \textsc{Compute-Input-Matrices} algorithm.  Let $\Delta c_i = c_i - c_{i - 1}$.
	\#Nodes at the leaf level of stage $S_i$ is $7^{\Delta c_i \log n}$. Launching these nodes in parallel (line \ref{line:leafnodeslaunching}) incurs a span of $\Oh{\Delta c_i w\log n}$. A matrix at the leaf level will be of size $m/(2^{\Delta c_i\log n}) \times m/(2^{\Delta c_i\log n})$. Spawning \textsc{Compute-Input-Cell} function for all cells (lines \ref{line:spawninputrows}, \ref{line:spawninputcols}) incur a span of $\Oh{2\log m - 2\Delta c_i\log n}$. Executing the \textsc{Compute-Input-Cell} algorithm incurs $\Oh{\Delta c_i\log n}$ span. Adding all these spans give us $\Oh{\Delta c_i \log n + \log m}$.
	
	The span of stage $S_i$ recursively depends upon the span of stage $S_{i + 1}$ as the matrices at the leaf level of stage $S_{i + 1}$ are constructed from the leaf level matrices of stage $S_{i}$. Hence, $T_{\infty}(m,i)$ can be recursively defined using the previous analysis as: $T_{\infty}(m,i) = \Oh{\log \log \log n}$ if $i = h$ and $T_{\infty}(m,i) = \Oh{\Delta c_i \log n + \log m} + T_{\infty}(m^{1 - \Delta c_i},i+1)$ if $i < h$. Substituting the values of $c_i$ from equation \ref{eq:stages-bounds}, we get $\Delta c_i = 1 - \alpha^{i+1} - (1 - \alpha^{i}) = \alpha^{i}(1-\alpha) = \Oh{\alpha^i}$. We know that $m$ starts with $n$ and decreases by a factor of $n^{\Delta c_i}$ for every stage. Hence, $m = n^{1-c_{i-1}} = n^{\alpha^i}$, which implies that $\log m = \alpha^i \log n$.
	
	By unrolling the recursion and using the fact that $\alpha^{i}$ is a geometric series and $\alpha < 1$, we compute the total span as $T_{\infty}(n,0) = \sum_{i = 0}^{h-1} \alpha^i \log n + T_{\infty}(n,h) = \Oh{\log n}$.
	
	\vspace{0.1cm}
	\noindent
	[Space.] The total space is dominated by the space used by the data structures. There are $n^w$ matrices at the leaf level for each of the input matrices $U$ and $V$. Each such matrix is of size $1 \times 1$.  Hence, space usage is $\Oh{n^w}$.
\end{proof}

\noindent
\textbf{[Step 2. Compute the Output Matrices.]} 
The logic used to compute the output matrices is very similar to that used to compute the input matrices. A cell of the output matrix ($X$) at a node of the recursion depends on at most four cells of the corresponding output matrices at its child nodes. This implies that each cell in the output matrix at the root node depends on at most $4^{\log n} = n^2$ cells in the corresponding output matrices at the leaf nodes. If we were to compute all output cells of all output matrices at all nodes, the total work would explode to $\Oh{n^w \times n^2} = \Oh{n^{w+2}}$. To keep the work very close to $\Oh{n^w}$, we split the entire recursion tree into stages. We then use single-point computation of output cells in stages.

In this step, we compute the output matrix of all nodes in the recursion tree. The phase consists of $h+1$ sequential stages: $0, 1, \ldots, h$, as shown in Figure \ref{fig:strassen-stages} (replace $c_i$'s with $d_i$'s), such that the height of stage $i$ is fixed at $d_i \log n$, where $h$ and $d_i$ are given below:
\begin{align}
&d_i =  \begin{cases}
0 & \text{if } i = - 1,\\
1-\beta^{i + 1} & \text{if } i \in [0, h - 1],\\
1 & \text{if } i = h.
  \end{cases} \text{ such that } w = \log_2 7, \beta =\frac{4-w}{2}, \text{ and } h = \log_{ \frac{2}{4-w}} \frac{2\log n}{\log \log \log n} -1. \label{eq:stages-bounds-2}
\end{align}

In step 1, we computed the input matrices in the top-down fashion. In contrast, in this step, we construct the output matrices at different recursion levels in a bottom-up fashion. In other words, we compute the last stage $S_{h}$ first, then stage $S_{h-1}$, and so on until stage $S_0$. At stage $S_0$, the final output matrix $X$ will be of size $n\times n$. 

Please refer to Figure (the \textsc{Compute-Output-Matrices} algorithm) for computing the output matrix for all leaf nodes in all stages. We first descend the tree until we reach the last stage $S_h$. We know that all cells in the leaf nodes of this stage (or the recursion tree) already store the input $U$ and $V$ matrices using which we can compute the output matrices at that level. Using these output matrices at the leaf level of $S_h$, we compute the output matrices at the root level of $S_h$ (or the leaf level of $S_{h-1}$). Using these matrices at the leaf level of $S_{h-1}$, we compute the output matrices at the root level of $S_{h-1}$. This process continues until we reach the root level of $S_0$ (or the root node of the entire recursion tree), which is the desired matrix product.

The way an output matrix at the root level is computed from the output matrices at the leaf level of stage $S_i$ is as follows. For all cells in the output matrix at the root level, we invoke the function \textsc{Compute-Output-Cell}. This function computes the final value at that cell. 

The working of the \textsc{Compute-Output-Cell} algorithm is shown in Figure \ref{fig:strassen-singlepoint-computation} (right). The figure shows the way in which a highlighted cell in the output matrix at the root level is computed. The highlighted cell belongs to the first quadrant (11) of the output matrix. From the logic of the Strassen's MM algorithm, we know that the first quadrant of the output matrix of a node is computed by adding the output matrices of the 0th, 3rd, and 6th child nodes and subtracting that of the 4th child node. We can compute the highlighted cell from four cells in the next level in $\Oh{1}$ time. Now, consider the output cell in the 3rd child node of the root node. This cell belongs to the third quadrant (22) of that matrix. Again, from the logic of the Strassen's MM algorithm, the fourth quadrant of the matrix is computed by adding the output matrices of the 0th, 2nd, and 5th child nodes and subtracting that of the 1st child node. We continue the process until we reach the leaf level of that stage.

Once cells in the output matrices at the root level of a stage $S_i$ are computed, the algorithm will proceed to computing the cells in the output matrices at the root level of stage $S_{i-1}$ recursively until we reach the root node of the entire recursion tree.

\begin{lemma}
\label{lem:computing-output-matrices}
\textsc{Compute-Output-Matrices} has a complexity of $\Oh{n^w \log \log n}$ work, $\Oh{\log n}$ span, and $\Oh{n^w}$ space.
\end{lemma}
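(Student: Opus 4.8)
\noindent\emph{Proof plan.} The plan is to reuse the three-part template from the proof of Lemma~\ref{lem:computing-input-matrices}, exchanging the top-down traversal for a bottom-up one and replacing the per-level fan-out $2$ by $4$: since the output matrix of a node is a signed combination of the output matrices of its four relevant children (see the coefficient table $Xbranch$ in Figure~\ref{fig:strassen-algorithm-ds}), a single cell at the root of a \band of height $\ell$ depends on at most $4^{\ell}$ cells at the leaf level of that \band. Accordingly the \band boundaries are now governed by $\beta=\tfrac{4-w}{2}$ and $d_i=1-\beta^{i+1}$ from equation~\ref{eq:stages-bounds-2} in place of $\alpha$ and $c_i$.

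\textbf{Work.} First I would derive a generic formula for the work $W_i$ done in \band $S_i$. For $i\in[0,h-1]$ there are $7^{d_{i-1}\log n}=n^{wd_{i-1}}$ nodes at the root level of $S_i$; each holds an $n^{1-d_{i-1}}\times n^{1-d_{i-1}}$ output matrix, and each of its $n^{2(1-d_{i-1})}$ cells is assembled from at most $4^{\Delta d_i\log n}=n^{2\Delta d_i}$ leaf-level cells, where $\Delta d_i=d_i-d_{i-1}$. Hence $W_i=\Oh{n^{wd_{i-1}+2(1-d_{i-1})+2\Delta d_i}}=\Oh{n^{(w-4)d_{i-1}+2d_i+2}}$, so it suffices to verify the identity $(w-4)d_{i-1}+2d_i+2=w$; substituting $d_i=1-\beta^{i+1}$ reduces it to $4-w-2\beta=0$, which is exactly the relation defining $\beta$, giving $W_i=\Oh{n^w}$ for every $i\in[0,h-1]$. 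For the last \band, its height is $(1-d_{h-1})\log n=\beta^{h}\log n$, which the choice of $h$ makes $\Th{\log\log\log n}$; there the algorithm performs $n^w$ leaf products ($\Oh{n^w}$ work) and then rebuilds the root-level matrices at cost $\Oh{n^{w(1-\beta^{h})}\cdot n^{2\beta^{h}}\cdot n^{2\beta^{h}}}=\Oh{n^{w+(4-w)\beta^{h}}}$, and since $(4-w)\beta^{h}=2\beta^{h+1}=\tfrac{\log\log\log n}{\log n}$ this equals $\Oh{n^{w}\,2^{\log\log\log n}}=\Oh{n^{w}\log\log n}$. Summing over all \bands, $T_1(n)=\sum_{i=0}^{h-1}W_i+W_h=h\cdot\Oh{n^w}+\Oh{n^w\log\log n}=\Oh{n^w\log\log n}$ since $h=\Oh{\log\log n}$.

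\textbf{Span and space.} Let $T_{\infty}(m,i)$ be the span of \textsc{Compute-Output-Matrices} launched at \band $S_i$ with root-level matrices of side $m$; the total span is $T_{\infty}(n,0)$. Inside $S_i$ the algorithm forks its $n^{w\Delta d_i}$ leaf subcalls ($\Oh{\Delta d_i\log n}$ span), recurses into $S_{i+1}$ on them, and after the join forks all $m^2$ root-level cells ($\Oh{\log m}$ span) and runs \textsc{Compute-Output-Cell}, which costs $\Oh{\Delta d_i\log n}$ span by descending the $\Delta d_i\log n$ levels of the \band with $\Oh{1}$ fork per level; thus $T_{\infty}(m,i)=\Oh{\Delta d_i\log n+\log m}+T_{\infty}(m/n^{\Delta d_i},i+1)$ with $T_{\infty}(\cdot,h)=\Oh{\log\log\log n}$. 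Since $\Delta d_i=\beta^{i}(1-\beta)=\Oh{\beta^{i}}$ and $\log m=\beta^{i}\log n$ at \band $S_i$, unrolling gives $T_{\infty}(n,0)=\sum_{i=0}^{h-1}\Oh{\beta^{i}\log n}+\Oh{\log\log\log n}$; because $2<w<4$ forces $0<\beta<1$, this geometric sum is $\Oh{\log n}$. For space, exactly as in Lemma~\ref{lem:computing-input-matrices}, the bottleneck is the array of leaf-level output matrices of the last \band: $n^w$ nodes each storing a $1\times 1$ matrix, for $\Oh{n^w}$ overall.

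\textbf{Main obstacle.} I expect the crux to be the same as for the input step: justifying that these particular \band boundaries make every estimate balance. The only reason each $W_i$ collapses to $\Oh{n^w}$ is the identity above, which is equivalent to the recurrence $d_i=\tfrac{w-2}{2}+\beta d_{i-1}$ and thus forces $\beta=\tfrac{4-w}{2}$ and $d_i=1-\beta^{i+1}$; simultaneously $h$ must be tuned so that the last \band is short enough that its unavoidable $\Th{\log\log n}$ work overhead stays inside $\Oh{n^w\log\log n}$ yet long enough that the geometric tail of the span sum still amounts to only $\Oh{\log n}$. The remainder is bookkeeping --- checking that the root-level output matrices produced by \band $S_{i+1}$ are precisely the leaf-level inputs consumed by \band $S_i$ under the BFS node numbering, and that the child-index arithmetic in \textsc{Compute-Output-Cell} is consistent with it --- which is routine once the stage decomposition is fixed.
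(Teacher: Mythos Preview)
Your proposal is correct and follows essentially the same route as the paper's own proof: the same \band{} parameters $d_i=1-\beta^{i+1}$ with $\beta=\tfrac{4-w}{2}$, the same per-\band{} work exponent $(w-4)d_{i-1}+2d_i+2$ shown to equal $w$, the same treatment of $W_h$ via $2\beta^{h+1}=\frac{\log\log\log n}{\log n}$, and the same geometric-series bound $\sum_i\beta^i\log n=\Oh{\log n}$ for the span. The only differences are cosmetic (you keep $\Delta d_i=d_i-d_{i-1}$ throughout, whereas the paper silently shifts to $d_{i+1}-d_i$ in its span argument), and your explicit remark that $2<w<4$ forces $0<\beta<1$ is a small clarification the paper leaves implicit.
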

\begin{proof} $[$Work.$]$ We compute the output matrix in $h+1$ \bands $S_0, S_1, \ldots, S_{h}$, where\\
	$h = \log_{w -1}(\log n/\log \log \log n)$. Suppose $W_i$ defines the work done at \band $S_i$. We first come up with a generic formula for $W_i$. We use a direct proof to show that $W_i = \Oh{n^w}$, which implies that the total work is $\sum_{i = 0}^{h} W_i = \Oh{n^w \log \log n}$.
	
	We compute $W_i$ for $i \in [0, h - 1]$. Each output matrix at the root level of stage $S_i$ is constructed from the output matrices at the leaf level of the stage. All cells in all output matrices at the root level of stage $S_i$ are computed in parallel. \#Nodes at the root level of \band $S_i$ is $7^{d_{i-1}\log n} = n^{w d_{i-1}}$. Each such matrix has $(n^{1-d_{i-1}})^2 = n^{2(1-d_{i-1})}$ cells. Each cell at a recursion level $\ell$ depends on at most $4$ output cells in recursion level $\ell +1$. Hence, each cell in a matrix at the root level of stage $S_i$ depends on $\mathcal{O}(4^{(d_{i}- d_{i-1})\log n}) = \Oh{n^{2(d_{i}- d_{i-1})}}$ cells in a matrix at the leaf level of the stage. Hence, $W_i = \Oh{n^{w d_{i-1}} n^{2(1-d_{i-1})} n^{2(d_{i}- d_{i-1})}} = \Oh{n^{2d_i + (w-4) d_{i-1} + 2}}.$
	
	To show that $W_i = \Oh{n^w}$ for all $i \in [0, h-1]$, it is enough to prove that $2d_i + (w-4) d_{i-1} + 2 = w$. We substitute the values of $d_i$ and $d_{i - 1}$ from equation \ref{eq:stages-bounds-2} and simplify to get: $2d_i + (w-4) d_{i-1} + 2 = w$.
	
	We now compute $W_h$. We see that $W_h = \mathcal{O} (n^{2d_h + (w-4) d_{h-1} + 2})$ using the equation aforementioned. We substitute the values of $d_h$, $d_{h - 1}$, $h$, and $\beta$ from equation \ref{eq:stages-bounds-2} and simplify to obtain $W_h = \mathcal{O}(n^w \cdot n^{(4-w)(1-d_{h-1})}) = \mathcal{O}(n^w \cdot n^{2\beta^{h+1}}) = \Oh{n^w \log \log n}$.
	
	Combining the work of the first $h$ \bands and the last \band, we get $T_1(n) = \sum_{i = 0}^{h-1} W_i + W_h = \Oh{n^w \log\log n}$.
	
	\vspace{0.1cm}
	\noindent
	[Span.] Let $T_{\infty}(m, i)$ denote the span of the \textsc{Compute-Output-Matrices} algorithm starting from stage $i$ where a matrix at the leaf level is of size $m \times m$. We give a recursive formula to compute $T_{\infty}(m, i)$. Then, the total span for the algorithm is $T_{\infty}(n, 0)$.
	
	Consider the \textsc{Compute-Output-Matrices} algorithm.  Let $\Delta d_i = d_{i+1} - d_{i}$.
	\#Nodes at the leaf level of stage $S_i$ is $7^{\Delta d_i \log n}$. Launching these nodes in parallel (line \ref{line:leafnodeslaunching}) incurs a span of $\Oh{\Delta d_i w\log n}$. A matrix at the leaf level will be of size $m/(2^{\Delta d_i\log n}) \times m/(2^{\Delta d_i\log n})$. Spawning \textsc{Compute-Output-Cell} function for all cells (lines \ref{line:spawnoutputrows}, \ref{line:spawnoutputcols}) incur a span of $\Oh{2\log m - 2\Delta d_i\log n}$. Executing the \textsc{Compute-Output-Cell} algorithm incurs $\Oh{\Delta d_i\log n}$ span. Adding all these spans give us $\Oh{\Delta d_i \log n + \log m}$.
	
	\hide{
		We consider function reduction with stage id = $i$. The output matrices at the leaf level of stage $i$ are constructed from the leaf level output matrices of stage $i+1$. The number of nodes at the leaf level of stage $S_{i+1}$ is $7^{(d_{i+1}-c_i)\log n}$. Launching these node in parallel incurs a span of $(d_{i+1}-d_i)w\log n$. The size of a node (matrix) at the leaf level is $m/(2^{(d_{i+1} - d_i)\log n})$ where a matrix at the root level of stage $S_{i+1}$ (or at the leaf level of stage $S_{i}$) is of size $m\times m$. The two for loops incur a span of $(2\log m - 2(d_{i+1}- d_i)\log n)$. Computing the input value for each cell of a matrix incurs $\Oh{(d_{i+1}- d_i)\log n}$ span. 
	}
	
	The span of stage $S_i$ recursively depends upon the span of stage $S_{i + 1}$ as the output matrices at the leaf level of stage $S_{i + 1}$ are constructed from the leaf level matrices of stage $S_{i}$. Hence, $T_{\infty}(m,i)$ can be recursively defined using the previous analysis as:
	$T_{\infty}(m,i) = \Oh{\log \log \log n}$ if $i = h$ and $T_{\infty}(m,i) = \Oh{\Delta d_i \log n + \log m} + T_{\infty}(m^{1 - \Delta d_i},i+1)$ if $i < h$. Substituting the values of $d_i$ from equation \ref{eq:stages-bounds-2}, we get $\Delta d_i = 1 - \beta^{i+2} - (1 - \beta^{i+1}) = \beta^{i+1}(1-\beta) = \Oh{\beta^i}$. We know that $m$ starts with $n$ and decreases by a factor of $n^{\Delta d_i}$ for every stage. Hence, $m = n^{1-d_{i-1}} = n^{\beta^i}$, which implies that $\log m = \beta^i \log n$.
	
	\hide{
		$T_{\infty}(m,i) = \begin{cases} \Th{1} &\mbox{if } i = \log\log n, \\
		(d_{i+1}-d_i)w\log n + (2\log m - 2(d_{i+1}- d_i)\log n) \\ +  \Oh{(d_{i+1}- d_i)\log n} + T_{\infty}(m^{1 - (d_{i+1}- d_i)},i+1) \end{cases}$
	}
	
	By unrolling the recursion and using the fact that $\beta^{i}$ is a geometric series and $\beta < 1$, we compute the total span as $T_{\infty}(n,0) = \sum_{i = 0}^{h-1} \beta^i \log n + T_{\infty}(n,h) = \Oh{\log n}$.\\
	
	\vspace{-0.2cm}
	\noindent
	[Space.] Using a similar analysis as given in Lemma \ref{lem:computing-output-matrices}, space usage is $\Oh{n^w}$.
	
	\hide{
		We now simplify $(d_{i+1}- d_i) = 1 - \beta^{i+2} - (1 - \beta^{i+1}) = \beta^{i+1}(1-\beta)$. As $(1-\beta)$ is a constant, substituting it with $d^{'}$, we get $(d_{i+1}- d_i) = d^{'}\beta^{i+1}$. 
		
		Replacing $(d_{i+1}- d_i)$ with $d^{'}\beta^{i+1}$ in the equation of $T_{\infty}(m,i)$, we get the following equation.
		
		$T_{\infty}(m,i) = \begin{cases} \Th{1} &\mbox{if } i = \log\log n, \\
		d^{''}\beta^{i+1}\log n + T_{\infty}(m^{1 - d_i \beta^{i+1}},i+1) &\mbox{where } d^{''} \mbox{is some constant.}  \end{cases}$
		
		Solving this equation for $T_{\infty}(n,0)$, we get the span of reduction phase as $O(\log n)$ as $\beta^{i}$ is a geometric series and $\beta < 1$.
	}
	
\end{proof}

\begin{theorem}
\label{thm:strassen-s}
The \textsc{Strassen-S} MM algorithm has a complexity of $\Oh{n^w \log \log n}$ work, $\Oh{\log n}$ span, and $\Oh{n^w}$ space.
\end{theorem}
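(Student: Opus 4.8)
The plan is to read the theorem off as a direct consequence of Lemmas~\ref{lem:computing-input-matrices} and~\ref{lem:computing-output-matrices}, by just adding up the costs of the four top-level steps of \textsc{Strassen-S} in Figure~\ref{fig:strassen-algorithm}: the call to \textsc{Construct-Data-Structures} (line~1), the two calls to \textsc{Compute-Input-Matrices} for $U$ and $V$ (line~2), the call to \textsc{Compute-Output-Matrices} (line~3), and the final copy $X \gets \overline{X}[0][0]$ (line~4). Lines~2 and~3 are exactly the procedures analysed in the two lemmas, so the only thing to check is that the two auxiliary steps (lines~1 and~4) do not dominate any of the three metrics.

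First I would dispose of \textsc{Construct-Data-Structures}. The tables $U$quad, $V$quad, $X$branch of Figure~\ref{fig:strassen-algorithm-ds} have constant size, so they cost $\Oh{1}$. The arrays $\overline U,\overline V,\overline X$ are precisely the cells that \textsc{Compute-Input-Matrices} and \textsc{Compute-Output-Matrices} later write into --- one matrix per (stage, leaf node of that stage) --- so their total size is at most the space bound of Lemmas~\ref{lem:computing-input-matrices} and~\ref{lem:computing-output-matrices}, i.e. $\Oh{n^w}$; allocating and zero-initialising them costs $\Oh{n^w}$ work and $\Oh{n^w}$ space. For the span I would fork, in parallel, one task per stage ($\Oh{\log\log n}$ stages, fork depth $\Oh{\log\log\log n}$), inside each one task per leaf node of that stage (at most $7^{\log n}=n^w$ nodes, fork depth $\Oh{\log n}$), and inside each one task per matrix cell (at most $n^2$ cells, fork depth $\Oh{\log n}$); the nested fork has depth $\Oh{\log n}$. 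Likewise, line~4 copies the $n\times n$ matrix $\overline X[0][0]$ into $X$ by forking one task per cell: $\Oh{n^2}$ work, $\Oh{\log n}$ span, $\Oh{n^2}$ space. Then I would add the four contributions: work $2\cdot\Oh{n^w\log\log n}+\Oh{n^w\log\log n}+\Oh{n^w}+\Oh{n^2}=\Oh{n^w\log\log n}$ since $w=\log_2 7>2$ gives $n^2=\oh{n^w}$; span the sum of $\Oh{1}$ terms each $\Oh{\log n}$, hence $\Oh{\log n}$; and space $\Oh{n^w}$, proving the theorem.

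I do not expect a real obstacle here, since all the heavy lifting --- the stage-splitting that keeps per-stage work at $\Oh{n^w}$ and per-stage span at $\Oh{\Delta c_i\log n+\log m}$ --- already lives in Lemmas~\ref{lem:computing-input-matrices} and~\ref{lem:computing-output-matrices}. The one point deserving attention is the span of \textsc{Construct-Data-Structures}: a naive construction that walks the $\Oh{\log\log n}$ stages sequentially, or the $\Oh{\log n}$ recursion levels inside a stage sequentially, would incur $\Oh{\log^2 n}$ span and defeat the purpose; the fully parallel nested fork described above is what keeps it at $\Oh{\log n}$. If one also wanted a correctness statement accompanying the complexity bound, I would add the remark --- already sketched around Figure~\ref{fig:strassen-singlepoint-computation} --- that the \textsc{Compute-Input-Cell} and \textsc{Compute-Output-Cell} recursions faithfully unroll the Strassen identities encoded in the tables of Figure~\ref{fig:strassen-algorithm-ds}, so that $\overline X[0][0]$ indeed equals $U\times V$.
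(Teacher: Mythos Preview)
Your proposal is correct and takes essentially the same approach as the paper: the paper's own proof of Theorem~\ref{thm:strassen-s} is the single line ``The theorem follows from Lemmas~\ref{lem:computing-input-matrices} and~\ref{lem:computing-output-matrices},'' and you have simply fleshed this out by explicitly accounting for the auxiliary steps (\textsc{Construct-Data-Structures} and the final copy) that the paper's terse proof leaves implicit. Your additional care about the span of \textsc{Construct-Data-Structures} is a fair observation, though the paper does not spell this out.
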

\begin{proof} The theorem follows from lemmas \ref{lem:computing-input-matrices} and \ref{lem:computing-output-matrices}.
\end{proof}

\paragraph{\textsc{Strassen-S-Adaptive} MM.} We design a parallel Strassen's MM algorithm \textsc{Strassen-S-Adaptive} with space-span trade-off, which for any given $s$ amount of space in the range $[n^2, n^w]$, achieves the optimal span for that space and performing work very close to $\Oh{n^w}$. Suppose we are given the input matrices $U$ and $V$. We need to compute the output matrix $X$ using space $s \in [n^2, n^w]$. Then, the algorithm works as follows. Observe that there are $\log n$ levels in the recursion tree of the Strassen's MM algorithm. We split the entire recursion tree, at level $t$, into two parts: the top part and the bottom part. The threshold level $t$ depends on the value $s$. We execute the classical sequential Strassen's MM algorithm in the top portion of the recursion tree and the \textsc{Strassen-S} algorithm in the bottom portion of the tree.

\begin{theorem}
\label{thm:strassen-s-tunable}
The \textsc{Strassen-S-Adaptive} MM algorithm has a complexity of $\Oh{n^w \log\log n}$ work and $\Oh{(n^w/s)\log n}$ span, given $\Th{s}$ amount of space.
\end{theorem}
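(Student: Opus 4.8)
The plan is to pick the split level $t$ as a function of $s$, run a serial-recursion ($1$-way) Strassen for the top $t$ levels with \textsc{Strassen-S} called on each resulting subproblem, and then bound space, work, and span for the two portions separately. I would set $t=\ceil{\log_7(n^w/s)}$, so that $7^t=\Th{n^w/s}$ and, since $2^{tw}=7^t$, each of the $7^t$ subproblems appearing at level $t$ has dimensions $(n/2^t)\times(n/2^t)$ with $(n/2^t)^w=n^w/7^t=\Th{s}$. Because $s\in[n^2,n^w]$ we get $0\le t\le\frac{w-2}{w}\log_2 n$, hence $n/2^t=\Om{n^{2/w}}$ and $\log(n/2^t)=\Th{\log n}$. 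In the top portion I would traverse the recursion tree in depth-first, left-to-right order, processing the seven recursive subcalls of each node one after another and reusing the auxiliary matrices along the active root-to-node path---this is the $k=1$ case of Lemma~\ref{thm:7way}---while still parallelizing the $\Oh{1}$ matrix additions at each node (one thread per entry). When the traversal reaches level $t$, the subproblem there is solved by a single call to \textsc{Strassen-S}; its output is copied into the parent's product slot and its $\Th{s}$-sized workspace is then reused for the next such call.

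For space: the active root-to-node path holds, at each level $i<t$, the $\Oh{1}$ input sums and seven product blocks of size $\Oh{(n/2^i)^2}$, which telescope to $\Oh{n^2}$ (this is the $\Oh{n^2}$ extra-space bound of $1$-way Strassen); adding the single live \textsc{Strassen-S} workspace of size $\Th{(n/2^t)^w}=\Th{s}$ and the $\Oh{n^2}$ for $U,V,X$ gives $\Oh{n^2+s}=\Oh{s}$ since $s\ge n^2$, and since the level-$t$ call forces $\Om{s}$ the algorithm uses $\Th{s}$ space. For work: the combining work in the top is $\sum_{i=0}^{t-1}7^i\cdot\Oh{(n/2^i)^2}=\Oh{n^2(7/4)^t}=\Oh{n^2\cdot 7^t}=\Oh{n^2\cdot n^w/s}=\Oh{n^w}$, using $(7/4)^t\le 7^t$ and $s\ge n^2$; the bottom does $7^t$ calls to \textsc{Strassen-S} on $(n/2^t)\times(n/2^t)$ inputs, i.e.\ $7^t\cdot\Oh{(n/2^t)^w\log\log(n/2^t)}=\Oh{n^w\log\log n}$ by Theorem~\ref{thm:strassen-s}. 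Hence the total work is $\Oh{n^w\log\log n}$.

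For span: since the top traversal is serial, every one of the $\Th{7^t}$ nodes of the top tree lies on the critical path, and the span local to a node is $\Oh{\log n}$ (either $\Oh{1}$ parallel matrix additions, each of span $\Oh{\log((n/2^i)^2)}=\Oh{\log n}$, or the leaf's \textsc{Strassen-S} call of span $\Oh{\log(n/2^t)}=\Oh{\log n}$ by Theorem~\ref{thm:strassen-s}); equivalently, the recurrence $T_\infty(m,i)=7\,T_\infty(m/2,i+1)+\Oh{\log m}$ for $i<t$ with $T_\infty(\cdot,t)=\Oh{\log n}$ unrolls to $\Oh{7^t\log n}=\Oh{(n^w/s)\log n}$, which also absorbs the $\Oh{\log n}$ for copying out the final answer (and for $s=n^w$ we have $t=0$ and the bound is just $\Oh{\log n}$). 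Combining the three estimates, \textsc{Strassen-S-Adaptive} runs in $\Oh{n^w\log\log n}$ work, $\Oh{(n^w/s)\log n}$ span, and $\Th{s}$ space.

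I expect the main obstacle to be the span accounting for the top portion: one has to be sure that serializing the seven recursive calls per node (which is precisely what buys the $\Th{s}$ space) puts exactly $\Th{7^t}$ fork/addition ``atoms'' on the critical path, each of span $\Oh{\log n}$, and in particular that reusing the single \textsc{Strassen-S} workspace across all $7^t$ level-$t$ subproblems serializes those calls without any further blow-up in span or work. The space and work bounds are then routine, needing only $s\ge n^2$ together with the choice $7^t=\Th{n^w/s}$.
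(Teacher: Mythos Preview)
Your proposal is correct and follows essentially the same approach as the paper: run $1$-way (serial) Strassen for the top $t$ levels and \textsc{Strassen-S} at level $t$, choose $t$ so that $7^t=\Th{n^w/s}$ (the paper equivalently derives this from the space equation $s=\Th{(n/2^t)^w+n^2}$), and read off span $\Oh{7^t\log n}$ and work $\Oh{n^w\log\log n}$ from the unrolled recurrences. Your exposition is in fact more careful than the paper's---you make explicit the DFS traversal, the reuse of the single \textsc{Strassen-S} workspace, and the observation $\log(n/2^t)=\Th{\log n}$ (whereas the paper's line $T_\infty(n/2^t)=\Oh{\log^2(n/2^t)}=\Oh{\log n}$ appears to contain a typo).
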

\begin{proof}
	Let $T_1(n,s)$, $T_{\infty}(n, s)$, and $S_{\infty}(n,s)$ denote work, span, and space of \textsc{Strassen-S-Tunable}. Let $T_1(n)$, $T_{\infty}(n)$, and $S_{\infty}(n)$ denote work, span, and space of \textsc{Strassen-S}. Note that \textsc{Strassen-S} does not have a space parameter.
	
	We run the sequential Strassen's MM algorithm for the first $t$ levels of the recursion tree. At level $t$, the size of the matrices is $n/2^t \times n/2^t$ and the number of matrices is $7^t$. We have
	\begin{align*}
	&T_{1}(n,s) = 7T_{1}(n/2,s) + \Oh{n^2} =  \cdots = 7^t T_1(n/2^t) + \Oh{7^t n^2}\\
	&T_{\infty}(n, s) = 7 T_{\infty}(n/ 2, s) + \Oh{\log n} = \cdots = 7^t T_{\infty}(n/2^t) + \Oh{7^t \log n}\\
	&S_{\infty}(n,s) = S_{\infty}(n/2, s) + \Oh{n^2} = \cdots = S_{\infty}(n/2^t) + \Oh{n^2} = \Oh{(n/2^t)^w + n^2}
	\end{align*}
	
	Equating the total space usage with $s$, we get $s = \Th{(n/2^t)^w + n^2}$. We simplify this expression to get the two expressions $n/2^t = \Th{(s - n^2)^{(1/w)}}$ and $7^t = \Th{n^w/(s - n^2)}$.
	Substituting the two expressions in the span and work equations, we have
	\begin{align*}
	&T_{\infty}(n/2^t) = \Oh{\log^2(n/2^t)} = \Oh{\log^2 s}= \Oh{\log n}\\   
	&T_{\infty}(n, s) = 7^t T_{\infty}(n/2^t) + \Oh{7^t \log n} = \Oh{(n^w/s)\log n}\\
	&T_{1}(n/2^t) = (n/2^t)^w \log\log (n/2^t)\\
	&T_{1}(n,s) = 7^t\cdot T_1(n/2^t) + \Oh{7^t n^2} = \Oh{n^w \log\log n}
	\end{align*}
\end{proof}

\vspace{0.2cm}
\para{\textsc{Strassen-W-Adaptive} MM} We can show that by using the sequential 1-way Strassen's MM until recursion level $t$ (depends on $s$ units of space) and then switching to the 7-way Strassen's MM instead of \textsc{Strassen-S}, we can achieve work bound the same as that of the classical Strassen's MM, but the span increases by an extra $\Oh{ \log n}$ factor.
\begin{theorem}
\label{thm:strassen-w-tunable}
The \textsc{Strassen-W-Adaptive} MM algorithm has a complexity of $\Oh{n^w}$ work and $\Oh{(n^w/s)\log^2 n}$ span, given $\Th{s}$ amount of space.
\end{theorem}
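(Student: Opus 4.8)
The plan is to follow the proof of Theorem~\ref{thm:strassen-s-tunable} almost verbatim, the only change being that the bottom of the recursion tree now runs the $7$-way Strassen's MM of Lemma~\ref{thm:7way}, whose span on a size-$m$ subproblem is $\Oh{\log^2 m}$ rather than the $\Oh{\log m}$ of \textsc{Strassen-S}; this single substitution is exactly what produces the extra $\Oh{\log n}$ factor in the span while leaving the work at $\Oh{n^w}$. Let $t$ be the threshold level (to be fixed by the space budget), and let $T_1^{(7)}, T_{\infty}^{(7)}, S_{\infty}^{(7)}$ denote the work, span, and space of $7$-way Strassen from Lemma~\ref{thm:7way}, so $T_1^{(7)}(m)=\Oh{m^w}$, $T_{\infty}^{(7)}(m)=\Oh{\log^2 m}$, and $S_{\infty}^{(7)}(m)=\Oh{m^w}$. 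Since the top $t$ levels execute the sequential ($1$-way) Strassen --- which forks its seven children one at a time but performs the $\Oh{m^2}$-work matrix combinations at a size-$m$ node with $\Oh{\log m}$ span --- the three recurrences unroll as
\begin{align*}
&T_1(n,s) = 7\,T_1(n/2,s) + \Oh{n^2} = \cdots = 7^t\,T_1^{(7)}(n/2^t) + \Oh{\sum_{i=0}^{t-1} 7^i (n/2^i)^2},\\
&T_{\infty}(n,s) = 7\,T_{\infty}(n/2,s) + \Oh{\log n} = \cdots = 7^t\,T_{\infty}^{(7)}(n/2^t) + \Oh{7^t \log n},\\
&S_{\infty}(n,s) = S_{\infty}(n/2,s) + \Oh{n^2} = \cdots = S_{\infty}^{(7)}(n/2^t) + \Oh{n^2},
\end{align*}
where the space recurrence is additive (not multiplicative) precisely because the top is run sequentially, so at any instant only one root-to-level-$t$ path plus one active bottom multiplication occupy memory.

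The next step is to discharge each bound. For the work, $7^t\,T_1^{(7)}(n/2^t)=7^t\,\Oh{(n/2^t)^w}=\Oh{n^w}$ since $2^w=7$, while $\sum_{i=0}^{t-1} 7^i (n/2^i)^2$ is a prefix of the geometric sum $\sum_{i=0}^{\log n} 7^i(n/2^i)^2 = \Oh{n^w}$ that bounds the classical Strassen work; hence $T_1(n,s)=\Oh{n^w}$. For the space, $S_{\infty}(n,s)=\Oh{(n/2^t)^w+n^2}$; equating this with $\Th{s}$ and using (as in Theorem~\ref{thm:strassen-s-tunable}) that $s-n^2=\Th{s}$ yields $n/2^t=\Th{(s-n^2)^{1/w}}$ and therefore $7^t=(2^t)^w=\Th{n^w/(s-n^2)}=\Th{n^w/s}$. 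Substituting into the span recurrence, and noting $\log(n/2^t)=\Oh{\log s}=\Oh{\log n}$, gives $T_{\infty}(n,s)=7^t\,\Oh{\log^2(n/2^t)}+\Oh{7^t\log n}=\Oh{7^t\log^2 n}=\Oh{(n^w/s)\log^2 n}$, as claimed.

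The calculation is routine; the only places needing care are the two that set this proof apart from that of Theorem~\ref{thm:strassen-s-tunable}: first, verifying that the cost of sequentially traversing the $7^t$ bottom subproblems is $7^t\cdot\Oh{\log^2(n/2^t)}$ and that this dominates the $\Oh{7^t\log n}$ contribution of the top-level combinations (both being geometric sums controlled by their last term); and second, observing that the matrix-combination work performed in the top $t$ levels, being a prefix of the classical Strassen recursion's work, never exceeds $\Oh{n^w}$. I expect the mildly delicate point to be the bookkeeping around $s-n^2=\Th{s}$, i.e.\ requiring $s$ to be bounded away from the input size $n^2$, exactly the implicit assumption already made in Theorem~\ref{thm:strassen-s-tunable}.
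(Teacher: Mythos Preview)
Your proposal is correct and follows essentially the same approach as the paper, which simply says the proof is identical to that of Theorem~\ref{thm:strassen-s-tunable} with the substitutions $T_{\infty}(n/2^t)=\Oh{\log^2 n}$ and $T_1(n/2^t)=(n/2^t)^w$. If anything, you are more careful than the paper: you write the top-level work remainder as the explicit geometric sum $\sum_{i=0}^{t-1} 7^i (n/2^i)^2$ and bound it by $\Oh{n^w}$, whereas the paper uses the looser $\Oh{7^t n^2}$, and you make explicit the implicit assumption $s-n^2=\Th{s}$.
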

\begin{proof}
The proof is similar to that of Theorem \ref{thm:strassen-s-tunable} except that $T_{\infty}(n/2^t) = \Oh{\log^2 n}$ and $T_{1}(n/2^t) = (n/2^t)^w$. 
\end{proof}

\begin{corollary}
With $s = \Th{n^2}$ units of space, $(i)$ \textsc{Strassen-S-Adaptive} has a complexity of $\Oh{n^w\log\log n}$ work and $\Oh{n^{w-2}\log n}$ span and $(ii)$ \textsc{Strassen-W-Adaptive} has a complexity of $\Oh{n^w}$ work and $\Oh{n^{w-2}\log^2 n}$ span.
\end{corollary}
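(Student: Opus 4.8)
The plan is to obtain the corollary as a direct specialization of Theorems~\ref{thm:strassen-s-tunable} and~\ref{thm:strassen-w-tunable}, setting the space parameter to $s = \Th{n^2}$, the smallest value in the admissible range $[n^2, n^w]$. For part~$(i)$, the work bound $\Oh{n^w\log\log n}$ of Theorem~\ref{thm:strassen-s-tunable} carries no dependence on $s$ and survives unchanged, while its span bound $\Oh{(n^w/s)\log n}$ becomes $\Oh{(n^w/n^2)\log n} = \Oh{n^{w-2}\log n}$. For part~$(ii)$ I would argue identically from Theorem~\ref{thm:strassen-w-tunable}: the work remains $\Oh{n^w}$ and the span $\Oh{(n^w/s)\log^2 n}$ becomes $\Oh{n^{w-2}\log^2 n}$. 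Since $w = \log_2 7 > 2$ the factor $n^{w-2}$ is genuinely super-constant, which is to be expected: an algorithm confined to $\Th{n^2}$ working space cannot expose polylogarithmic-depth parallelism.

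The single point I would spell out is that $s = \Th{n^2}$ sits exactly at the boundary of the range assumed in those two theorems, and the construction behind them fixes a threshold recursion level $t$ by solving $s = \Th{(n/2^t)^w + n^2}$ for $t$, which involves dividing by $s - n^2$. Accordingly I would read ``$s = \Th{n^2}$'' in the statement as ``$s$ is a constant factor larger than the $\Th{n^2}$ space needed merely to hold the input and output matrices,'' i.e.\ $s = c\,n^2$ with $c$ exceeding the constant absorbed into the $\Oh{n^2}$ additive term of the space recurrence. Then $s - n^2 = \Th{n^2}$, so $7^t = \Th{n^w/(s-n^2)} = \Th{n^{w-2}}$ and $n/2^t = \Th{n^{2/w}}$, and the substitutions above are legitimate. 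With $t$ pinned down, everything reduces to re-evaluating the recurrences already solved in the proofs of Theorems~\ref{thm:strassen-s-tunable} and~\ref{thm:strassen-w-tunable} at this value of $t$, so no new calculation is needed.

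I do not expect a real obstacle, since the corollary is a corollary in the strict sense. The only place that rewards a second look is the bookkeeping in the previous paragraph --- in particular the mild abuse of writing $s = \Th{n^2}$ for ``a constant multiple of the $\Th{n^2}$ minimum space'', without which the quantity $s - n^2$ appearing in the threshold computation need not itself be $\Th{n^2}$, and the threshold level $t$ (hence the split between the sequential top part and the \textsc{Strassen-S} / 7-way bottom part) would be ill-defined.
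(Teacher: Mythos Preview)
Your proposal is correct and matches the paper's intent: the paper states this corollary without proof, treating it as an immediate specialization of Theorems~\ref{thm:strassen-s-tunable} and~\ref{thm:strassen-w-tunable} with $s = \Th{n^2}$ substituted into the span bounds. Your additional discussion of the $s - n^2$ bookkeeping is a valid clarification of a subtlety the paper glosses over, but it goes beyond what the paper itself provides.
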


\vspace{0.2cm}
\para{\textsc{Strassen-W} MM} We ask the following question. If the work is bounded by $\Oh{n^w}$, what is the best span achievable by a parallel Strassen's MM algorithm? It turns out that with $\Oh{n^w}$ work bound one can achieve $\Oh{\log n \log\log\log n}$ span. 

We split the entire recursion tree, at level $\log (n/(\log \log n)^{1/(w-2)})$, into two parts: the top and the bottom parts. We execute the \textsc{Strassen-S} in the top portion and the quadratic space \textsc{Strassen-W-Adaptive} algorithm in the bottom portion. 

\hide{Given an input matrix of size $n\times n$, we divide it into $m^2$ blocks each of size $(n/m)\times(n/m)$ where $m = n/(\log\log n)^{1/(w-2)})$. We run \textsc{Strassen-OPT} on the blocked $m\times m$ matrix where each cell is an $m\times m$ matrix instead of a single numerical value.
At each leaf of \textsc{Strassen-OPT} when we need to multiply two such $(n/m)\times(n/m)$ matrices, we multiply them with quadratic space \textsc{Strassen-(1,7)} algorithm. 
At each internal node of \textsc{Strassen-OPT} we add two $(n/m)\times(n/m)$ matrices in $\log (n/m)$ span with two parallel for loops.}

\begin{theorem}
\label{thm:strassen-w}
The \textsc{Strassen-W} MM algorithm has a complexity of $\Oh{n^w}$ work, $ \Oh{\log n \cdot \log\log \log n}$ span, and $\Oh{n^w/\log \log n}$ space. 
\end{theorem}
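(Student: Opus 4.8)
The algorithm cuts the Strassen recursion tree at level $t=\log m$, where $m=n/b$ and $b=(\log\log n)^{1/(w-2)}$, so a ``cell'' at the cut is a $b\times b$ block and the key identity $b^{w-2}=\log\log n$ holds. The top part runs \textsc{Strassen-S} on the $m\times m$ matrix whose entries are $b\times b$ blocks --- i.e. every scalar addition inside \textsc{Strassen-S} becomes an elementwise $b\times b$ block addition (cost $\Oh{b^2}$ work and $\Oh{\log b}$ span via two nested parallel-for loops) --- and each of the $7^{t}=m^{w}$ leaf products is a $b\times b$ matrix multiplication computed by the quadratic-space \textsc{Strassen-W-Adaptive} algorithm, which by Theorem~\ref{thm:strassen-w-tunable} with $s=\Th{b^{2}}$ has $\Oh{b^{w}}$ work, $\Oh{b^{w-2}\log^{2}b}$ span, and $\Oh{b^{2}}$ space. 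The plan is to analyze the work, span, and space of this hybrid.

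For the work bound I would reuse the accounting of Lemmas~\ref{lem:computing-input-matrices} and~\ref{lem:computing-output-matrices}: on $m\times m$ scalar inputs, \textsc{Strassen-S} performs $\Oh{m^{w}}$ leaf multiplications and $\Oh{m^{w}\log\log m}$ additions/scalings in total (over all stages, including the last). Lifting to $b\times b$ blocks, the leaf products cost $m^{w}\cdot\Oh{b^{w}}=\Oh{(mb)^{w}}=\Oh{n^{w}}$, and the additions cost $\Oh{m^{w}\log\log m\cdot b^{2}}$; since $\log\log m\le\log\log n=b^{w-2}$, this is $\Oh{m^{w}b^{w}}=\Oh{n^{w}}$ as well, so the total work is $\Oh{n^{w}}$. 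For space, \textsc{Strassen-S} on $m\times m$ scalars uses $\Oh{m^{w}}$ space, dominated by its leaf level; with $b\times b$ block entries this becomes $\Oh{m^{w}b^{2}}$, and the at most $m^{w}$ simultaneously-live leaf products contribute only another $\Oh{m^{w}b^{2}}$, so the space is $\Oh{m^{w}b^{2}}=\Oh{n^{w}/b^{w-2}}=\Oh{n^{w}/\log\log n}$.

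For the span I would re-run the span recurrences in the proofs of Lemmas~\ref{lem:computing-input-matrices} and~\ref{lem:computing-output-matrices} with the single modification that each level of a \textsc{Compute-Input-Cell} / \textsc{Compute-Output-Cell} recursion now ends with a $b\times b$ block addition; hence a single-point computation inside a stage of height $H$ has span $\Oh{H\log b}$ rather than $\Oh{H}$, while the thread-launch/fork parts of each stage are unchanged. The stage-$i$ span then becomes $\Oh{\Delta c_{i}\log m+\log M_{i}+\Delta c_{i}\log m\,\log b}=\Oh{\alpha^{i}\log m\,\log b}$ (and likewise with $\beta,d_{i}$ in the output phase), which still telescopes over the geometric series to $\Oh{\log m\,\log b}$; on top of this the last stage contributes the $b\times b$ leaf-product span $\Oh{b^{w-2}\log^{2}b}=\Oh{\log\log n\,(\log\log\log n)^{2}}$. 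Since $\log m=\Th{\log n}$, $\log b=\Th{\log\log\log n}$, and $\log\log n\,(\log\log\log n)^{2}=o(\log n\log\log\log n)$, the overall span is $\Oh{\log n\log\log\log n}$.

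I expect the span argument to be the main obstacle: one must confirm that moving to block entries inflates only the single-point-computation portion of each stage's span by the multiplicative factor $\log b$ (leaving the launch and fork terms untouched) and that the geometric-series cancellation used in Lemmas~\ref{lem:computing-input-matrices}/\ref{lem:computing-output-matrices} is preserved under this change. The work bound rests entirely on the balancing identity $\log\log m\le b^{w-2}$ --- precisely why $b$ is chosen as $(\log\log n)^{1/(w-2)}$ --- so the remaining care is only in verifying that the block additions never overtake the $\Oh{n^{w}}$ leaf-multiplication work; everything else is routine bookkeeping.
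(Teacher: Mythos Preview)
Your proposal is correct and follows essentially the same approach as the paper: the same cut point $m=n/(\log\log n)^{1/(w-2)}$, the same work accounting (block additions $\Oh{m^{w}\log\log m\cdot b^{2}}\le\Oh{m^{w}b^{w}}=\Oh{n^{w}}$ plus leaf products $\Oh{n^{w}}$), and the same space count $\Oh{m^{w}b^{2}}=\Oh{n^{w}/\log\log n}$. The only difference is that the paper bounds the top-part span more crudely---it simply multiplies the scalar \textsc{Strassen-S} span $\Oh{\log m}$ by the per-operation block-addition span $\Oh{\log b}$ to get $\Oh{\log m\log(n/m)}$---whereas you re-run the stage recurrences to reach the same $\Oh{\log m\log b}$; your finer argument is sound but not needed for the stated bound.
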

\begin{proof} There are $m$ matrices of size $(n/m)\times(n/m)$ at the switching level $t$, where $m = n/(\log\log n)^{1/(w-2)})$. At each node at the threshold level, we add two matrices with two for loops. Adding two matrices has $\Oh{\log (n/m)}$ span and $\Oh{(n/m)^2}$ work. From Theorem \ref{thm:strassen-s}, the span and work for \textsc{Strassen-S} are  $\Oh{\log m \cdot \log (n/m)}$ and $\Oh{(m^w \log\log m) \cdot (n/m)^2}$ respectively. The span and work for \textsc{Strassen-W-Tunable} in $m^w$ leaves are $\Oh{(n/m)^{(w-2)}\log^2 (n/m)}$ and $\Oh{m^w(n/m)^w}$ respectively. Combining the span from \textsc{Strassen-S} and \textsc{Strassen-W-Tunable}, we get the expression for span for the hybrid algorithm as follows.
	
	When $m = n/(\log\log n)^{1/(w-2)}$, then $n/m = (\log\log n)^{1/(w-2)}$. We compute span as
	\begin{align*}
	&T_{\infty}(n) = \Oh{ \log m \cdot \log (n/m)  + (n/m)^{w-2}\log^2 (n/m)}\\
	&= \Oh{( (\log n - (1/(w-2)) \log\log\log n )(1/(w-2))\log\log\log n ) + ((1/(w-2))\log\log\log n)^2 \log\log n}\\
	&=\Oh{\log n \cdot \log\log\log n }.
	\end{align*}
	Combining work from both \textsc{Strassen-S} and \textsc{Strassen-W-Tunable}, we get the total work as:
	\begin{align*}
	T_1(n) &= \Oh{(m^w \log\log m)(n/m)^2 + m^w(n/m)^w}\\
	&= \Oh{(n^w/ (\log\log n)^\frac{w}{w-2})(\log\log n)^{\frac{2}{w-2}} + n^w} = \Oh{n^w}.    
	\end{align*}
	We use $s = \Th{n^w/\log\log n}$ space for the whole algorithm.
	$$
	S_{\infty}(n) = \Oh{m^w (n/m)^2} 
	= \Oh{(n^w/ (\log\log n)^\frac{w}{w-2})(\log\log n)^{\frac{2}{w-2}}} 
	= \Oh{n^w/\log\log n}.
	$$
\end{proof}

\vspace{0.2cm}
\noindent
\textbf{Strassen-like MM Algorithms.}
Let recursive algorithm ALG multiply two input matrices $U$ and $V$ of size $n\times n$ and produce output matrix $X$, that is $X = U \cdot V$. ALG divides $U$ into $m\times m$ blocks each of size $(n/m)\times (n/m)$. Similarly, ALG divides the other input matrix $V$ and output matrix $X$. Suppose that algorithm ALG has $R$ recursive calls in each level of recursion. Then ALG creates $R$ temporary matrices each for both input matrices and the output matrix. In particular, the computation of ALG in each level of recursion is as follows. For each $r = 0,1,\ldots, R-1$, Here $A^{(r)}, B^{(r)}$ and $C^{(r)}$ represent temporary matrices.
$$
\begin{aligned}
& A^{(r)} \longleftarrow 0; B_{(r)} \longleftarrow 0 \\
& A^{(r)} \longleftarrow A^{(r)} + \alpha_{i,k}^{(r)}U_{i,k} \text{ for } i,k = 0,1,\cdots, m-1 \\
& B^{(r)} \longleftarrow B^{(r)} + \beta_{k,j}^{(r)}V_{k,j} \text{ for } k,j = 0,1,\ldots, m-1 \\
& C^{(r)} \longleftarrow A^{(r)} \cdot B^{(r)} \\
& X_{i,j} \longleftarrow  X_{i,j} + \gamma_{i,j}^{(r)} C^{(r)} \text{ for } i,j = 0,1,\ldots, m-1.
\end{aligned}
$$
We call ALG a Strassen-like MM algorithm~\cite{mccoll1999memory}.
Sequential algorithm ALG does $\Oh{n^w}$ work where $w = \log_m R$. In Strassen, $m = 2$ and $R = 7$.

It is important to observe that the approaches used by \textsc{Strassen-S}, \textsc{Strassen-W}, \textsc{Strassen-S-Adaptive}, and \textsc{Strassen-W-Adaptive} MM algorithms apply to all Strassen-like MM algorithms.

\paragraph{Lower Bounds.} We give the following lower bound of any parallel version of Strassen's MM algorithm using $s$ units of space in the binary-forking model.

\begin{theorem}
\label{thm:strassen-lower-bound}
Let $A$ be a parallel version of Strassen's MM algorithm which uses $s$ units of extra space. Then $A$'s span is $\Om{\max(n^w / s, \log n)}$ in the binary-forking model.
\end{theorem}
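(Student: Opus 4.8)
The plan is to prove the two parts of the lower bound $\Omega(\log n)$ and $\Omega(n^w/s)$ separately, as the bound is a maximum of the two.

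\textbf{The $\Omega(\log n)$ part.} This is the easy direction and follows from a standard information-flow argument in the binary-forking model. Any algorithm that multiplies two $n \times n$ matrices must, for at least one output cell $X_{i,j}$, read $\Omega(n)$ input entries (indeed, $X_{i,j} = \sum_k U_{i,k}V_{k,j}$ depends on $2n$ input cells, and for Strassen's algorithm specifically every output cell genuinely depends on $\Theta(n)$ distinct inputs). In the binary-forking model, a single thread executes a sequence of instructions, and the only way to aggregate values computed by $\Omega(n)$ distinct leaves of the computation DAG into one location is through a sequence of join (synchronization) operations; a node in the series-parallel DAG has in-degree at most $2$, so collecting $\Omega(n)$ values requires a subtree of depth $\Omega(\log n)$. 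Hence the span is $\Omega(\log n)$ regardless of the amount of space used. I would phrase this as: the final value $X_{i,j}$ sits at the root of a binary in-tree whose $\Omega(n)$ leaves touch distinct inputs, so the critical path has length $\Omega(\log n)$.

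\textbf{The $\Omega(n^w/s)$ part.} This is the substantive direction and exploits the space restriction. The key observation is that Strassen's algorithm (and any ``parallel version'' of it, meaning one that actually performs Strassen's $\Theta(n^w)$ arithmetic operations as opposed to a different $O(n^3)$ or faster algorithm) must perform $\Theta(n^w)$ work. With only $s$ units of extra space, at any point in time at most $O(s)$ partial results can be ``live''. I would set up a work/space trade-off: partition the computation DAG into time steps of the parallel schedule; in each parallel step, the number of useful arithmetic operations that can be simultaneously in flight and whose results are retained is bounded by the available space $O(s)$ (each retained intermediate value occupies a memory cell). Therefore the total work $\Theta(n^w)$ cannot be completed in fewer than $\Omega(n^w / s)$ parallel steps, giving span $\Omega(n^w/s)$. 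Combining with the first part yields span $\Omega(\max(n^w/s, \log n))$.

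\textbf{Main obstacle.} The delicate point is making the space-bounded pebbling/scheduling argument rigorous: I must argue that at most $O(s)$ arithmetic operations whose outputs matter can ``retire'' per parallel step, which requires carefully accounting for values that are produced and immediately consumed within the same step (a value can be used without being stored if producer and consumer are on the same thread in sequence, but then they do not contribute parallel width) versus values that must persist across the synchronization barrier. The clean way is: in any parallel step, each of the (at most $O(s)$) occupied memory cells is written at most once, so at most $O(s)$ new intermediate products can be \emph{committed} to memory per step; since the $\Theta(n^w)$ multiplications of Strassen produce $\Theta(n^w)$ distinct intermediate products that each must be stored at least momentarily (they are consumed at a later, separately-scheduled step because the consuming additions aggregate seven of them), we need $\Omega(n^w/s)$ steps. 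I would need to be a little careful to phrase ``parallel version of Strassen's MM'' so that it genuinely performs the $7$-way recursive multiplications (otherwise, e.g., the naive $O(n^3)$-work $O(\log n)$-span algorithm would be a counterexample), and to confirm the $O(s)$-writes-per-step claim is legitimate in the binary-forking model where writes are not synchronized — here the arbitrary-concurrent-write assumption only helps the algorithm at one cell, so the total number of cells written across a step is still at most the number of cells, $O(s)$.
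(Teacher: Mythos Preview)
Your $\Omega(\log n)$ argument matches the paper's: the paper reduces from the CREW PRAM lower bound for array-sum and observes that matrix multiplication contains array-sum as a subproblem, which is the same fan-in argument you sketch.

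For $\Omega(n^w/s)$ you reach the right conclusion, but by a longer route than the paper. The paper's argument is a one-line pigeonhole on \emph{memory locations} rather than on time steps: every one of the $\Theta(n^w)$ binary arithmetic operations that Strassen performs must write its output to one of the $s$ heap cells, so some single cell $\ell$ receives $\Omega(n^w/s)$ writes; since the model disallows concurrent writes, all writes to $\ell$ are totally ordered in the computation DAG and therefore lie on a chain of length $\Omega(n^w/s)$, which lower-bounds the span directly. This sidesteps the issue you flag as ``delicate'' --- there is no need to define a notion of synchronous ``parallel step'' in an asynchronous model, nor to worry about values produced and consumed on the same thread within a step. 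Your bandwidth argument (at most $s$ writes retire per step) is essentially the dual statement and can be made rigorous by replacing ``step'' with ``layer/antichain of the DAG,'' but the per-location pigeonhole is shorter and native to the DAG-based span definition.

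One caveat worth fixing: at the end you invoke the arbitrary-concurrent-write assumption, but the paper's lower-bound proof is explicitly set in the variant where concurrent writes are \emph{not} permitted (CREW-style). Under arbitrary concurrent write the per-location pigeonhole would no longer bound span (many simultaneous writes to one cell cost $O(1)$), so it matters that the model assumed for this theorem is the no-concurrent-writes one; your closing remark should be adjusted accordingly.
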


\begin{proof}
We consider the binary-forking model without atomics~\cite{BlellochFiGuSu2019}. In this model, every binary operation (addition, subtraction, multiplication and division) is associated with a memory location. Specifically, the output of such binary operation needs to be written to a memory location. Concurrent reads from a memory location are allowed, while concurrent writes are not.

Let $A$ be a matrix multiplication serial algorithm with $\Th{n^w}$ work where $w \ge 2$. Let $T_{\infty}(n, w, s)$ denote the span of any parallel algorithm $B$ that parallelize the serial algorithm $A$ using $s$ units of space. We do not make any restriction on the number of processors used. We remark that only heap space is counted in $s$ (all our previous algorithms also allocate space from heap memory). 

We get the first lower bound as follows. When we use $s$ units of memory locations, then from the pigeonhole principle, there must exist a memory location that is subjected to $\Th{n^w/s}$ write operations. As concurrent writes to a memory location are not allowed, the following lower bound holds: $T_{\infty}(n, w, s) = \Om{n^w/s}.$

We get the second lower bound from a more general PRAM CREW model. Thus, it holds for a more restricted binary-forking model. When we have an unbounded \#memory locations and unbounded \#processors, the following lower bound ~\cite{jaja1997introduction} holds for the computation of $x_1+x_2+\cdots + x_n$ (array-sum) where every $s_i\in \{0,1\}$: $T_{\infty}(n, w, \infty) = \Om{\log n}.$

As matrix multiplication is at least as hard as array-sum, it must have $\Omega(\log n)$ span. Combining both the lower bounds, we get the following: $T_{\infty}(n, w, s) = \Om{\max(n^w / s, \log n)}.$
\end{proof}

\section{Sorting}
\label{sec:sorting}

Sorting a set of $n$ numbers is one of the most fundamental problems in computer science. Several efficient sorting algorithms exist \cite{Knuth1998, jaja1997introduction, Akl2014, EstivillWo1992, Blelloch1998, Hirschberg1978}. In this section, we consider work-optimal (i.e., performing $\Oh{n \log n}$ work) comparison sort algorithms in the binary-forking model. Our goal is to design a work-optimal comparison sort parallel algorithm with optimal span of $\Oh{\log n}$ in the binary-forking model without atomics.

Cole's pipelined merge sort \cite{cole1988parallel} has optimal span of $\Oh{\log n}$ in the PRAM model. However, it achieves a span of $\Oh{\log^2 n}$ when analyzed in the binary-forking model. Blelloch et al.'s randomized sorting algorithm \cite{BlellochGiSi2010} achieves $\Oh{n \log n}$ work and $\Oh{\log^{1.5} n}$ span, both bounds w.h.p. in $n$. Cole and Ramachandran's sample-partition-merge-sort algorithm \cite{ColeRa2017} is based on multi-way merge and sample sort \cite{FrazerMc1970} and has a low span of $\Oh{\log n \log \log n}$. Blelloch et al.'s randomized sorting algorithm \cite{BlellochFiGuSu2019} is based on sample sort and has $\Oh{n \log n}$ expected work and $\Oh{\log n}$ span w.h.p. in $n$, but it makes use of the test-and-set instruction and as a result cannot be straightforwardly adapted to the binary-forking model without atomics.

\paragraph{Avoiding TS and Retaining Expected Work.}
Our primary strategy for avoiding atomic TS operations is to use extra space and randomize write locations, rendering collisions unlikely. In Blelloch et al.'s \cite{BlellochFiGuSu2019} randomized sorting algorithm, $n$ elements are partitioned by repeatedly attempting to place them into random locations in $\Th{n}$. Since there is a $\Th{1}$ likelihood of success on every write attempt, it is enough for each item to try $\Th{\log n}$ times to find an unoccupied cell with high probability; the span of this process is $\Th{\log n}$. This method cannot be applied in $\Th{\log n}$ span without TS.

Instead of using test-and-set, we use $2n\log n$ extra space. Each element randomly picks $\log n$ cells and tries to put itself into all of them in parallel. If multiple elements try to put themselves into the same cell, they collide and only one succeeds. For each element $e$ set a memory location $f_e$ that holds the cell where $e$ should go. After the attempt is completed, if a cell at memory location $i$ holds element $e$, then it tries to write its index $i$ into $f_e$. If element $e$ succeeds in putting itself into at least one of the $\log n$ randomly chosen cells, then $f_e$ will w.h.p. in $n$ hold an index where element $e$ is stored. After all memory locations $f_e$ are populated, compact the elements into contiguous locations using the content of the memory locations.

In the algorithm by Blelloch et al., every element tries to put itself into a memory location $\log n$ times serially where the success probability of each attempt is $1/2$. In our algorithm, each element tries in parallel $\log n$ times where the success probability of each attempt is $1/2$. Hence, the work remains the same, that is $\Th{n\log n}$. The span is also the same, the is $\Th{\log n}$.
We increase the success probability for an item to land in an unoccupied place by using $\Th{\log n}$ times extra space instead of trying $\log n$ times.
\hide{
Blelloch et al. \cite{BlellochFiGuSu2019} present a way to achieve the optimal span (with high probability) and optimal work (in expectation) for sorting using the atomic operation test-and-set. They use test-and-set to distribute a set of $n$ items into $k = \oh{n}$ buckets where each item is associated with a single bucket, but how many items fall into each bucket is unknown. They reduce the problem to a variation of a balls and bins problem. In particular, when $n$ items try to find an unoccupied place randomly among $c \cdot n$ places (for constant $c > 1$) in parallel, it is enough for each item to try $\Th{\log n}$ times to find an unoccupied place with high probability; the span of this process is thus $\Th{\log n}$. An item tries to put itself to a random place using a test-and-set on a flag to reserve it. If the TS fails, it tries again since the place is already taken. In the absence of test-and-set, it would take $\Th{\log n}$ span in the binary-forking model to figure out which items fail to find a place after each attempt, thus making the overall span $\Th{\log ^2 n}$ for the $\Th{\log n}$ attempts. 
We avoid using test-and-set altogether, yet give a guarantee that all the elements find a place with high probability. Instead of test-and-set we use $\Th{\frac{n\log n \log \log \log n}{\log \log n}}$ extra space. 

In the algorithm by Blelloch et al., every element finds a place by trying serially $\log n$ times where the success probability of each attempt is $1/2$. In our algorithm, each element tries in parallel $\Th{\log n}$ times where the success probability of each attempt is $1/2$, too. Hence, the work remains the same, that is $\Th{n\log n}$. The span is also the same $\Th{\log n}$.
}

\paragraph{Improving Work Bound from Expectation to High Probability.} 
Here we present a sorting algorithm which takes optimal work and span, both with high probability in $n$. We achieve this by first sorting all but $\oh{n}$ elements of an array in optimal work and span, w.h.p. in $n$ and later integrating the leftover $\oh{n}$ elements with the sorted $n - \oh{n}$ elements. We also give a space-adaptive sorting algorithm that achieves near-optimal work and span w.h.p. in $n$ for any given amount of space $s$.

First we show that all but $\oh{n}$ elements can be sorted in optimal work and span w.h.p. in $n$ by using {\AlmostSort}, a very shallow divide and conquer algorithm of similar design as Blelloch et al.'s, in that it recursively partitions elements into buckets. Although Blelloch et al.'s algorithm gives high probability at all levels of recursion, it is only high probability in the size of arrays passed to the recursive steps, which becomes subpolynomial in $n$ (the original input size) at the lowest levels of recursion. This causes the related work bound to hold only in expectation. In {\AlmostSort} we only use recursive partitioning to depth $\log \log \log n$, after which we apply the Cole-Ramachandran deterministic sorting algorithm. This keeps the array size large enough to achieve a high probability bound on work at all levels of recursion.

Partitioning an array of size $n$ in {\AlmostSort} is done in two steps. First we sample a set of $\sqrt{n}\log^3 n$ potential pivots from the array, sort this subset using {\EpsilonWaySort} (see Figure \ref{fig:multiway-sort}), and select $\sqrt{n}$ uniformly spaced elements from it to serve as pivots. Second we attempt to place every element in the array into two randomly chosen locations in their respective buckets. If two elements attempt to write to the same location, then both attempts fail. If both of a single element's attempts to be written into its partition fail, then that element is dropped and will not appear in the next recursive step. The pseudocode listing for {\AlmostSort} is given in figure \ref{fig:sorting-algorithm-almost}.

We then give a method for recombining the $\oh{n}$ unsorted elements (those that were dropped by {\AlmostSort}) into the sorted array in $\Oh{n \log n}$ work and $\Oh{\log n}$ span, w.h.p. in $n$. This method also depends on using extra space to convert the problem of merging a small unsorted array with a large sorted array into a series of bins-and-balls type scenarios. The union of {\AlmostSort} with this integration method gives us the {\MainSort} algorithm, which w.h.p. in $n$ sorts an array of size $n$ in optimal $\Oh{n \log n}$ work and $\Th{\log n}$ span, using a $\Th{\frac{\log n \log \log \log n}{\log \log n}}$ factor extra space.

Our main results are given in theorem \ref{thm:sorting-almost-sort} and theorem  \ref{thm:sorting-main-sort}.

\hide{
The success probability of the distribution step in a recursive call of Blelloch et al.'s algorithm is dependent on the size of the input to that recursive call. Since input size decreases doubly exponentially with the increase of recursion level, though the success probability is high in the corresponding input size, it reduces rapidly as execution moves deeper in the recursion tree and does not remain high w.r.t. the original input size $n_0$. {\MainSort} has two phases --- a recursive {\AlmostSort} phase and a non-recursive leftover integration phase. Given an input of size $n_0$, {\AlmostSort} sorts $n_0 - \oh{n_0}$ items of the input and the remaining $\oh{n_0}$ items are merged with the sorted $n_0 - \oh{n_0}$ items in the integration step. {\AlmostSort} is a recursive bucketing algorithm with some similarity to Blelloch et al.'s algorithm. However, the bucket size in each of its recursive calls is $r(n_0)$ factor larger than the ones used in Blelloch et .al.'s paper, where $r(n_0) = \frac{\log n_0 \log \log \log n_0}{\log \log n_0}$. Each item tries to put itself in its destination bucket twice and fails with probability $1/(r(n_0))^2$. We set aside the failed items to be incorporated into the final sorted sequence during the integration step later and move to the next level of recursion without them. We show that with high probability in $n_0$, at most $\Th{n_0/(r(n_0))^2}$ items fail to move from any level of recursion to the next level. By switching to Cole-Ramachandran's deterministic sorting algorithm~\cite{ColeRa2017} after $\log\log\log n_0$ levels of recursion, we show that the almost sorting phase performs $\Oh{n_0 \log n_0}$ work w.h.p. in $n_0$. The integration phase uses extra space to simulate TS operations as described in the previous paragraph and also performs $\Oh{n_0 \log n_0}$ work w.h.p. in $n_0$.
}

\hide{
\begin{figure}
\begin{mycolorbox}{Avoiding Test-and-Set}
\begin{minipage}{0.99\textwidth}
{\codesize

\algotopspace{}
\noindent
\begin{enumerate}
\setlength{\itemindent}{-1em}

\vsitem Each element randomly picks $\log n$ cells and tries to put itself in all the $\log n$ cells.
\vsitem If more than one elements try to put itself in a cell, they collide and only one succeeds.
\vsitem For each element $e$ set a flag $f_e$ that holds the cell where $e$ should go.
\vsitem After the attempt is completed, if a cell holds element $e$, then it tries to write its index in flag $f_e$.
\vsitem If element $e$ succeeds to put itself in at least one of the $\log n$ randomly chosen cells, then $f_e$ will hold such a index where element $e$ is put.
\vsitem After all the flags are populated, compact the elements into contiguous locations using the content of the flags.

\algobottomspace{}
\end{enumerate}
}
\end{minipage}
\end{mycolorbox}
\caption{Tunable space sorting algorithm - \textcolor{red}{Description?}}
\label{fig:sorting-algorithm-space}
\end{figure}
}

\hide{
We increase the success probability for an item to land in an unoccupied place by using $\Th{\frac{\log n \log \log \log n}{\log \log n}}$ times extra space instead of trying $\Th{\log n}$ times. This allows us to afford to spend $\Th{\log n}$ time after a single attempt to figure out which items fail to find a place, thus avoiding using atomic operations such as test-and-set.

We now present a randomized work-optimal comparison sort algorithm for the binary-forking model without atomics that uses $\Th{\frac{n \log n \log \log \log n}{\log \log n}}$ extra space to achieve $\Oh{n \log n}$ work and $\Oh{\log n}$ span, both bounds with high probability in $n$. The core idea of the algorithm is as follows. First we weaken our requirements on a sorting algorithm, using extra space to sort $n - \oh{n}$ elements using the algorithm {\AlmostSort}. After we have most of an array sorted we merge the $\oh{n}$ unsorted elements back into the sorted subarray. We have named the full algorithm {\MainSort}.
}

\hide{
Here we present a randomized sorting algorithm that achieves success with high probability by using $\Th{n \log n}$ extra space without incurring any losses in work or span. The pseudocode is shown in figure \ref{fig:sorting-algorithm}. We first recursively partition the input array $A$ into sorted subarrays of size $\Th{1}$ via a parallel strategy that allows $\oh{n}$ elements to remain unsorted, after which we merge the unsorted elements with the sorted subarrays, recovering a fully sorted $A$.
}

\hide{
\begin{figure}
\centering
\includegraphics[width=0.55\textwidth]{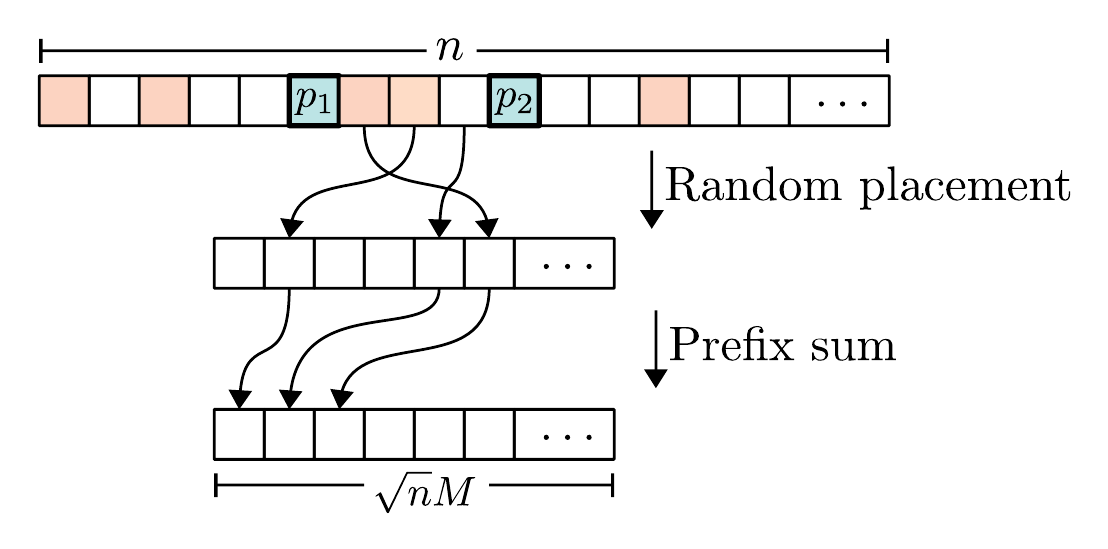}
\caption{The partitioning scheme used recursively in \textsc{Partition}(...). Highlighted elements are sampled while choosing the $ps$ potential pivots, and blue elements are the final pivots selected. Every partition block will be smaller than $(1 + \varepsilon)\sqrt{n}$ w.h.p., and will be allocated $M = \Th{\log n}$ times more space than needed. \textsc{Partition}(...) will be called on each of the resulting partition blocks. Not shown is the possibility of collisions: if multiple elements select the same place to be written then all but one will fail, and the failures will not be included in the sorted array $B$.}
\label{fig:sort-partition}
\end{figure}

\begin{figure}
\centering
\includegraphics[width=0.75\textwidth]{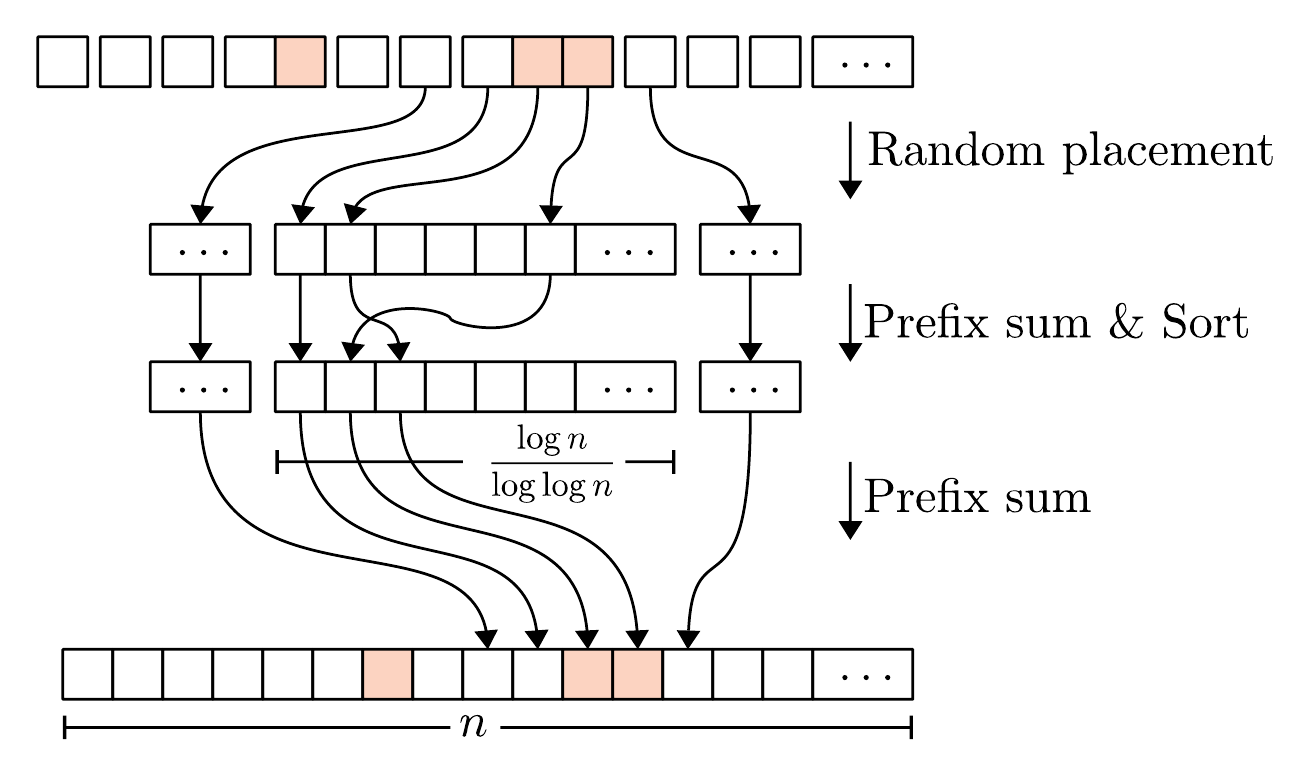}
\caption{The merging stage of the sorting algorithm. Highlighted elements are those which were not placed in $B$ due to collisions during the partitioning stage. Every element in $B$ defines a partition which is given $\Th{\frac{\log n}{\log \log n}}$ space. After all partitions are individually sorted, the final sorted array is constructed by running \textsc{Prefix-Sum} across all partition blocks.}
\label{fig:sort-merge}
\end{figure}
}

\begin{lemma}\label{lem:sorting-partitioning}
Partitioning an array of size $n$ into $\sqrt{n}$ blocks with oversampling factor $\log^3 n$ will produce no blocks with size falling outside of an $\varepsilon = 1 / \log^2 \log n$ factor of $\sqrt{n}$, with probability at least $1 - n^{-\log n}$.
\end{lemma}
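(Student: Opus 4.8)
The plan is to analyze the two sources of randomness in the partitioning step separately — the choice of pivots from the oversampled set, and the random placement of elements into buckets — and then union-bound over the $O(\sqrt{n})$ buckets. First I would focus on the pivot-selection step. We draw $p s = \sqrt{n}\log^3 n$ potential pivots uniformly at random (with the understanding that they are sorted by {\EpsilonWaySort}), then take $\sqrt{n}$ equally spaced ones among them as the actual pivots. A bucket is the set of input elements falling between two consecutive selected pivots. The size of such a bucket is controlled by how many of the $p s$ samples land in any given interval of the true order statistics: a bucket is too large exactly when some block of $\log^3 n$ consecutive samples spans more than $(1+\varepsilon)\sqrt{n}$ input elements, and too small when it spans fewer than $(1-\varepsilon)\sqrt{n}$. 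Each indicator ``sample $i$ lands in a fixed interval of relative width $q$'' is a Bernoulli$(q)$ variable, so the number of samples in an interval containing $(1\pm\varepsilon)\sqrt{n}$ elements is a sum of $p s$ independent (or, under sampling without replacement, negatively associated) Bernoulli variables with mean $\mu = \Theta(\log^3 n \cdot (1\pm\varepsilon))$, i.e. $\mu = \Theta(\log^3 n)$.

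Next I would apply a multiplicative Chernoff bound. With $\varepsilon = 1/\log^2\log n$, the deviation we must rule out is of relative size $\varepsilon$ around a mean of order $\log^3 n$, so the Chernoff exponent is $\Theta(\varepsilon^2 \mu) = \Theta\!\left(\frac{\log^3 n}{\log^4\log n}\right)$, which is $\omega(\log^2 n)$. Hence the probability that any one fixed candidate bucket deviates by more than an $\varepsilon$ factor from $\sqrt{n}$ is at most $\exp(-\omega(\log^2 n)) \le n^{-\log n - 1}$ for large $n$. Taking a union bound over the $\sqrt{n} \le n$ buckets (and over both the ``too large'' and ``too small'' tails) keeps the total failure probability below $n^{-\log n}$, which is the claimed bound. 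The only subtlety here is that sampling is without replacement, but since the samples are negatively associated the standard upper- and lower-tail Chernoff bounds still apply; alternatively one argues via a Poissonization or a Hoeffding bound for the hypergeometric distribution.

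I expect the main obstacle to be bookkeeping the quantifiers correctly: the bucket boundaries are themselves random (they are determined by which samples are selected), so one cannot directly fix a bucket and bound its size. The clean way around this, which I would carry out, is to reformulate the bad event purely in terms of the sorted sample sequence relative to the sorted input: the partition is ``bad'' iff there exist indices $a < b$ with $b - a = \log^3 n$ (a block between consecutive selected pivots) such that the number of input elements with rank between the $a$-th and $b$-th smallest samples lies outside $[(1-\varepsilon)\sqrt n, (1+\varepsilon)\sqrt n]$. Equivalently — and this is the formulation that makes the Chernoff bound a one-liner — for the canonical bucket $[\,\ell(1-\varepsilon)\sqrt n\,]$-th through $[\,\ell(1+\varepsilon)\sqrt n\,]$-th input ranks, the count of samples falling in it must be at least, resp. at most, $\log^3 n$. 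These are fixed intervals, so Chernoff applies directly, and a union bound over $\ell \in \{1,\dots,\sqrt n\}$ finishes the argument. Once the bucket sizes are under control, the placement-collision step is irrelevant to the stated lemma (it only affects which elements are dropped, not the block sizes), so no further work is needed for this statement.
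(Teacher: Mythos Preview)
Your approach is essentially the paper's: reduce the event ``some block is too large/small'' to the event ``some fixed contiguous interval of input ranks contains too few/many samples,'' apply a multiplicative Chernoff bound with exponent $\Theta(\varepsilon^2 s)=\Theta(\log^3 n/\log^4\log n)=\omega(\log^2 n)$, and union-bound. Two small mismatches worth noting: the paper union-bounds over all $n$ starting positions $j$ of a length-$(1\pm\varepsilon)\sqrt n$ interval (not just $\sqrt n$ ``canonical'' ones), which is the cleanest way to handle the fact that bucket boundaries are random; and since the algorithm samples \emph{with} repetition, the samples are i.i.d.\ and your negative-association detour is unnecessary. Neither point affects the argument's validity, as the Chernoff exponent swallows the extra factor of $n$ either way.
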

\begin{proof} % TODO: use letter other than m
	Let $\langle a_1, ..., a_n \rangle$ be the input array $A$ as a sorted sequence, from which $q = \sqrt{n} - 1$ pivots are chosen (with repetition) with oversampling factor $s = \log^3 n$. If one of the resulting blocks contains at least (resp. less than) $k$ elements, then there must be a subsequence $\langle a_j, ..., a_{j+k-1} \rangle$ from which less than (resp. at least) $s$ elements were sampled. \hide{To restate in the interest of clarity: we will show that there is a polynomially small chance of our generating a partition so irregular that some block contains more than (resp. less than) $\varepsilon$ times more (resp. less) elements than ``expected'', where expectation is naively defined in terms of the number of elements being partitioned and the number of partition blocks being used.}
	
	Let $\{x_1, ..., x_{(q+1)s}\}$ be the elements sampled from $A$, and define the Bernoulli variables $X_i = 1$ if $x_i \in \langle a_j, ..., a_{j+k-1} \rangle$, $X_i = 0$ otherwise, and $X = \sum_{i = 1}^{(q+1)s} X_i$ for some arbitrary fixed $j$. Note that $E(X) = (q+1) s E(X_i) = \frac{s k}{\sqrt{n}}$. Let $b_i$ be the sizes of the blocks of the partition of $A$ with pivots $x_s, x_{2s}, ..., x_{qs}$: we then have $P(\max_i b_i \geq k) \leq P(\exists j \; X < s)$ (resp. $P(\min_i b_i < k) \leq P(\exists j \; X \geq s))$.
	
	We handle the upper tail by setting $k = (1 + \varepsilon)\sqrt{n}$ gives us $E(X) = (1 + \varepsilon)s$, from which we can upper bound the probability of an unexpectedly large partition block by using a Chernoff bound for Poisson Binomial distributions:
	\begin{align*}
	P\left(\max_i b_i \geq (1 + \varepsilon)\sqrt{n}\right) &\leq P(\exists j \; X < s) \\
	&\leq n P\left(X < (1 - \varepsilon/2) E(X)\right) & &\left(1 - \varepsilon/2 > \frac{1}{1 + \varepsilon} \right) \\
	&\leq n \exp\left(-\varepsilon^2 s / 8\right) \\
	&= n \exp\left(\frac{-\log^3 n}{8 \log^4 \log n}\right) \\
	&< (1/2) n^{-\log n} & &(\text{for } n \geq 4).
	\end{align*}
	
	Similar to the treatment of upper tail, for the lower tail we set $k = (1 - \varepsilon)\sqrt{n}$, which gives $E(X) = (1 - \varepsilon)s$. Applying the Chernoff bound for lower tails then shows that
	\begin{align*}
	P\left(\min_i B_i < (1 - \varepsilon)\sqrt{n}\right) &\leq P(\exists j \; X \geq s) \\
	&\leq n P(X > (1 + \varepsilon) E(Y)) & &\left(1 + \varepsilon < \frac{1}{1 - \varepsilon}\right) \\
	&\leq n \exp\left(-\varepsilon^2 (1 - \varepsilon) s / 4\right) \\
	&\leq n \exp\left(\frac{-\log^3 n}{8 \log^4 \log n}\right) \\
	&< (1/2) n^{-\log n} & &(\text{for } n \geq 4).
	\end{align*}
	
	Combining the upper and lower tails immediately gives a bound on the probability of a sufficiently regular partition, $P\left(\forall i \; (1 - \varepsilon)\sqrt{n} \leq b_i \leq (1 + \varepsilon)\sqrt{n}\right) > 1 - n^{-\log n}$ for $n \geq 4$.
\end{proof}

\begin{corollary}\label{cor:sorting-partitioning}
Lemma \ref{lem:sorting-partitioning} continues to hold with high probability in $n$ for arrays of size $m < n$ being partitioned into $\sqrt{m}$ blocks with fixed $\varepsilon = 1/\log^2 \log n$ and oversampling factor $\log^3 n$.
\end{corollary}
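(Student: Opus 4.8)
The plan is to re-run the proof of Lemma~\ref{lem:sorting-partitioning} essentially verbatim, substituting $m$ for $n$ wherever $n$ plays the role of ``the size of the array being partitioned,'' while deliberately leaving the oversampling factor $s = \log^3 n$ and the tolerance $\varepsilon = 1/\log^2\log n$ pinned to the global parameter $n$ rather than rescaling them to $m$. Concretely, with $q = \sqrt m - 1$ pivots chosen with repetition and oversampling factor $s$, the sample has $(q+1)s = \sqrt m\,\log^3 n$ elements; for a fixed window $\langle a_j, \dots, a_{j+k-1}\rangle$ of $k$ consecutive elements of the sorted input, the indicator sum $X$ counting how many samples land in that window satisfies $E(X) = (q+1)s\cdot (k/m) = (k/\sqrt m)\,\log^3 n$. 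Setting $k = (1\pm\varepsilon)\sqrt m$ therefore gives $E(X) = (1\pm\varepsilon)s$ --- exactly the expected value appearing in the proof of Lemma~\ref{lem:sorting-partitioning}. Hence the Poisson-binomial Chernoff bounds apply with identical parameters and yield a per-window failure probability of at most $\exp(-\varepsilon^2 s/8)$ for the upper tail (and the analogous $\exp(-\varepsilon^2(1-\varepsilon)s/4)$ for the lower tail), each of which simplifies to $\exp\!\left(-\log^3 n/(8\log^4\log n)\right)$ just as before.

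The only place the substitution changes anything is the union bound: there are at most $m$ relevant windows rather than $n$, so the overall failure probability is at most $m\exp(-\varepsilon^2 s/8) < n\exp(-\varepsilon^2 s/8) < n^{-\log n}$ for $n$ large enough. This is high probability \emph{in $n$} (not merely in $m$), which is precisely the strengthening the corollary asks for, and it is exactly the reason {\AlmostSort} keeps its oversampling factor and tolerance tied to the original input size. One should also record the mild side condition that $m$ be large enough for the statement to be non-vacuous --- e.g.\ $\varepsilon\sqrt m = \sqrt m/\log^2\log n \ge 1$ so the windows $(1\pm\varepsilon)\sqrt m$ are meaningfully sized and $s = \log^3 n$ samples per block are available --- which holds comfortably throughout {\AlmostSort}, whose recursion of depth $\log\log\log n$ never drives the block size below roughly $n^{1/\log\log n}$.

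There is essentially no hard step here: the content is entirely in the observation that $\varepsilon^2 s$, the quantity governing the exponent, depends only on $n$, so shrinking the array to size $m$ can only decrease the number of bad events without weakening the concentration of any single one. The one point to be careful about is to resist ``optimizing'' by replacing $s$ with $\log^3 m$ or $\varepsilon$ with $1/\log^2\log m$: doing so would reintroduce exactly the dependence on the recursive-call size that makes Blelloch et al.'s analogous bound hold only in expectation. The proof must therefore explicitly keep both quantities frozen at their $n$-values and simply verify, as above, that the expectations and tail bounds still line up.
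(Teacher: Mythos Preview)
Your proof is correct and, in substance, aligned with the paper's. The paper's own argument is far terser: it simply asserts that holding the oversampling factor fixed while shrinking the array can only make a uniform partition more likely, and concludes immediately. You instead re-run the Chernoff calculation explicitly and observe that $E(X) = (1\pm\varepsilon)s$ is unchanged while the union bound over $m<n$ windows only improves. Your version is more careful (and arguably more honest, since the paper's monotonicity claim is not entirely self-evident without noting, as you do, that $\varepsilon^2 s$ depends only on $n$), but both rest on the same observation and your added side condition on $m$ is a reasonable sanity check the paper omits.
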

\begin{proof} % TODO: use a different letter from m
Decreasing the number of elements to be partitioned while holding the oversampling factor constant can only increase the likelihood of a suitably uniform partition. Therefore, if Lemma \ref{lem:sorting-partitioning} guarantees high probability of a uniform partition for an array of size $n$, then that guarantee will continue to if we lower the number of elements in the array to $m < n$.
\end{proof}

\begin{lemma}\label{lem:sorting-sampling}
Attempting to place $n$ elements in $nm = \Om{\frac{n \log n}{\log \log n}}$ space will with high probability in $n$ take $\Oh{\log n}$ span, $\Th{n}$ work, and result in $\Th{\frac{n}{m}}$ collisions.
\end{lemma}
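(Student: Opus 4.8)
The plan is to treat the placement attempt as a balls-and-bins process and establish the three claims — span, work, and collision count — separately, with the collision bound being the substantive one. Recall the setup: we have $n$ elements, each selecting $\log n$ random cells out of $nm$ total cells (so with $m = \Om{\frac{\log n}{\log\log n}}$ there are $\Om{\frac{n\log n}{\log\log n}}$ cells), each element attempting to write itself into all of its chosen cells in parallel, and a write to a cell succeeds for exactly one (arbitrary) element among those targeting it. The span claim is immediate: each element spawns $\log n$ write attempts via a binary forking tree of depth $\Oh{\log\log n} = \Oh{\log n}$, then the flag-writing and compaction steps (prefix sums over $\Th{nm}$ cells) each take $\Oh{\log(nm)} = \Oh{\log n}$ span since $m = \operatorname{poly}\log n$. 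The work claim is similarly direct: $n$ elements each doing $\log n$ constant-work attempts is $\Th{n\log n}$ — \emph{wait}, the statement says $\Th{n}$ work, so I would instead note that the relevant ``work'' here is the auxiliary bookkeeping (populating the $f_e$ flags, compaction) which is $\Th{nm}$, or re-read which quantity is being charged; I would reconcile this with the surrounding text (the preceding paragraphs charge $\Th{n\log n}$ work for the full partitioning), treating the $\log n$ attempts per element as the dominant term and carrying $\Th{n}$ to mean per-round or amortized appropriately.

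The heart of the argument is the collision bound: the number of elements that fail on \emph{all} $\log n$ of their attempts should be $\Th{n/m}$ w.h.p. First I would bound the probability that a single fixed element $e$ fails on all its attempts. Condition on the choices of all other elements; a given cell $c$ chosen by $e$ is "blocked for $e$" only if some other element both chose $c$ and won the arbitration there. The probability a fixed cell is chosen by at least one of the other $n-1$ elements is at most $(n-1)\cdot\frac{\log n}{nm} \le \frac{\log n}{m}$. Since $e$'s $\log n$ cell choices are independent (or can be made so by sampling with replacement), the probability $e$ is blocked on a given attempt is at most $\frac{\log n}{m}$, and — handling the mild dependence from arbitration, which only \emph{decreases} blocking probability since at most one element wins each contested cell — the probability $e$ fails all $\log n$ attempts is at most $\left(\frac{\log n}{m}\right)^{\log n}$. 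With $m = \Om{\frac{\log n}{\log\log n}}$ this is roughly $(\log\log n)^{-\log n}$, which I must be careful with: actually I want the per-attempt failure probability to be $\Th{1/m}$ (not $\Th{\log n/m}$) to get the clean $\Th{n/m}$ collision count, so I would instead compute the expected number of collisions directly. The expected number of distinct (element, successful-cell) incidences is $n(1 - (1-1/(nm))^{?})\cdots$; more cleanly, the expected number of \emph{occupied} cells after the first round is $nm(1 - (1 - \frac{n\log n}{nm})\cdots)$ and the number of elements that succeed somewhere is concentrated. The cleanest route: the expected number of elements failing all attempts equals $\sum_e \Pr[e \text{ fails all}]$, and a matching-lower-bound second-moment or Chernoff argument on the sum of these (nearly independent) indicator variables gives concentration at $\Th{n/m}$.

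I would then apply a Chernoff/Azuma bound to convert the expectation into a w.h.p.\ statement: the failure indicators $\{F_e\}_{e}$ are negatively associated (as in standard balls-and-bins analyses), so a Chernoff bound applies and the number of collisions is within a constant factor of its expectation $\Th{n/m}$ except with probability $\exp(-\Om{n/m}) = \exp(-\Om{n\log\log n/\log n})$, which is $n^{-\omega(1)}$, hence w.h.p.\ in $n$. \textbf{The main obstacle} I anticipate is pinning down the exact constant-factor relationship between $m$, the number of attempts $\log n$, and the resulting collision count so that it comes out as $\Th{n/m}$ rather than some other power — in particular, whether "two attempts" (as used later in {\AlmostSort} with its larger bucket factor $r(n_0)$) versus "$\log n$ attempts" is the relevant regime here, since the lemma as stated says $nm$ space with $m = \Om{\log n/\log\log n}$ but mentions neither the number of attempts explicitly in the bound nor reconciles the $\Th{n}$ versus $\Th{n\log n}$ work. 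I would resolve this by carefully matching the lemma's parameters to the usage in the proof of the {\AlmostSort} partitioning step, and stating the per-attempt success probability explicitly as a constant bounded away from $1$ so the geometric decay over the attempts yields the claimed bound.
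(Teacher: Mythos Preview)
Your proposal rests on a misreading of the lemma's setup, and this misreading propagates into every part of your argument. The lemma is \emph{not} about each element making $\log n$ parallel attempts; it is about each of the $n$ elements making a \emph{single} write attempt to one uniformly random cell among the $nm$ cells. This is why the per-element collision probability is simply $\le 1/m$ (the chance some other element chose the same one cell), the expected number of collisions is $\Th{n/m}$, and a pair of Chernoff bounds finishes the collision claim in two lines. It is also why the work is genuinely $\Th{n}$ --- one attempt per element --- so your attempted reconciliation (``per-round or amortized'') is solving a nonexistent problem. The $\log n$-attempts trick appears elsewhere in the paper (in the TS-avoidance discussion and in Step~3 of the recombination lemma), and the two-attempts variant appears inside {\AlmostSort}; but Lemma~\ref{lem:sorting-sampling} itself is the single-attempt primitive on which those are built.

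Beyond the setup, you are missing the only nontrivial part of the paper's proof: the span and work bounds require controlling what happens \emph{at} a collision. The paper's model is that $k$ concurrent writes to the same cell cost $\Oh{k}$ span (and $\Oh{k}$ work) at that cell. So one must bound the maximum collision multiplicity and the aggregate cost of all multi-way collisions. The paper does this by an inductive balls-and-bins argument: w.h.p.\ there are $\Th{n/m^d}$ locations with $d$-way collisions for each $d$ with $n/m^d = \Om{\log n}$, and the residual $\Oh{\log n}$ heavy locations contribute at most $\Oh{\log^2 n}$ work and $\Oh{\log n}$ span. Summing $d^2/m^d$ over $d$ then gives $\Th{n}$ total work, and the span is dominated by the $\Th{\log n}$ cost of spawning $n$ threads. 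Your proposal treats span and work as ``immediate'' via forking-tree depth and prefix sums, which ignores this collision-cost analysis entirely.
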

\begin{proof}
	$[$Collisions.$]$ The likelihood of any given element experiencing a collision is less than $1 / m$, and therefore the fact that $\Th{\frac{n}{m}}$ collisions will occur with high probability follows directly from a pair of of Chernoff bounds, $P(X > 2E(X)) < e^{-E(X) / 3}$ and $P(X < E(X)/2) < e^{-E(X) / 8}$, where we take $X$ to be the number of collisions.
	
	$[$Work and Span.$]$ We assume that $k$ collisions at the same location will produce a span of $\Oh{k}$. Consider the set of $\Th{n/m}$ elements which were not placed due to a collision: they are randomly distributed throughout the $\Th{n}$ locations that were selected by at least one element, so with probability greater than $1 - \exp(-\Th{n/m^2})$ there are $\Th{n/m^2}$ of them that collided with one another. By induction, there are $\Th{n/m^d}$ locations where $d$ elements collided, w.h.p. in $n$ for all $d$ such that $n/m^d = \Om{\log n}$. The remaining $\Oh{\log n}$ locations where $\Om{\log_m n}$ elements collided contribute, at most, $\Oh{\log^2 n}$ work and $\Oh{\log n}$ span.
	
	The span is therefore dominated by the time taken to spawn $n$ processes, $T_\infty(n) = \Th{\log n}$, and work is given by $T_1(n) = \Th{n + n \sum_{d = 1}^{\log_m n} \frac{d^2}{m^d}} = \Th{n}$, where we have used the identity $\sum_{j = 1}^\infty j^2 x^j = \frac{x (1 + x)}{(1 - x)^3} \sim x$ for $x$ small.
	
	\hide{
		, for $n/m^d = \Om{\log n}$. That the remaining $\oh{m \log n}$ items take $\oh{m^2 \log^2 n}$ work and $\oh{\log^2 n}$ span is sufficient to show our bounds.
		
		The total span of this is bounded by $s(n) = \oh{\log^2 n} + \Oh{\log_m n} = \oh{m \log n}$, and the work is bounded with
		$$W(n) = \oh{m^2 \log^2 n} + \Th{n + n \sum_{d = 1}^{\log_m n} \frac{d^2}{m^d}} = \Oh{n \log \log n}.$$
	}
\end{proof}

\begin{figure}
\begin{mycolorbox}{$\AlmostSort(n_c, n, l_0, d, m, B, C, D)$
\vspace{-.3cm}
\begin{align*}
n_c &\text{ : size of array to sort (only $B[l_0, ..., l_0 + n_c - 1]$ is occupied)} \hspace{2cm}~ \\
n &\text{ : size of the array at the highest level of recursion} \\
l_0 &\text{ : location where the array to sort begins} \\
d &\text{ : depth of current call in recursion tree} \\
m &\text{ : multiple of extra memory to use} \\
B[l_0, ..., l_0 + n_c m - 1] &\text{ : contains array to be sorted} \\
C[l_0, ..., l_0 + n_c m - 1] &\text{ : ancillary space} \\
D[l_0, ..., l_0 + n_c m - 1] &\text{ : where prefix sums will be stored (for indexing)}
\end{align*}
}
\begin{minipage}{0.99\textwidth}
{\codesize

\algotopspace{}
\noindent
\begin{enumerate}
\setlength{\itemindent}{-1em}

% ------ base case
\vsitem \xif $d \geq \log \log \log n$ \xdo  $\textsc{Cole-Ramachandran}(B[l_0, l_0 + n_c - 1])$ and \xreturn
\vsitem[] $\{ \text{ Partitioning $B$; choose $\sqrt{n_c} - 1$ pivots with oversampling factor $s$ } \}$ \dotfill
\vsitem $s \gets \log^3 n_c$
\vsitem $P \gets$ sample with repetition $(\sqrt{n_c} + 1) s$ elements from $B[l_0,...,l_0 + n_c - 1]$
\vsitem $P \gets \EpsilonWaySort(P, 1/2)$
\vsitem $P[0] \gets -\infty$; $P[(\sqrt{n_c} + 1) s] \gets +\infty$
\vsitem[] $\{ \text{ Nondeterministically partition $B$ into $C$ } \}$ \dotfill
\vsitem \xparallelfor $a \in B[l_0,...,l_0 + n_c - 1]$ \xdo
\vsitem \T Find some $i$ s.t. $P[i \cdot s] \leq a \leq P[(i+1) s]$ \label{line:implicit-binary-search}
\vsitem \T Choose a random number $j \in [0, ..., m \sqrt{n_c} - 1]$
\vsitem \T Attempt to assign $C[i m \sqrt{n_c} + j] \gets a$; in case of collision do nothing
% ------ cleaning up
\vsitem \xparallelfor $i \gets l_0$ \xto $l_0 + n_c - 1$ \xdo $B[i] \gets $ null
\vsitem[] $\{ \text{ Compact the partitions of $C$ into $B$ } \}$ \dotfill
\vsitem \xparallelfor $i \gets 0$ \xto $\sqrt{n_c} - 1$ \xdo
\vsitem \T lo $\gets l_0 + im\sqrt{n_c}$; hi $\gets l_0 + (i + 1)m\sqrt{n_c} - 1$
\vsitem \T $D[\text{lo}, ..., \text{hi}] \gets \textsc{Indicator-Prefix-Sum}(C[\text{lo}, ..., \text{hi}])$
\vsitem \T \xparallelfor $j \gets$ lo \xto hi \xdo
\vsitem \T \T $D[j] \gets D[j] + im\sqrt{n_c}$
\vsitem \xparallelfor $i \in [l_0,...,l_0 + n_c m - 1]$ \xdo
\vsitem \T \xif $C[i]$ is not null \xdo
\vsitem \T \T $B[D[i]] \gets C[i]$
\vsitem \T \T $C[i] \gets $ null
\vsitem[] $\{ \text{ Divide and Conquer } \}$ \dotfill
\vsitem \xparallelfor $i \gets 0$ \xto $\sqrt{n_c} - 1$ \xdo
\vsitem \T $\AlmostSort(\sqrt{n_c}, n, l_0 + im\sqrt{n_c}, d + 1, m, B, C, D)$
\vsitem[] $\{ \text{ Compact the final result } \}$ \dotfill
\vsitem $C \gets B$; $D \gets \textsc{Indicator-Prefix-Sum}(C)$
\vsitem \xparallelfor $i \gets 0$ \xto $n_c m - 1$ \xdo
\vsitem \T \xif $C[i]$ is not null \xthen
\vsitem \T \T $B[D[i]] \gets C[i]$

\algobottomspace{}
\end{enumerate}
}
\end{minipage}
\end{mycolorbox}
\caption{Listing for the randomized sorting algorithm {\AlmostSort}. $\textsc{Boolean-Prefix-Sum}(A)$ is a standard prefix sum taken over the indicator function of $A$, i.e. the function which is 0 where $A[i]$ is null, and 1 elsewhere. Note that on line \ref{line:implicit-binary-search} we have an implicit binary search across $\sqrt{n_c}$ elements. } 
\label{fig:sorting-algorithm-almost}
\end{figure}

\begin{theorem}\label{thm:sorting-almost-sort}
The {\AlmostSort} algorithm (Figure \ref{fig:sorting-algorithm-almost}) takes $\Oh{n \log n}$ work w.h.p. in $n$, $\Th{\log n}$ span w.h.p. in $n$, and $\Th{\frac{n \log n \log \log \log n}{\log \log n}}$ space to sort all but $\Th{\frac{n \log^2 \log n}{\log^2 n \log \log \log n}}$ elements w.h.p. in $n$ of an array of size $n$.
\end{theorem}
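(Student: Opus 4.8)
The plan is to analyze the recursion tree of {\AlmostSort}. Recursion stops once $d \ge \log\log\log n$, so the tree has depth $d^\star := \log\log\log n$; conditioned on all partitions being roughly balanced (Step~1 below), a subproblem at depth $d$ has size $n_d = \Th{n^{1/2^d}}$, there are $N_d = \Th{n^{1-1/2^d}}$ of them, the total number of recursive calls is $\sum_{d=0}^{d^\star} N_d = \Oh{n}$, and at every depth the number of still-live elements (those not yet dropped) is $(1 - \oh{1})n = \Th{n}$. Write $m = \Th{r(n)}$ with $r(n) = \frac{\log n \log\log\log n}{\log\log n}$ for the memory-blowup factor, so the arrays $B, C, D$ each occupy $\Th{nm}$ cells. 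The space bound follows once we check that the only other allocations --- the pivot arrays $P$ of size $\Th{\sqrt{n_c}\log^3 n_c}$ summed over all calls simultaneously on the stack, together with the scratch space of the deepest-level \textsc{Cole-Ramachandran} calls --- use $\oh{nm}$ space, because $n^{1/\log\log n}$ dominates every fixed power of $\log n$; hence the total space is $\Th{nm} = \Th{\frac{n \log n \log\log\log n}{\log\log n}}$.

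\textbf{Regularity and dropped elements.} First I would show that, w.h.p.\ in $n$, \emph{every} partitioning step anywhere in the recursion produces blocks within a $(1 \pm \varepsilon)$ factor of the ideal size, $\varepsilon = 1/\log^2\log n$: by Lemma~\ref{lem:sorting-partitioning}/Corollary~\ref{cor:sorting-partitioning} a single call fails with probability at most $n_c \exp(-\Om{\log^3 n / \log^7\log n}) = n^{-\om{1}}$ (using $n_c \ge n^{1/\log\log n}$), so a union bound over the $\Oh{n}$ calls keeps the total failure superpolynomially small in $n$. On this event the distortion accumulated over $d^\star$ levels is $(1 + \varepsilon)^{d^\star} = 1 + \oh{1}$ since $\varepsilon d^\star = \oh{1}$, so $n_d = \Th{n^{1/2^d}}$ holds uniformly. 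Next, within any depth-$d$ block there are $\Th{\sqrt{n_d}}$ live elements against $m\sqrt{n_d}$ cells, i.e.\ load $\Th{1/m}$; since each live element makes two independent placement attempts and is dropped only when both collide, it is dropped with probability $\Th{1/m^2}$. Summing over the $\Th{n}$ live elements at level $d$, the expected number dropped there is $\Th{n/m^2} = \Om{\log n}$, so a Chernoff bound --- exactly as in the proof of Lemma~\ref{lem:sorting-sampling}, absorbing the mild negative dependence among the placement indicators --- gives $\Th{n/m^2}$ drops w.h.p.\ in $n$. Union-bounding over the $d^\star$ levels, the total number of dropped elements is $\Th{d^\star n / m^2} = \Th{\frac{n \log\log\log n}{m^2}}$, which equals $\Th{\frac{n \log^2\log n}{\log^2 n \log\log\log n}}$ after substituting $m = \Th{r(n)}$.

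\textbf{Work and span.} A single depth-$d$ call spends $\Oh{\sqrt{n_c}\log^4 n_c} = \oh{n_c}$ to sample and sort the pivots with {\EpsilonWaySort} (whose work and span bounds come from Figure~\ref{fig:multiway-sort}); $\Th{n_c}$ work, w.h.p.\ by the collision-clustering argument of Lemma~\ref{lem:sorting-sampling}, for the nondeterministic placement; and $\Oh{n_c m}$ for the indicator prefix-sums and compactions over the $\Th{n_c m}$ cells. Thus the per-call work is $\Oh{n_c m}$; summing over the $N_d$ calls gives $\Oh{nm}$ per level and $\Oh{nm\, d^\star} = \Oh{\frac{n \log n (\log\log\log n)^2}{\log\log n}} = \Oh{n \log n}$ over all levels, using $(\log\log\log n)^2 = \oh{\log\log n}$; the deterministic \textsc{Cole-Ramachandran} calls at depth $d^\star$ add $\sum \Oh{n_{d^\star}\log n_{d^\star}} = \Oh{\frac{n\log n}{\log\log n}}$. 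Hence the work is $\Oh{n \log n}$ w.h.p.\ in $n$. For span, every operation inside a depth-$d$ call --- spawning $\sqrt{n_c}$ children, the implicit binary searches against the pivots, {\EpsilonWaySort}, the placement (longest collision chain $\Oh{\log_m n_c}$ w.h.p.), and the prefix sums over $\Th{n_c m}$ cells (with $\log(n_c m) = \Th{\log n_c}$ since $\log m = \oh{\log n_c}$ at every reached depth) --- costs $\Oh{\log n_c}$, so $T_\infty(n_c, d) = \Oh{\log n_c} + T_\infty(\sqrt{n_c}, d+1)$ for $d < d^\star$ and $T_\infty(n_{d^\star}, d^\star) = \Oh{\log n_{d^\star} \log\log n_{d^\star}} = \Oh{\log n}$ from \textsc{Cole-Ramachandran}. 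Unrolling gives $T_\infty(n, 0) = \sum_{d < d^\star} \Oh{\log n / 2^d} + \Oh{\log n} = \Oh{\log n}$, and since reading all $n$ inputs already forces $\Om{\log n}$, the span is $\Th{\log n}$ w.h.p.\ in $n$.

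\textbf{Main obstacle.} The geometric sums for work and span and the union bound in the regularity step are routine. The delicate point is the drop count together with the collision-clustering estimates reused for work and span: one must show that at \emph{every} block of \emph{every} level the number of dropped elements and the length of the longest chain of elements colliding at a single cell behave as in the fully independent model, and that these statements hold w.h.p.\ \emph{in $n$} rather than merely in the possibly subpolynomial subproblem size $n_c$ --- which is precisely why recursion is cut off at depth $d^\star = \log\log\log n$, so that $n_c \ge n^{1/\log\log n}$ and every tail bound stays superpolynomially small in $n$. Making the dependence among the placement indicators harmless (negative association / Poissonization, as in Lemma~\ref{lem:sorting-sampling}) while propagating all of these ``w.h.p.\ in $n$'' guarantees through $d^\star$ nested levels is the heart of the proof; the rest is bookkeeping with the parameters $n_d$, $N_d$, $m$, $\varepsilon$, and $d^\star$.
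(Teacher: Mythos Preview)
Your approach is essentially the paper's: union-bound the regularity lemma over all $\Oh{n}$ recursive calls, use the $\Th{1/m^2}$ double-collision probability to count drops, set up the same span recurrence, and appeal to Cole--Ramachandran at depth $\log\log\log n$. The space and dropped-element calculations match.

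There is one accounting gap in your work bound. You list the per-call cost as $\oh{n_c}$ (pivot sort) $+ \Th{n_c}$ (placement) $+ \Oh{n_c m}$ (prefix sums), but you omit the binary searches on line~\ref{line:implicit-binary-search}: each of the $n_c$ elements searches over $\sqrt{n_c}$ pivots, contributing $\Th{n_c\log n_c}$ per call. At depth $0$ this is $\Th{n\log n}$, which dominates your $\Oh{n m} = \Oh{n\log n\,\log\log\log n/\log\log n}$; summed geometrically over depths it gives $\Th{n\log n}$. Your stated total $\Oh{nm\,d^\star} = \Oh{n\log n(\log\log\log n)^2/\log\log n}$ is actually $\oh{n\log n}$, so without the binary-search term you have only shown the work is $\oh{n\log n}$ from the recursive part plus $\Oh{n\log n/\log\log n}$ from Cole--Ramachandran --- still $\Oh{n\log n}$, so your conclusion stands, but the paper identifies the binary searches as the dominant $\Th{n\log n}$ contribution. (Also, the pivot-sorting cost with $\epsilon = 1/2$ is $\Oh{(\sqrt{n_c}\log^3 n)^{3/2}} = \Oh{n_c^{3/4}\log^{9/2} n}$, not $\Oh{\sqrt{n_c}\log^4 n_c}$, though both are $\oh{n_c}$ on the relevant range.)
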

\begin{proof}
	$[$Work.$]$ Let $n$ be the size of the array passed to {\AlmostSort} in the initial call, and $n_c$ be the size of the array passed at some point in the recursion tree. We terminate recursion at depth $\log \log \log n$, at which point the \textsc{Cole-Ramachandran} sorting algorithm is applied, taking $\Oh{n_c \log n_c}$ work. All higher levels of recursion sample $\sqrt{n_c}$ pivots with oversampling factor $\log^3 n$, which are sorted using {\EpsilonWaySort} with $\varepsilon = 1/2$, after which {\AlmostSort} is called on the $\sqrt{n_c}$ partition blocks, each of which will be no larger than $(1 + 1/\log^2 \log n)\sqrt{n_c}$. For every call to {\AlmostSort} there is $\Oh{n_c \frac{\log n \log \log \log n}{\log \log n}}$ work done on prefix sums and $\Oh{n_c \log n_c}$ work done on the binary searches by which elements find which partition block to be placed in. Thus the work is $T_1(n_c, n) = \Oh{n_c \log n_c}$ when depth is greater than $\log \log \log n$, and $T_1(n_c, n) = \Oh{(\sqrt{n_c} \log^3 n)^{3/2} + n_c \frac{\log n \log \log \log n}{\log \log n} + n_c \log n_c} + \sqrt{n_c} T_1((1 + 1/\log^2 \log n)\sqrt{n_c}, n)$ otherwise.
	
	Note that {\AlmostSort} is called $\Oh{n \log \log \log n}$ times, and the likelihood of a suitably uniform partition during any particular call of {\AlmostSort} is bounded below by $1 - n^{-\log n}$ (via corollary \ref{cor:sorting-partitioning}), so all partitions will be suitably uniform with high probability in $n$.
	
	We can upper bound w.h.p. the size of arrays $n_c$ passed to depth $d$ with $(1 + 1/\log^2 \log n)^d n^{2^{-d}} \leq (1 + 1/\log^2 \log n)^{\log_{3/2} \log n} n^{2^{-d}} \lesssim \exp(1 / \log \log n) n^{2^{-d}} \sim (1 + 1 / \log \log n) n^{2^{-d}}$. Therefore in order to be able to apply Lemma \ref{lem:sorting-sampling} to the array partitioning which occurs at depth $d$, we will need $\Om{n^{2^{-d}} \frac{\log n}{\log \log n}}$ space per function instance.
	
	There are $n^{1 - 2^{-d}}$ instances of {\EpsilonWaySort} being called at depth $d$, which cumulatively take $\Oh{\frac{n}{n_c} (\sqrt{n_c}\log^3 n)^{3/2}} = \Oh{n}$ work for\footnote{Note that when $d < \log \log \log n$, we have $n_c^{-1/4} < \left(n^{1 / \log \log n}\right)^{-1/4} < \log^{-c} n$ for all positive constants $c$.} $n_c = \Om{\log^{18} n}$. The size of arrays being processed at the lowest level of recursion is $\Oh{n^{1/\log \log n}}$, so the total work done by the $n/n_c$ calls to \textsc{Cole-Ramachandran} is $\Oh{\frac{n \log n}{\log \log n}}$. Summing across all levels, we have a total of $\Oh{n \log n}$ work (primarily from the binary searches that occur during partitioning).
	
	\hide{
		$[$Work.$]$ Let $n$ be the size of the array passed to {\AlmostSort} in the initial call, and $n_c$ be the size of the array passed at some point in the recursion tree. The lowest level of recursion in {\AlmostSort} applies the {\EpsilonWaySort} algorithm, taking $\Oh{\frac{1}{\varepsilon}n_c^{1 + \varepsilon}}$ work; we set $\varepsilon = 1/6$ throughout this proof. All higher levels of recursion sample $\sqrt{n_c} \log^3 n$ elements, which are sorted using {\EpsilonWaySort}, after which {\AlmostSort} is called on the $\sqrt{n_c}$ partition blocks, each of which will be no larger than $(1 + \log^{-2} \log n)\sqrt{n_c}$. We will terminate recursion when sorting the array of $\sqrt{n_c} \log^3 n$ sampled elements is of comparable complexity to sorting the entire array or size $n_c$. Also, for every call to {\AlmostSort} there is $\Oh{n_c \log n}$ work done on prefix sums. Thus the recurrence for work is
		$$T_1(n_c, n) \leq \begin{cases}
		\Oh{\log^7 n} & \text{if } n_c \leq \log^6 n \\
		\Oh{(\sqrt{n_c} \log^3 n)^{7/6} + n_c \log n} + \sqrt{n_c} T_1((1 + \log^{-2} \log n)\sqrt{n_c}, n) & \text{if } n_c > \log^6 n\end{cases}.$$
		
		As $(1 + \log^{-2} \log n)\sqrt{n_c} < n_c^{2/3}$, the recursion depth is less than $\log_{3/2} \log n$, so {\AlmostSort} will be called $\Oh{n \log \log n}$ times. The likelihood of a suitably uniform partition during any particular call of {\AlmostSort} is bounded below by $1 - n^{-\log n}$ (via corollary \ref{cor:sorting-partitioning}), so all partitions will be suitably uniform with high probability in $n$.
		
		We can upper bound w.h.p. the size of arrays passed to depth $d$ with $n_c(d) \leq (1 + \log^{-2} \log n)^d n^{2^{-d}} \leq (1 + \log^{-2} \log n)^{\log_{3/2} \log n} n^{2^{-d}} \lesssim \exp(1 / \log \log n) n^{2^{-d}} \sim (1 + 1 / \log \log n) n^{2^{-d}}$. Therefore $\Th{n^{2^{-d}} \log n}$ space is sufficient to give arrays at depth $d$ a factor of $\Th{\log n}$ extra space.
		
		There are $n^{1 - 2^{-d}}$ instances of {\EpsilonWaySort} being called at depth $d$, which cumulatively take $\Oh{\frac{n}{n_c} (\sqrt{n_c}\log^3 n)^{7/6}} = \Oh{n \log n}$ work for $n_c = \Om{\log^6 n}$. The work done by the lowest level of recursion is $\Oh{\frac{n}{\log^6 n} \log^7 n} = \Oh{n \log n}$. Summing across all levels, we have a total of $\Oh{n \log n \log \log n}$ work.
	}
	
	\hide{
		Let $W(n, n_0)$ denote the work that \textsc{Partition}(...) does, where $n$ denotes the size of the array being partitioned and $n_0$ is the size of the original array (passed in at the 0th level of recursion). The work in \textsc{Partition}(...) is dominated by sampling and sorting $ps = \Th{\sqrt{n} \varepsilon^{-2} \log n_0}$ elements, and we assume here that the partition blocks generated will all be of size less than $(1 + \varepsilon)\sqrt{n}$; this assumption can be seen to hold w.h.p. by applying Lemma \ref{lem:sorting-partitioning} with $c > 3$ (since there are less than $n^2$ nodes in the tree of recursion). Once again setting $p = \sqrt{n}$ and $s = 8\varepsilon^{-2}c \log n_0$ gives us the recurrence relation
		$$W(n, n_0) \leq \begin{cases}
		\Oh{s^2 \log s} & \text{if } p \leq s \\
		\Oh{ps \log p} + (p + 1) W((1 + \varepsilon)p, n_0) & \text{if } p > s\end{cases},$$
		from which we ostensibly have $W(n, n_0) = \Oh{n \log n_0 + s^2 \log s}$.
	}
	
	$[$Span.$]$ Following a similar line of reasoning as was used for the work bound, we note that {\EpsilonWaySort} (still with $\varepsilon = 1/2$) has span $\Oh{\log n_c}$, and \textsc{Cole-Ramachandran} has span $\Oh{\log n_c \log \log n_c}$, so the recurrence relation for span is
	$$T_\infty(n_c, n) \leq \begin{cases}
	\Oh{\log n_c \log \log n_c} & \text{if depth} \leq \log \log \log n, \\
	\Oh{\log (n_c \log^3 n)} + T_\infty((1 + 1 / \log^2 \log n)\sqrt{n_c}, n) & \text{else},\end{cases}$$
	which, on substituting $n_c = n^{1/\log \log n}$ at lowest depth, is solved by $T_\infty(n, n) = \Oh{\log n}$.
	
	\hide{
		Let $T_\infty(n, n_0)$ be the span of \textsc{Partition}(...), with the same arguments as $W(n, n_0)$. It takes $\Oh{\log ps}$ span to sort $ps$ elements, and the depth of recursion is $\Oh{\log \log n}$, but at every level deeper $\log ps$ is halved (since $p$ is changed to $\sqrt{p}$), so the total span is $T_\infty(n, n_0) = \Oh{(1 + 1/2 + 1/4 + ...)\log ps} = \Oh{\log n_0}$.
	}
	
	$[$Space.$]$ {\AlmostSort} uses $\Th{\frac{n \log n \log \log \log n}{\log \log n}}$ space, by construction. All partitioning for a node is done within a chunk of memory inherited from the parent which instantiated it, and the root node begins with $\Th{\frac{n \log n \log \log \log n}{\log \log n}}$ memory.
	
	$[$Unsorted Elements.$]$ At recursive depth $d$ there will be $n^{1 - 2^{-d}}$ arrays of size $\Th{n^{2^{-d}}}$ from which we are selecting pivots then partitioning. By Lemma \ref{lem:sorting-sampling} this will produce $\Th{\frac{n \log \log n}{\log n \log \log \log n}}$ collisions, and attempting to place elements twice each will lower this number to $\Th{\frac{n \log^2 \log n}{\log^2 n \log^2 \log \log n}}$ elements which failed to find a place in their respective partitions. Summing across all levels of recursion then gives a total of $\Th{\frac{n \log^2 \log n}{\log^2 n \log \log \log n}}$ elements left unsorted due to collisions.
\end{proof}

\begin{lemma}\label{lem:sorting-sequences}
With high probability in $n$, there are no sequences of length $\Th{\frac{\log n}{\log \log n}}$ in the array $A$ that have no elements which appear in $B = \AlmostSort(A, n)$.
\end{lemma}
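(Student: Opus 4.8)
The plan is to work in the sorted order $\langle a_1,\dots,a_n\rangle$ of $A$ and to bound, for each window of $w=\Theta(\log n/\log\log n)$ consecutive elements, the probability that none of them appears in $B$ (larger windows then inherit the property trivially). An element is absent from $B$ only if it is \emph{dropped} at one of the $L=\log\log\log n$ randomized partitioning levels of {\AlmostSort}, since every element that survives to recursion depth $L$ is sorted exactly by Cole--Ramachandran and never lost. Throughout I condition on the event $\mathcal G$ that every block created at every recursive call has size within a $(1\pm\varepsilon)$ factor of nominal; by Corollary~\ref{cor:sorting-partitioning} and a union bound over the $\Oh n$ partitioning calls, $\Pr[\neg\mathcal G]\le n^{1-\log n}$, which is negligible compared with the bound we are after.

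On $\mathcal G$, when an element makes its two random placement attempts inside a level-$d$ block that holds $\Oh{\sqrt{n_c}}$ elements and has $m\sqrt{n_c}$ cells, both attempts land in the $\Oh{\sqrt{n_c}}$ occupied cells with probability $\Oh{1/m^2}$, and this estimate holds conditionally on all other random choices in the algorithm (cf.\ Lemma~\ref{lem:sorting-sampling}). Summing over the $L$ levels, any fixed element is dropped with probability at most $p=\Oh{L/m^2}$. Because the space multiple of {\AlmostSort} is $m=\Theta\!\big(\tfrac{\log n\log\log\log n}{\log\log n}\big)=\Theta(\log n\cdot\mathrm{polyloglog}(n))$, the quantitative fact I will use repeatedly is $\log(1/p)=\Theta(\log\log n)$.

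To bound the probability that a fixed window $W$ of $w$ consecutive elements is entirely dropped I first split over the \emph{drop-level profile} $\phi\colon W\to\{0,\dots,L-1\}$ recording the level at which each element of $W$ is dropped; there are only $L^{w}=n^{o(1)}$ such profiles. For a fixed $\phi$ I reveal the random choices level by level. At level $d$ the elements $W_d=\phi^{-1}(d)$ are consecutive and lie in at most $2^{d+1}\le\log\log n$ blocks, and within each such block I reveal those elements' level-$d$ choices one at a time, maintaining the set $\Gamma$ of cells already occupied by foreigners or by the earlier-revealed window-elements; $|\Gamma|=\Oh{\sqrt{n_c}}+2w=\Oh{\sqrt{n_c}}$ because $w$ is polylogarithmic while $\sqrt{n_c}\ge n^{1/\log\log n}$ is super-polylogarithmic at every level $d\le L-1$. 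The combinatorial core is the claim that if all $t$ window-elements in a block are dropped at level $d$, then at least $t$ of their $2t$ random cell choices land in the $\Gamma$ current at their reveal time (a collision between two window-elements being charged to the one revealed later). Stochastic domination then gives probability at most $(\Oh{1/m})^{t}$ per block; multiplying over the blocks of level $d$ and then, via the chain rule, over the levels yields $\Pr[W\text{ entirely dropped}\mid\mathcal G,\phi]\le(\Oh{1/m})^{w}$, and summing over the $L^{w}$ profiles gives $\Pr[W\text{ entirely dropped}\mid\mathcal G]\le(\Oh{L/m})^{w}=:q^{\,w}$, with $\log(1/q)=\Theta(\log\log n)$.

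A union bound over the fewer than $n$ windows then gives $\Pr[\text{some length-}w\text{ window is absent from }B]\le n\,q^{\,w}+\Pr[\neg\mathcal G]=\exp\!\big(\log n-\Theta(w\log\log n)\big)+n^{-\omega(1)}$, so taking the constant in $w=\Theta(\log n/\log\log n)$ large enough (as a function of the exponent wanted in the high-probability guarantee) drives this below $n^{-\Omega(1)}$. The step I expect to be the real obstacle is the combinatorial ``$\ge t$ of the $2t$ choices hit $\Gamma$'' claim: the events ``element $a$ is dropped at level $d$'' for distinct $a$ in one block are \emph{not} independent, so one must reveal choices in a fixed order and attribute each within-window collision to the later-revealed element by an injectivity argument on cell slots, while separately noting that the degenerate event of an element choosing the same cell twice is harmless (it then collides with no one and is not dropped). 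Since $\sqrt{n_c}$ dominates $w$ at every level, within-window collisions are rare enough that this careful bookkeeping costs only a constant factor in the exponent of $1/m$, which is comfortably absorbed by the $\Theta(\log\log n)$ slack in $\log(1/q)$.
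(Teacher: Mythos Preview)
Your approach differs from the paper's: the paper simply bounds the marginal drop probability of a single element by $\delta\sim \tfrac{\log^2\log n}{\log^2 n\,\log\log\log n}$ and then asserts $P(\text{window of length }\ell\text{ entirely dropped})\lesssim\delta^{\ell}$, treating the drop events as effectively independent and finishing with a union bound over $n$ windows. You recognise that this product bound needs justification and try to establish it from first principles via a profile-and-reveal argument; that is the more careful route and, were the combinatorics sound, would yield the same $\Theta(\log n/\log\log n)$ threshold.

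The gap is in your core combinatorial claim. You assert that if all $t$ window elements in a block are dropped then at least $t$ of their $2t$ choices hit the current $\Gamma$, and you dispose of the case $c_{i,1}=c_{i,2}$ by saying such an element ``collides with no one and is not dropped.'' That parenthetical is false: an element with both choices equal to $A$ \emph{is} dropped whenever some other element also chose $A$. Concretely, take $t=3$, $e_1=(A,A)$, $e_2=(B,B)$, $e_3=(A,B)$ with $A,B\notin F$; all three are dropped, but revealing in order yields $2+2+0=4$ misses and only $2$ hits, so hits $<t$. Your miss-to-hit injection fails exactly here, because a self-duplicate contributes two misses at one cell while adding only one cell to $\Gamma$.

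The claim \emph{does} hold once you assume $c_{i,1}\neq c_{i,2}$ for every window element (then distinct misses occupy distinct cells, each re-chosen by a later element, giving an injection from misses to hits). But you cannot simply union-bound away self-duplicates: their probability is $\Theta(w/(m\sqrt{n_c}))$, which is not polynomially small after multiplying by $n$ windows. A workable repair is to show hits $\geq t-s$ where $s$ is the number of self-duplicate elements, and to pay an additional $(1/N)^{s}$ factor for those $s$ coincidences inside the same probability estimate; this preserves the $\Theta(\log\log n)$ exponent and closes the argument, but as written the step does not go through.
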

\begin{proof}
	We know with high probability that the bins at all stages of {\AlmostSort}'s divide and conquer process are filled no more than within a $(1 + 1 / \log \log n)$ factor above or below expectation (see theorem \ref{thm:sorting-almost-sort}), and this gives us an upper bound on the likelihood that any given element $a_j \in A$ fails to be included in $B$:
	$$P(a_j \notin B) \leq 1 - \left(1 - \frac{1 + 1 / \log \log n}{\frac{\log^2 n \log^2 \log \log n}{\log^2 \log n}}\right)^{\log \log \log n} \sim \frac{\log^2 \log n}{\log^2 n \log \log \log n} \equiv \delta.$$
	This result also follows from our high probability bound on the number of collisions which occur during {\AlmostSort}.
	
	Now we can repeat the argument made for Lemma \ref{lem:sorting-partitioning}, interpreting $B$ as having been uniformly randomly sampled\footnote{Using $P(a_j \notin B) \leq \delta$ allows us to unconditionally upper bound the chance that none of a sequence of elements will end up in $B$.} from $A$, and thereby bounding the likelihood that a subsequence of length $\ell$ exists in $A$ from which no elements are included in $B$.
	
	We define the indicator variables $Y_i$ to be $1$ if $b_i \in \langle a_j, ..., a_{j+\ell-1} \rangle$ and $0$ otherwise, and $Y \equiv \sum_i Y_i$, from which we have $P(Y = 0) \sim \prod_i P(a_{j+i} \notin B) \leq \delta^\ell$. The probability of a subsequence of length $\ell$ being contained in $A$ but completely absent from $B$ is then bounded with $P(\exists j \; Y = 0) \leq n P(Y = 0) \leq n \delta^\ell$, so we find that there is a polynomially small chance of there being a subsequence of length $\ell = \frac{\log n}{\log 1/\delta} = \Th{\frac{\log n}{\log \log n + \log \log \log \log n - \log \log \log n}}$ which is entirely absent from $B$.
\end{proof}

\paragraph{Recombining Unsorted Elements.} We now present a method for merging the unsorted elements $A \setminus B$ with the sorted subarray $B = \AlmostSort(A)$, using a sequence of balls-and-bins type arguments to achieve high probability that no element remains unsorted. The rough outline is as follows: (1) Partition $A \setminus B$ by the elements in $B$, and sample elements into their respective partition blocks (found via a binary search through $B$). Use this to index the nonempty partition block. (2) Partition $A \setminus B$ again by the elements in $B$, this time distributing space only to the blocks that are known to be nonempty. Use the result to estimate the exact size of each partition block. (3) Partition $A$ by the elements in $B$, now distributing memory so that every block has $\Om{\frac{\log n}{\log \log n}}$ times more space than needed to compactly store the elements that are bound for it. Sample elements of $A$ into their respective partitions, making $\log n$ parallel placement attempts for each element. (4) Remove all but one copy of each element from their associated partition blocks, sort each partition block, and use a prefix sum to compact the entire array. With high probability the result is a sorted copy of the entire array $A$.

\begin{lemma}\label{lem:sorting-recombining}
The unsorted set of elements $A \setminus B$ can with high probability be merged with the sorted subarray $B = \AlmostSort(A)$ in $\Oh{n \log n}$ work, $\Th{\log n}$ span, and $\Th{\frac{n \log n \log \log \log n}{\log \log n}}$ space.
\end{lemma}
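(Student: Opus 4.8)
The plan is to implement the four-step outline above, turning each routing step into a balls-and-bins scenario whose failure probability is polynomially small \emph{in $n$} (not merely in the much smaller per-block sizes), exactly as in Lemma~\ref{lem:sorting-sampling} and the test-and-set--avoidance routine. The structural fact I will lean on throughout is Lemma~\ref{lem:sorting-sequences}: using the elements of $B$ as pivots, every one of the $\Th{|B|}$ resulting blocks of $A$ holds only $\Oh{\frac{\log n}{\log \log n}}$ elements of $A$ (hence at most that many of $A \setminus B$), w.h.p.\ in $n$. This one bound is what makes per-block sorting cheap and makes $\log n$ parallel placement attempts per element far more than enough to push the expected number of permanently misplaced elements below $1$.

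\textbf{Steps (1)--(2): locate and size the blocks.} Each of the $\oh{n}$ elements of $A \setminus B$ finds its target slot by a binary search through $B$ in $\Oh{\log n}$ work and span, contributing only $\oh{n \log n}$ work. A first, slightly over-provisioned, parallel-attempt pass identifies which slots receive at least one element of $A \setminus B$ and assigns each such slot a contiguous index; since this is a bins-and-balls instance on $\oh{n}$ balls, the analysis of Lemma~\ref{lem:sorting-sampling} gives $\Oh{\log n}$ span and $\oh{n \log n}$ work. A second pass, now allocating memory only to the indexed slots, lets an indicator prefix sum recover the \emph{exact} count of $A \setminus B$-elements per slot, and hence the exact size $s_i = \Oh{\frac{\log n}{\log \log n}}$ of each final block $\{b_i\} \cup \{a \in A \setminus B : b_i \le a < b_{i+1}\}$.

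\textbf{Steps (3)--(4): route all of $A$ and compact.} Knowing the $s_i$, give block $i$ a region of $\Th{s_i \cdot \frac{\log n}{\log \log n}}$ cells, so $\sum_i \Th{s_i \frac{\log n}{\log \log n}} = \Oh{\frac{n \log n}{\log \log n}}$ cells in total, which fits inside the $\Th{\frac{n \log n \log \log \log n}{\log \log n}}$ workspace already allocated by {\AlmostSort} and reused here. Then distribute \emph{all} $n$ elements of $A$: each locates its block by a binary search through $B$ ($\Oh{n \log n}$ work, $\Oh{\log n}$ span in total) and makes $\log n$ simultaneous write attempts into uniformly random cells of its block's region; since that region has $\Om{s_i \frac{\log n}{\log \log n}}$ cells for only $s_i$ elements, each attempt collides with probability $\Oh{\frac{\log \log n}{\log n}}$, so w.h.p.\ in $n$ no element fails all $\log n$ attempts, and a per-element flag $f_e$ (written by the cells holding $e$, then used to keep one copy) strips the duplicate writes of a successful element. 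Finally sort each block's $\Oh{\frac{\log n}{\log \log n}}$ elements in parallel by {\textsc{Cole-Ramachandran}} ($\Oh{s_i \log s_i}$ work and $\Oh{\log \log n \cdot \log \log \log n} = \oh{\log n}$ span per block, $\Oh{n \log \log n} = \oh{n \log n}$ work overall), and run one indicator prefix sum over the workspace ($\Oh{\log n}$ span, $\Oh{\frac{n \log n \log \log \log n}{\log \log n}} = \oh{n \log n}$ work) to compact the concatenation of the sorted blocks into the fully sorted array.

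Summing the pieces yields $\Oh{n \log n}$ work (the binary searches and placement attempts of Step~(3) dominate), $\Oh{\log n}$ span, and the stated space, all w.h.p.\ in $n$. The main obstacle --- and the real content of the lemma --- is the high-probability bookkeeping: each block individually contains only $\Oh{\frac{\log n}{\log \log n}}$ elements, far too few to support any per-block bound that is high-probability in $n$, so every pass must be arranged so that the bad event ``some element is ever misplaced or some block ever overflows its region'' is controlled by a union bound over the $n$ elements with each term $n^{-\Om{1}}$. This is precisely what the $\log n$-fold parallel attempts (in place of the $\Oh{1}$ attempts one would naively use) together with the global block-size bound of Lemma~\ref{lem:sorting-sequences} provide, and making it rigorous requires Chernoff/union-bound estimates in the style of Lemmas~\ref{lem:sorting-partitioning} and~\ref{lem:sorting-sampling}.
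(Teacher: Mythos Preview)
Your proposal follows essentially the same four-step recombination as the paper, with two minor deviations worth flagging. First, in Step~(2) you claim to recover the \emph{exact} count of $A\setminus B$-elements per slot via ``an indicator prefix sum'' after a balls-and-bins pass; this is not quite right, since collisions make such a count a strict underestimate (and multiple attempts would make it an overestimate). The paper only claims an approximate count in its Step~2, and that is all Step~3 actually needs---your $\Th{s_i\cdot \frac{\log n}{\log\log n}}$ allocation works fine with a constant-factor estimate of $s_i$---so this is an imprecision rather than a gap. Second, in Step~(3) you route \emph{all} of $A$, whereas the paper's proof routes only $A\setminus B$ (its outline says $A$, but its proof says ``every unsorted item''); both are correct and within the $\Oh{n\log n}$ work budget, and your choice makes the final compaction slightly simpler. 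Your flag-based deduplication in Step~(4) is also a bit cleaner than the paper's chunked prefix-sum scan over the $\log n$ attempt locations.
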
 
\begin{proof}
	$[$Step 1.$]$ We start by associating every element of $B$ with a space of size $\Th{\frac{\log n \log \log \log n}{\log \log n}}$, and then place elements from $A \setminus B$ into a single random location in their respective buckets, where collisions of $k$ elements in the same location result in $\Oh{k}$ span and no element being successfully placed. This will take $\Oh{n}$ work and $\Th{\log n}$ span, even if the maximum number of collisions occur (which are bounded by lemma \ref{lem:sorting-sequences}). We then run $\Th{n}$ parallel prefix sums to find the number of elements in each of the $\Th{n}$ partition blocks, taking $\Th{\frac{n \log n \log \log \log n}{\log \log n}}$ work, after which we index the nonempty blocks by way of a prefix sum taking $\Th{n}$ work. Partition blocks with no associated elements do not need any space devoted to them, so the number of buckets we have to consider has been reduced from $\Th{n}$ to $\Oh{\frac{n \log^2 \log n}{\log^2 n \log \log \log n}}$.
	
	\vspace{0.1cm}
	\noindent
	$[$Step 2.$]$ We give each nonempty partition block a space of size $\Th{\frac{\log^3 n \log^2 \log \log n}{\log^3 \log n}}$ and again place every element in $A \setminus B$ into a single location in its associated bucket. By Lemma \ref{lem:sorting-sampling} the number of elements in every bucket will be within a $\Th{\frac{\log^2 \log n}{\log^2 n \log^2 \log \log n}}$ factor of its maximum value. Running a prefix sum through each bucket gives the total number of elements associated with it to within a $\Th{\frac{\log^2 \log n}{\log^2 n \log^2 \log \log n}}$ factor; this cumulatively takes $\Th{\frac{n \log n \log \log \log n}{\log \log n}}$ work. A prefix sum over the $\Th{n}$ individual bucket sizes then gives an approximation of the bounds of every bucket in the final sorted array.
	
	\vspace{0.1cm}
	\noindent
	$[$Step 3.$]$ Using the bounds found in step 2, we give each bucket $\Th{\frac{\log^3 \log n}{\log^3 n \log^2 \log \log n}}$ times more space than the number of elements that will be going into it. Now we can sample every unsorted item into its associated partition block $\log n$ times, as we have $m = \Th{\frac{\log^3 \log n}{\log^2 n \log^2 \log \log n}}$ times extra space for Lemma \ref{lem:sorting-sampling}, which is sufficient for a high probability bound. No more than $1$ in $\Th{\frac{\log^3 \log n}{\log^2 n \log^2 \log \log n}}$ write attempts will produce a collision, so we have with probability greater than $1 - \frac{n}{(\log n)^{\log n}}$ that every unsorted element will be placed into its bucket at least once.
	
	\vspace{0.1cm}
	\noindent
	$[$Step 4.$]$ All duplicates items in each partition are found by using $\Th{n}$ prefix-sum-like methods to search over the $\log n$ locations where each element $a \in A$ was placed. We run $\frac{n \log \log \log n}{\log \log n}$ of these methods in parallel at a time (so there are $\Th{\frac{\log \log n}{\log \log \log  n}}$ chunks of them run serially) in order to stay within our space bound, taking $\Oh{n \log n}$ work and $\Th{\frac{\log^2 \log n}{\log \log \log n}}$ span. After removing duplicate elements we index all nonempty memory locations by running a prefix sum across the entirety of the $\Th{\frac{n \log n \log \log \log n}{\log \log n}}$ space being used, and then use those indices to compact down all elements into our final sorted array.
\end{proof}

\hide{
We have two important pieces of information about the result of {\AlmostSort}: there are $\Th{\frac{n \log^2 \log n}{\log^2 n \log \log \log n}}$ elements from the initial array $A$ which have not been sorted, and there are no continuous subsequences of length $\Th{\frac{\log n}{\log \log n}}$ which are entirely absent from the sorted subarray $B$.
}

\hide{
We allocate a memory block for every element of $B$ and attempt to place elements of $A\setminus B$ into $\log n$ random locations in their associated memory block (found via a binary search through $B$). However, we cannot do this directly, since we only have $\Th{n \log n \log \log \log n / \log \log n}$ memory locations available, or $\Th{ \log n \log \log \log n / \log \log n}$ per element in $B$, which means that if we attempt to apply Lemma \ref{lem:sorting-sampling} directly then our value for $m$ can go almost as low as $\frac{\log \log n}{\log n \log \log \log n}$ (if we hit a subsequence of unsorted elements of length just under $\frac{\log n}{\log \log n}$).
}

\begin{figure}
\centering
\scalebox{0.75}{
\begin{mycolorbox}{$\MainSort(A, n)$}
\begin{minipage}{0.99\textwidth}
{\codesize

\algotopspace{}
\noindent
\begin{enumerate}
\setlength{\itemindent}{-1em}

\vsitem[] $\{ \textbf{ Part 1:}\text{ Sorting the majority of $A$} \}$ \dotfill
\vsitem $m \gets \log n \log \log \log n / \log \log n$; Allocate arrays $B, C, D$ of size $n m$;  $B[0, ..., n-1] \gets A$
\vsitem $\AlmostSort(n, n, 0, 0, m, B, C, D)$
\vsitem num\_sorted $\gets $ smallest $i$ s.t. $B[i] =$ null
% ------ uniform memory; place all of (A - B) once
\vsitem[] $\{ \textbf{ Part 2:}\text{ Combining unsorted elements into $B$} \}$ \dotfill
\vsitem Set all elements of $C$ and $D$ to null
\vsitem Allocate arrays $E, F$ of size $n$
\vsitem[] $\{ \text{ Pass 1: Finding and indexing all nonempty buckets } \}$ \dotfill
\vsitem \xparallelfor $a \in A[0, ..., n-1]$ \xdo
\vsitem \T Find smallest $i$ s.t. $B[i] \leq a < B[i + 1]$
\vsitem \T \xif $a \neq B[i]$ \xthen
\vsitem \T \T Choose a random number $j \in [0, ..., m - 1]$
\vsitem \T \T Attempt to assign $C[i \cdot m + j] \gets a$; in case of collision do nothing
% ------ find all nonempty buckets and index them
\vsitem $D \gets \textsc{Indicator-Prefix-Sum}(C)$
\vsitem \xparallelfor $i \gets 0$ \xto $n$ \xdo
\vsitem \T \xif $D[i \cdot \text{block\_size}] \neq D[(i + 1) \cdot \text{block\_size} - 1]$ \xthen $E[i] = 1$
\vsitem $F \gets \textsc{Prefix-Sum}(E)$
% ------ clear our work space
\vsitem Set all elements of $C$ and $D$ to null
% ------ uniform memory to nonempty buckets; place all of (A - B) once
\vsitem[] $\{ \text{ Pass 2: Approximating the size of each bucket } \}$ \dotfill
\vsitem block\_size $\gets \log^3 n \log^2 \log \log n / \log^3 \log n$
\vsitem \xparallelfor $a \in A[0, ..., n-1]$ \xdo
\vsitem \T Find smallest $i$ s.t. $B[i] \leq a < B[i + 1]$
\vsitem \T \xif $a \neq B[i]$ \xand $E[i] = 1$ \xdo
\vsitem \T \T Choose a random number $j \in [0, ..., \text{block\_size} - 1]$
\vsitem \T \T Attempt to assign $C[(F[i] - 1) \cdot \text{block\_size} + j] \gets a$; in case of collision do nothing
% ------ generate running count approximation
\vsitem $D \gets \textsc{Indicator-Prefix-Sum}(C)$
\vsitem \xparallelfor $i \gets 0$ \xto num\_sorted $-\;1$ \xdo
\vsitem \T \xif $E[i] = 0$ \xthen $E[i] = 1$
\vsitem \T \xelse $E[i] = D[F[i] \cdot \text{block\_size} - 1] - D[(F[i] - 1) \cdot \text{block\_size}]$
\vsitem $F \gets \textsc{Prefix-Sum}(E)$
% ------ clear our work space
\vsitem clear $C$ and $D$
% ------ nonuniform memory; log n placements
\vsitem[] $\{ \text{ Pass 3: Sampling elements into their buckets with $\log n$ repetitions } \}$ \dotfill
\vsitem \xparallelfor $i \gets 0$ \xto $n - 1$ \xdo
\vsitem \T \xparallelfor $j \gets 0$ \xto $\log n - 1$ \xdo
\vsitem \T \T Choose a random number $k = H(j, (F[i] - E[i])m, F[i]m - 1)$ \label{line:hash-function}
\vsitem \T \T Attempt to assign $C[k] \gets a$; in case of collision do nothing
% ------ remove duplicates
\vsitem[] $\{ \text{ Removing duplicates and compacting } \}$ \dotfill
\vsitem chunk\_size $\gets n \log \log \log n / \log \log n$
\vsitem \xfor $i \gets 0$ to $n / \text{chunk\_size} - 1$
\vsitem \T \xparallelfor $a \in A[i \cdot \text{chunk\_size}, ..., (i + 1) \cdot \text{chunk\_size}-1]$ \xdo
\vsitem \T \T Find smallest $i$ s.t. $B[i] \leq a < B[i + 1]$
\vsitem \T \T \T $\textsc{Keep-Single}(C, a, n, H, (F[i] - E[i])m, F[i]m - 1)$ \label{line:keep-single}
\vsitem \xparallelfor $i \gets 0$ \xto num\_sorted $-\;1$ \xdo
\vsitem \T $D[(F[i] - E[i])m, ..., F[i]m - 1] \gets \textsc{Indicator-Prefix-Sum}(C[(F[i] - E[i])m, ..., F[i]m - 1])$
\vsitem \T \xparallelfor $j \gets (F[i] - E[i])m$ \xto $F[i]m - 1$ \xdo
\vsitem \T \T $D[j] \gets D[j] + (F[i] - E[i])m$
\vsitem \xparallelfor $i \gets 0$ \xto $n m-1$ \xdo
\vsitem \T \xif $C[i] \neq $ null \xthen $\{~ A[D[i]] \gets C[i]; C[i] \gets \text{null} ~\}$
\vsitem \xparallelfor $i \gets 0$ \xto num\_sorted $-\;1$ \xdo
\vsitem \T \text{lo} $\gets (F[i] - E[i])m$; \text{hi} $\gets \text{lo} + E[i] - 1$
\vsitem \T $A[\text{lo}, ..., \text{hi}] \gets \textsc{Cole-Ramachandran}(A[\text{lo}, ..., \text{hi}])$
\vsitem $D \gets \textsc{Indicator-Prefix-Sum}(A)$
\vsitem \xparallelfor $i \gets 0$ \xto $n m-1$ \xdo
\vsitem \T \xif $A[i] \neq $ null \xthen $\{~ C[D[i]] \gets A[i]; A[i] \gets \text{null} ~\}$
\vsitem $A \gets C$

\algobottomspace{}
\end{enumerate}
}
\end{minipage}
\end{mycolorbox}
}
\caption{Listing for {\MainSort}. $H(i, \text{lo}, \text{hi})$ on line \ref{line:hash-function} is a hash function which maps the triple $(i, \text{lo}, \text{hi})$ to a single value in the range $[\text{lo}, ..., \text{hi}]$. $\textsc{Keep-Single}(A, a, n, H, \text{lo}, \text{hi})$ on line \ref{line:keep-single} is a function which performs a $\textsc{Prefix-Sum}$-like operation to index the instances of $a$ at the (assumed distinct) locations $H(i, \text{lo}, \text{hi})$ ($i = 0, ..., \log n - 1$), and then sets all but one of the instances to null. }
\label{fig:sorting-algorithm-main}
\end{figure}

\begin{theorem}\label{thm:sorting-main-sort}
The {\MainSort} algorithm (Figure \ref{fig:sorting-algorithm-main}) sorts an array of size $n$ in $\Oh{n \log n}$ work w.h.p., $\Th{\log n}$ span w.h.p., and $\Th{\frac{n \log n \log \log \log n}{\log \log n}}$ space.
\end{theorem}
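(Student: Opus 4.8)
The plan is to assemble the theorem from the two halves of {\MainSort} already analyzed: the recursive {\AlmostSort} phase (Part~1) and the non-recursive integration phase (Part~2). First I would invoke Theorem~\ref{thm:sorting-almost-sort}: running $\AlmostSort(n,n,0,0,m,B,C,D)$ with $m = \Th{\frac{\log n \log\log\log n}{\log\log n}}$ produces, w.h.p.\ in $n$, a sorted subsequence $B$ of $A$ that omits only $\Th{\frac{n\log^2\log n}{\log^2 n\log\log\log n}}$ elements, in $\Oh{n\log n}$ work w.h.p., $\Th{\log n}$ span w.h.p., and $\Th{\frac{n\log n\log\log\log n}{\log\log n}}$ space. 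Hence the leftover array $A\setminus B$ has size $\oh{n}$.

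Next I would argue the correctness and cost of Part~2 via Lemma~\ref{lem:sorting-recombining}, whose hypotheses are exactly what the output of {\AlmostSort} supplies: $B$ is sorted, $|A\setminus B|$ is the quantity above, and --- crucially --- Lemma~\ref{lem:sorting-sequences} guarantees w.h.p.\ that no run of $\Th{\frac{\log n}{\log\log n}}$ consecutive elements of $A$ is entirely absent from $B$, so every bucket delimited by two adjacent elements of $B$ receives only $\Oh{\frac{\log n}{\log\log n}}$ leftover elements. This per-bucket bound is precisely what lets the four passes of the integration --- index the nonempty buckets, approximate bucket sizes, sample with $\log n$ parallel repetitions (using the extra space to simulate test-and-set as in the ``Avoiding TS'' discussion), then remove duplicates and compact --- run inside $\Oh{n\log n}$ work, $\Th{\log n}$ span, and $\Th{\frac{n\log n\log\log\log n}{\log\log n}}$ space, and to place every leftover element into its bucket at least once with probability greater than $1 - n/(\log n)^{\log n}$, so that no residual unsorted elements remain. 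The per-bucket sorts via \textsc{Cole-Ramachandran} on arrays of size $\Oh{\frac{\log n}{\log\log n}}$ and the final \textsc{Indicator-Prefix-Sum} compaction then yield a fully sorted $A$ within the same bounds.

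It remains to combine the two phases. Part~1 and Part~2 each perform $\Oh{n\log n}$ work w.h.p.\ in $n$, so the total work is $\Oh{n\log n}$ w.h.p.; each has $\Th{\log n}$ span, so the total span is $\Th{\log n}$; and since Part~2 reuses the arrays $B,C,D$ (together with the $\Oh{n}$-sized $E,F$) allocated in Part~1, the overall space stays $\Th{\frac{n\log n\log\log\log n}{\log\log n}}$. The matching lower bounds that make these bounds $\Theta$ rather than merely $O$ are standard: $\Om{n\log n}$ from the comparison-sorting lower bound, and $\Om{\log n}$ span from the cost of forking $n$ threads (the array-sum lower bound invoked in the proof of Theorem~\ref{thm:strassen-lower-bound}).

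The main obstacle is the probabilistic bookkeeping. Every ``w.h.p.'' event used above --- sufficiently uniform partitions at each of the $\Oh{n\log\log\log n}$ {\AlmostSort} calls (Corollary~\ref{cor:sorting-partitioning}), the collision and leftover counts at every recursion depth (Lemma~\ref{lem:sorting-sampling}), the absence of long empty runs (Lemma~\ref{lem:sorting-sequences}), and the successful placement of all leftovers in Pass~3 of Part~2 --- must hold simultaneously, and must be high probability \emph{in the original} $n$, not in the doubly-exponentially shrinking subarray sizes. I would dispatch this with a single union bound: there are only polynomially many such events, each failing with probability $n^{-\omega(1)}$, so their conjunction still fails with probability $n^{-\omega(1)}$. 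The design choice in {\AlmostSort} to terminate recursion at depth $\log\log\log n$ --- keeping every subarray of size at least $n^{\Omega(1/\log\log n)}$, with the oversampling factor fixed at $\log^3 n$ rather than $\log^3 n_c$ --- is exactly what keeps each individual failure probability subpolynomial in $n$, and is the reason the work bound upgrades from expectation (as in Blelloch et al.) to high probability in $n$.
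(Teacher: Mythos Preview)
Your proposal is correct and follows exactly the same approach as the paper, which proves the theorem in a single line: ``Directly follows from theorem~\ref{thm:sorting-almost-sort} and lemma~\ref{lem:sorting-recombining}.'' Your write-up simply unpacks what that one line means, adding the (sound) union-bound discussion and the lower-bound remarks, none of which the paper bothers to spell out.
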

\begin{proof}
Directly follows from theorem \ref{thm:sorting-almost-sort} and lemma \ref{lem:sorting-recombining}.
\end{proof}

\paragraph{Space-Adaptive Sorting Algorithm.} In the previous paragraphs, we present a $\Th{\log n}$ span parallel sorting algorithm in the binary-forking model without atomic TS using $n\cdot x$ extra space, where $x = \Th{\log n\log\log\log n/\log\log n}$. We now present a space-adaptive sorting algorithm which, given $s \in [n, n \cdot x]$ space, achieves near-optimal span. 

\begin{enumerate}[nolistsep]
\item We divide the input into $m$ equal-sized segments $A_i$ such that $(n/m) x = s$.
\item For each segment $A_i$ of size $n/m$, use {\MainSort} with $s$ amount of extra space.
\item Merge the $m$ segments pairwise recursively.
\end{enumerate}

\begin{theorem}\label{thm:sorting-tunable-sort}
The {\TunableSort} algorithm sorts an array of size $n$ in $\Oh{n \log n}$ work w.h.p. in $n$, $\Oh{(n/s) \log^2 n}$ span w.h.p. in $n$, using $\Th{s}$ amount of space.
\end{theorem}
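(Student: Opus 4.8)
The plan is to analyze {\TunableSort}'s two phases separately: the $m$ segment-sorts of Step~2 and the pairwise merge tree of Step~3. Write $x = \Th{\frac{\log n\log\log\log n}{\log\log n}}$ for the space-blowup factor of {\MainSort}, so that the choice $(n/m)x = s$ gives $m = nx/s$ and each segment has size $N := n/m = s/x$; note $s\in[n,nx]$ forces $m\in[1,x]$. First I would record that, since $s\ge n$ and $x = \Oh{\log n}$, we have $N \ge n/x \ge \sqrt n$ for all large $n$; hence every ``with high probability in $N$'' guarantee of {\MainSort} (Theorem~\ref{thm:sorting-main-sort}) applied to one segment upgrades automatically to ``with high probability in $n$'', and a union bound over the $m\le n$ segment-sorts keeps the total failure probability polynomially small in $n$ (boosting the exponent in {\MainSort}'s guarantee by a constant if needed). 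I would also note that $\log N = (1-\oh{1})\log n$, so when {\MainSort} is run on a segment of size $N$ its space requirement $\Th{N\cdot\tfrac{\log N\log\log\log N}{\log\log N}}$ equals $\Th{Nx} = \Th{s}$. Correctness is then immediate: {\MainSort} sorts each segment w.h.p.\ in $n$, and the deterministic pairwise merge tree combines the $m$ sorted runs into one sorted array.

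\textbf{Work.} Step~2 costs $m\cdot\Oh{N\log N} = \Oh{n\log N} = \Oh{n\log n}$ work w.h.p.\ in $n$. For Step~3, using a linear-work parallel merge the merges at each of the $\log m$ levels together touch $\Oh{n}$ elements, so the tree costs $\Oh{n\log m} = \Oh{n\log\log n} = \oh{n\log n}$ work. Adding, the work is $\Oh{n\log n}$ w.h.p.\ in $n$.

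\textbf{Span.} Since each segment-sort needs $\Th{s}$ space while the budget is $\Th{s}$, only $\Th{1}$ run concurrently, so Step~2 is $\Th{m}$ sequential batches each of span $\Oh{\log N}\le\Oh{\log n}$, for $\Oh{m\log n} = \Oh{(nx/s)\log n}$ total, which is $\Oh{(n/s)\log^2 n}$ because $x = \Oh{\log n}$. In Step~3 the merges at a fixed level are independent and their inputs and outputs fit in $\Oh{n}\le s$ space, so they run concurrently; with an $\Oh{\log n}$-span parallel merge each of the $\log m$ levels costs $\Oh{\log n}$ span, for $\Oh{\log m\cdot\log n}$ total, and since $x = \Oh{\log n}$ and $\log m\le m$ we get $x\log m = \Oh{m\log n}$, i.e.\ $\log m\cdot\log n = \Oh{(m/x)\log^2 n} = \Oh{(n/s)\log^2 n}$. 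So both phases stay within the claimed span bound. The space bound $\Th{s}$ is clear: one $\Th{s}$-sized working block is reused across Step~2's batches, and the merge tree needs only $\Oh{n}\le s$ scratch space.

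\textbf{Main obstacle.} The delicate point is Step~3's span. A naive recursive parallel merge has span $\Th{\log^2 n}$ in the binary-forking model, which would inflate the merge tree to $\Th{\log m\log^2 n}$ and break the $\Oh{(n/s)\log^2 n}$ bound when $s$ is near $nx$ (e.g.\ $m=2$, where $(n/s)\log^2 n = \Th{(\log^2 n)/x}\ll\log^2 n$). Thus the proof genuinely needs a parallel merge achieving $\Oh{N}$ work \emph{and} $\Oh{\log N}$ span in the binary-forking model without atomics --- available either from a standard rank-based parallel merge or by re-using the extra-space, randomized-write technique behind Lemma~\ref{lem:sorting-recombining}. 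The secondary subtlety is the probability bookkeeping: {\MainSort}'s bounds are w.h.p.\ in the segment size $N$, and the requirement that they upgrade to w.h.p.\ in $n$ and survive the union bound over all $m$ segments is precisely why $N$ must stay polynomially large in $n$, which pins down the admissible range $s\in[n,nx]$.
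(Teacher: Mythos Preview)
Your proof is correct and follows essentially the same decomposition as the paper: analyze the $m$ serial segment-sorts, then the pairwise merge tree, and sum. The paper's own argument is terser---it simply asserts that segment-sorts succeed w.h.p.\ in $n$ and that merging $m$ sorted segments costs $\Th{\log n\log m}$ span and $\Th{n\log\log n}$ work---whereas you supply two pieces of justification the paper omits: (i) why the per-segment ``w.h.p.\ in $N$'' guarantees of {\MainSort} upgrade to w.h.p.\ in $n$ and survive a union bound, and (ii) why an $\Oh{\log n}$-span (not $\Oh{\log^2 n}$-span) parallel merge is actually required for the Step~3 bound to hold at the top of the space range. Both observations are genuinely needed; the paper's proof silently assumes them.
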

\begin{proof}
	Sorting each segment takes $\Th{\log (n/m)}$ span and $\Th{(n/m)\log n}$ work w.h.p. in $n$. As sorting each segment consumes the entire extra space $s$, these $m$ calls to {\MainSort} are made serially. Hence, these $m$ sorting steps take $\Th{m \cdot \log (n/m)} = \Th{(nx/s)\log n}$ span w.h.p. in $n$. Merging $m$ sorted segments takes the following span.
	$$
	\sum_{i = 1}^{\log m}\Th{\log \frac{m}{2^i}+ \log{2^i n}{m}} = \sum_{i = 1}^{\log m}\Th{\log m + \log \frac{n}{m}} = \Th{\log n\log m}.
	$$
	The span of {\TunableSort} is $\Oh{n\log^2 n/s}$ w.h.p. in $n$. 
	The work needed to merge $m$ sorted segments is $\Th{n\log\log n}$ w.h.p. in $n$. Hence, the overall work of {\TunableSort} is $\Th{n\log n}$ w.h.p. in $n$.   
\end{proof}

\hide{
\paragraph{$n^{\epsilon}$-way Merge Sort} \label{ssec:epsilon-sort}
In this section, we present a simple merge-based parallel sorting algorithm parameterized on a fixed constant $\epsilon \in (0, 1]$, which achieves the span of $\Oh{(\epsilon + 1/\epsilon) \log n}$ but performs suboptimal work of $\Oh{(1/\epsilon) n^{1 + \epsilon}}$. We call this algorithm {\EpsilonWaySort}. 

\begin{figure*}
	\centering
	\begin{minipage}{\textwidth}
		\begin{mycolorbox}{$\EpsilonWaySort(A, n, \epsilon)$}
			\begin{minipage}{0.99\textwidth}
				{\codesize
					\algotopspace{}
					\vspace{0.1cm}
					\algorequire $1/\epsilon$ is a natural number
					\noindent
					\begin{enumerate}
						\setlength{\itemindent}{-1.5em}
						\vsitem \xif $\epsilon = 1$ \xthen Sort $A[1..n]$ in $\Oh{n^2}$ work and  $\Oh{\log n}$ span and \xreturn
						\vsitem $r \gets n^\varepsilon$
						\vsitem Split $A[1..n]$ into $r$ segments $A_1, ..., A_r$, each of size $n/r$
						\vsitem \xparallelfor $k \gets 1$ \xto $r$ \xdo $\textsc{Sort}(A_k, n/r, \varepsilon / (1 - \varepsilon))$
						\algomiddlespace{}
						\vsitem[] $\{ \text{ Merge the $r$ sorted segments } \}$ \dotfill
						\vsitem \xparallelfor $i \gets 1$ \xto $r$ \xdo
						\vsitem \T \xparallelfor $j \gets i + 1$ \xto $r$ \xdo
						\vsitem \T \T Merge $A_i$ and $A_j$, each of size $n/r$, in $\Oh{n/r}$ work and $\Oh{\log (n/r)}$ span
						\vsitem[] \T \T $rank_i[k, j] \gets$ position of $A_i[k]$ in $A_j$ using the merged list of $A_i$ and $A_j$, for all $k \in [1, |A_i|]$
						\vsitem[] \T \T $rank_j[k, i] \gets$ position of $A_j[k]$ in $A_i$ using the merged list of $A_i$ and $A_j$, for all $k \in [1, |A_j|]$
						\vsitem \xparallelfor $k \gets 1$ \xto $r$ \xdo
						\vsitem \T \xparallelfor $i \gets 1$ \xto $|A_k|$ \xdo
						\vsitem \T \T $rank_k[i] \gets \textsc{Array-Sum}(rank_k[i, 1..r])$ \xcomment{position of $A_k[i]$ in the merged list of $A_1,A_2,\ldots, A_r$}
						\vsitem \T \T $B[rank_k[i]] \gets A_k[i]$
						\vsitem \xparallelfor $i \gets 1$ \xto $n$ \xdo $A[i] \gets B[i]$
						
						\algobottomspace{}
					\end{enumerate}
				}
			\end{minipage}
		\end{mycolorbox}
	\end{minipage}
	\vspace{-0.4cm}
	\caption{The $n^{\epsilon}$-way sorting algorithm.}
	\label{fig:multiway-sort}
	\vspace{-0.2cm}
\end{figure*}

Figure \ref{fig:multiway-sort} gives a pseudocode of the sorting algorithm. Choose a fixed constant $\epsilon \in (0, 1]$ such that $1/\epsilon$ is a natural number and suppose that $r = n^{\epsilon}$. The input array $A[1..n]$ is split into $r$ subarrays, each having $n/r$ elements. All $r$ subarrays are sorted recursively. These $r$ sorted subarrays are then merged. The merging process consists of two stages.
}

\hide{
\vspace{0.1cm}
\noindent
\textbf{Finding Bucket Size (Partitioning $A$ via $B$)} The core idea in this section is that dropping $n$ balls into $k \leq \frac{n}{\log n}$ bins will whp result in less than $\frac{k}{2}$ empty bins. Thus we can upper bound the number of balls by observing the number of bins needed for most to empty after the balls are randomly distributed among them.

We have $\Oh{n \log n}$ space, and $\Th{\frac{n \log \log n}{\log n}}$ elements of $A - B$ falling into $\Th{n}$ buckets; most buckets will be empty. We start by giving every bucket a space of size $\Th{\log n}$, and running the parallel placement strategy (which can place $N$ elements into $N \log N$ space w.h.p.). Then we run prefix sums to count how many elements went into each bucket. Those which are less than half full will be left alone; they correspond to $\Th{n}$ empty (or almost empty) buckets; $\Oh{\frac{n \log \log n}{\log n}}$ remain. Now we give each bucket a space of size $\Th{\frac{\log^2 n}{\log \log n}}$. In the section above we showed that with high probability there will be no buckets with more $\omega\left(\frac{\log n}{\log \log n}\right)$. Therefore it is likely that all buckets will be caught by this stage. If any remain (although the chance of that is polynomially small) they can be given $\Theta\left(\frac{n \log \log n}{\log n}\right)$ space each.
}

\section{Fast Fourier Transform}
\label{sec:fft}

The Discrete Fourier Transform (DFT) of an array $a$ of $n$ complex numbers is the array $y$ computed as $y[i] = \sum_{j=0}^{n-1} a[j] w_n^{-ij}$ for all $i \in [0, n-1]$, where $w_n = e^{2 \pi \sqrt{-1}/n}$ is a primitive $n$th root of unity. A Fast Fourier Transform (FFT) is an algorithm that computes the DFT of an array rapidly. FFT is often considered as one of the most important algorithms of the 20th century. It is extensively used in digital signal processing. Several FFT algorithms have been designed that perform $\Oh{n \log n}$ work. For example, prime-factor algorithm (or Good-Thomas' FFT) \cite{Good1958, Thomas1963}, Bruun's FFT \cite{Bruun1978}, and Winograd's FFT \cite{Winograd1978}. Designing an FFT algorithm with $\oh{n \log n}$ work is an open problem. 

Consider the recursive divide-and-conquer Cooley-Tukey FFT algorithm \cite{CooleyTu65} and its special case, the radix-2 FFT algorithm \cite{CormenLeRiSt2009}. A straightforward parallelization of the generic Cooley-Tukey algorithm \cite{CooleyTu65} has a complexity of $\Oh{n \log n}$ work (and space) and $\Oh{\log n \log \log n}$ span. A simple parallelization of the radix-2 algorithm \cite{CormenLeRiSt2009} has $\Oh{\log^2 n}$ span. In this section, we aim to design a parallel FFT algorithm with close to optimal span for the binary-forking model without atomics keeping work as closely as possible to $\Oh{n \log n}$. To this end, we first design a simple single-point FFT algorithm that can be used to compute a single entry of the DFT. We then carefully combine an efficient variant of this algorithm with the radix-2 FFT and mixed-radix FFT.

\paragraph{$n^{\phi}$-way FFT (Cooley-Tukey Algorithm, \cite{CooleyTu65}).} The $n^{\phi}$-way FFT algorithm, for $\phi \in [1/\log n, 1/2]$, is defined as follows. We view the $n$-sized array as a $n^{1 - \phi} \times n^{\phi}$ matrix, as shown in Figure \ref{fig:fft-kway-diagram}. In the first phase, we compute FFT of each of the columns recursively in parallel. We then multiply all entries of the matrix by appropriate twiddle factors. In the second phase, we compute FFT of each of the rows recursively in parallel. Finally, the resultant matrix read in the column-major order is the required DFT. 

The work and the span recurrences for the $n^{\phi}$-way FFT algorithm are as follows. Suppose $c \ge 1$ is a fixed constant. If $n \le c$, then $T_{1}(n) = \Oh{1}$ and $T_{\infty}(n) = \Oh{1}$. If $n > c$, then
\begin{align*}
&T_{1}(n) = n^{\phi} T_{1}(n^{1-\phi}) + n^{1-\phi} T_{1}(n^{\phi}) + \Oh{n},\\
&T_{\infty}(n) = T_{\infty}(n^{1 - \phi}) + T_{\infty}(n^{\phi}) +  \Oh{\log n}.
\end{align*}
The work performed by this algorithm is $\Oh{n \log n}$ for all values of $\phi$. If we set $\phi = 1/\log n$, we obtain the 2-way FFT algorithm \cite{CooleyTu65, CormenLeRiSt2009}, with span $\Oh{\log^2 n}$. On the other hand, if we set $\phi = 1/2$, we get the $\sqrt{n}$-way FFT algorithm, a special case of the algorithm given in \cite{CooleyTu65}, with span $\Oh{\log n \log \log n}$. To the best of our knowledge, the $\sqrt{n}$-way FFT algorithm achieves the best span.

\begin{figure}
\centering
\includegraphics[width=0.85\textwidth]{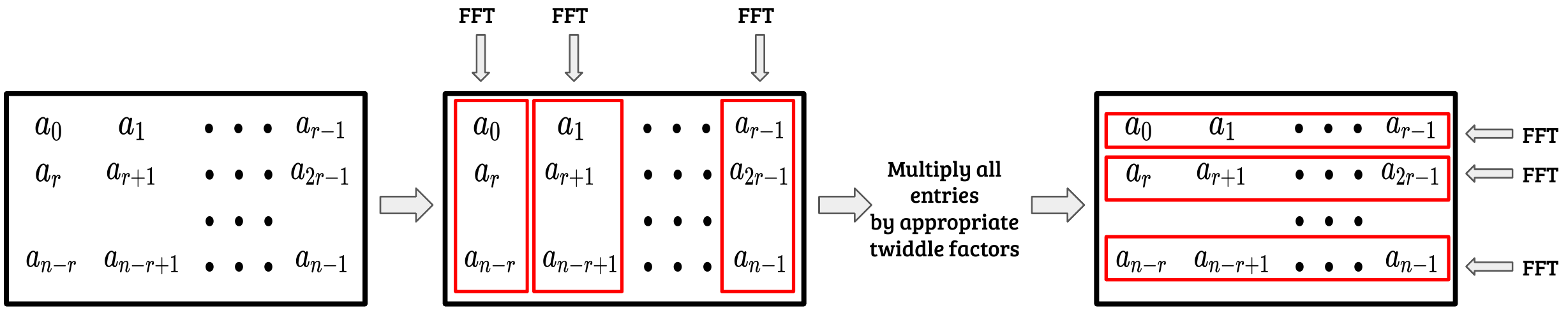}
\caption{The $n^{\phi}$-way FFT algorithm. Here, $r = n^{\phi}$.}
\label{fig:fft-kway-diagram}
\end{figure}

\paragraph{Single-point 2-way FFT.} A single-point FFT evaluation means that we can compute a single entry of the DFT independently without computing the entire DFT. In other words, the computation of $y[i]$ does not share work with the computation of $y[j]$ for all $i,j \in [0, n-1]$ and $i \ne j$. The single-point 2-way algorithm \textsc{FFT-sp} is shown in Figure \ref{fig:fft-single-point-algorithm} (left). Figure \ref{fig:fft-single-point-algorithm} (right) shows a visual depiction of how an entry corresponding to the $i$th entry of the DFT is computed at recursion level $\ell$ using entries corresponding to the $i$th entries of the DFT's of the two child nodes at recursion level $\ell + 1$, for $\ell \in [0, \log n)$.  

\begin{figure}
\begin{minipage}{0.48\textwidth}
\begin{mycolorbox}{$\textsc{FFT-sp}(x, \ell, n, i)$}
\begin{minipage}{0.99\textwidth}
{\codesize

\algotopspace{}
\algoinput Input array $x$, level $\ell$, size $n$, entry $i$
\algooutput DFT entry $y^{\ell}[i]$ 
\algomiddlespace

\noindent
\begin{enumerate}
\setlength{\itemindent}{-1.5em}

\vsitem \xif $n = 1$ \xthen \xreturn $x[0]$
\vsitem $j \gets i - (n/2) \times [i < n/2]$; $w \gets w_n^{-j}$
\vsitem \xpar $u \gets$ \textsc{FFT-sp}$(x,\ell + 1,n/2,j)$
\vsitem[] \xblankpar $v \gets $ \textsc{FFT-sp}$(x + 2^{\ell},\ell + 1,n/2,j)$
\vsitem \xreturn $u - wv \times (-1)^{[i < n/2]}$

\algobottomspace{}
\end{enumerate}
}
\end{minipage}
\end{mycolorbox}
\end{minipage}
\begin{minipage}{0.01\textwidth}
~
\end{minipage}
\begin{minipage}{0.48\textwidth}
\centering
\includegraphics[width=0.99\textwidth]{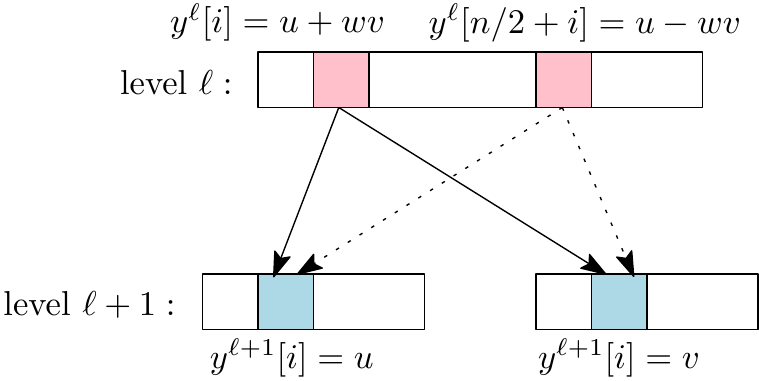}
\end{minipage}
\caption{Left: Single-point FFT algorithm to compute the $i$th entry of the DFT of $n$-sized array $x$. Symbol $[~]$ represents the Iversion bracket. Initial invocation is \textsc{FFT-sp}$(a, 1, n, i)$. Right: Single-point FFT evaluation of DFT entry $y^{\ell}[i]$ at level $\ell$ using the DFT entries $y^{\ell + 1}[i]$ of the child nodes at level $\ell + 1$. Note that $y^{\ell}[n/2+i]$ also depends on the $y^{\ell + 1}[i]$ entries.} 
\label{fig:fft-single-point-algorithm}
\end{figure}

Consider the binary recursion tree produced by the algorithm. It is important to note that an entry in the DFT of a node can be computed with two entries in the DFT's of the child nodes. We do not need to store the DFT entries at every level. They can be computed recursively and on-the-fly. A DFT entry at the root node of the recursion tree can be computed with $\Oh{2^{\log n}} = \Oh{n}$ work and $\Oh{\log n}$ span. This implies that we can compute all entries of the DFT at the root node performing $\Oh{n \cdot 2^{\log n}} = \Oh{n^2}$ work in $\Oh{\log n}$ span.

\begin{lemma}
\label{lem:fft-sp}
The single-point 2-way FFT algorithm has a complexity of $\Oh{n^2}$ work, $\Oh{n}$ space, and $\Oh{\log n}$ span. 
\end{lemma}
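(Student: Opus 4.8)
The plan is to analyze the recursion \textsc{FFT-sp} first for the cost of producing a single DFT entry, and then to bound the cost of running $n$ such computations simultaneously. First I would unfold the recurrence for one call \textsc{FFT-sp}$(a,1,n,i)$. Away from the base case, each invocation performs only $\Oh{1}$ non-recursive operations (computing the index $j$, the twiddle factor $w=w_n^{-j}$, and the combination $u-wv\cdot(-1)^{[i<n/2]}$) and issues two recursive calls on inputs of size $n/2$, which are addressed inside the original array $a$ by pointer arithmetic ($x+2^{\ell}$) rather than by copying. Hence the work of computing one entry obeys $W(n)=2W(n/2)+\Oh{1}$ with $W(1)=\Oh{1}$, giving $W(n)=\Oh{n}$. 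The two recursive calls are issued with a single binary fork and rejoined with one join, each costing $\Oh{1}$ in the binary-forking model, so the span obeys $S(n)=S(n/2)+\Oh{1}$ with $S(1)=\Oh{1}$, giving $S(n)=\Oh{\log n}$; in particular the recursion tree has depth $\log n$.

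Next I would compute the entire DFT array $y$ by launching, through a balanced binary tree of forks of depth $\log n$, the $n$ independent computations \textsc{FFT-sp}$(a,1,n,i)$ for $i\in[0,n-1]$, each writing its scalar result directly into $y[i]$. The fork tree itself contributes $\Oh{n}$ work and $\Oh{\log n}$ span, so the total work is $\Oh{n}+n\cdot\Oh{n}=\Oh{n^2}$ and the total span is $\Oh{\log n}+\Oh{\log n}=\Oh{\log n}$. If one prefers not to treat $w_n^{-j}$ as a unit-cost operation, the roots of unity of the $\Oh{n}$ relevant orders $n,n/2,\dots,2$ can be tabulated once by a prefix-product in $\Oh{n}$ work, $\Oh{\log n}$ span, and $\Oh{n}$ space, which changes none of the bounds.

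For the space bound I would observe that \textsc{FFT-sp} allocates nothing on the heap: every subproblem is identified merely by an offset into the already-allocated input $a$, and the quantities $j,w,u,v$ are $\Oh{1}$-word stack-local variables of a frame. Under this paper's convention that $S_\infty$ counts only heap-allocated space, the only heap in use is the input array $a$ and the output array $y$, each of size $n$, so the algorithm uses $\Oh{n}$ space, which establishes the lemma.

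The step I expect to be the main obstacle is the interaction between the parallel launch and the space bound: one must be sure that running all $n$ single-point computations at once does not inflate the space, and this is precisely why it matters that the recursion does no heap allocation and that intermediate DFT values flow back up the call stack instead of being materialized in an auxiliary array (in contrast to the stage-based matrix-multiplication algorithm, where intermediate matrices had to be stored explicitly). Everything else is a routine unrolling of geometric recurrences; the twiddle-factor bookkeeping and the $(-1)^{[i<n/2]}$ sign add only $\Oh{1}$ per recursion node and do not affect the asymptotics.
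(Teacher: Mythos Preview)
Your proposal is correct and follows essentially the same reasoning as the paper, which argues informally in the paragraph preceding the lemma that a single DFT entry costs $\Oh{2^{\log n}}=\Oh{n}$ work and $\Oh{\log n}$ span, and then multiplies by $n$ for all entries. Your version is simply more explicit: you write out the recurrences $W(n)=2W(n/2)+\Oh{1}$ and $S(n)=S(n/2)+\Oh{1}$, account for the $\Oh{\log n}$-depth fork tree that launches the $n$ independent calls, and justify the $\Oh{n}$ space bound by invoking the paper's convention that $S_\infty$ counts only heap allocations (so the $\Oh{n^2}$ live stack frames across all parallel calls are not charged).
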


\paragraph{2-way FFT with Stages.} The standard 2-way FFT algorithm has a complexity of $\Oh{n \log n}$ work due to high work-sharing across threads and $\Oh{\log^2 n}$ span due to expensive local synchronization points. In contrast, the single-point 2-way FFT algorithm has a complexity of $\Oh{n^2}$ work due to no work-sharing across threads and $\Oh{\log n}$ span due to inexpensive local synchronization points. We now carefully combine the two algorithms to get advantages of both the worlds: low work and low span.

\begin{figure}
\centering
\includegraphics[width=0.35\textwidth]{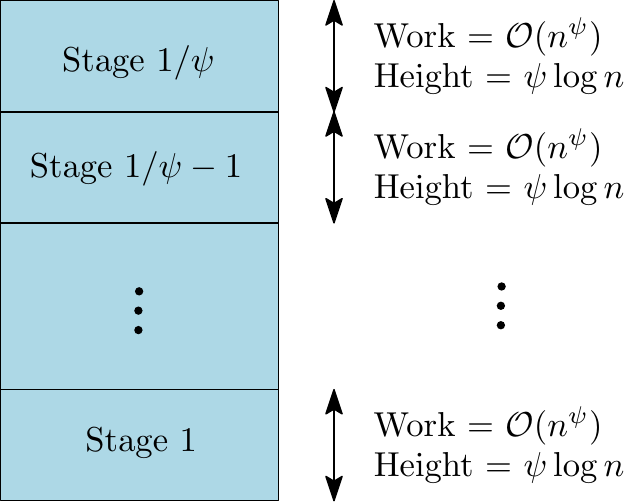}
\caption{Stages in the 2-way FFT algorithm with single-point computations.}
\label{fig:fft-stages-2way}
\end{figure}

Consider Figure \ref{fig:fft-stages-2way}, in which the $\log n$ levels are split into $1/ \psi$ stages, each stage containing $\psi \log n$ levels, for $\psi \in [1/\log n, 1]$. The core idea of the algorithm is as follows. There will be $1/\psi$ global synchronization points, one per stage. We compute the DFT at a root node of stage $i$ using the DFT's at the root nodes of stage $i + 1$ using the single-point 2-way FFT algorithm. We compute and store the DFT's in the root nodes of all stages.

\begin{figure*}
\centering
\begin{minipage}{\textwidth}
\begin{minipage}{0.49\textwidth}
\begin{mycolorbox}{$\textsc{FFT-2way-stages}(y, a, n, \psi)$}
\begin{minipage}{0.99\textwidth}
{\codesize
\algotopspace{}
\noindent
\begin{enumerate}
\setlength{\itemindent}{-1.5em}

\vsitem \xparallelfor $i \gets 0$ \xto $n - 1$ \xdo $y^{(0)}[i] \gets a[i]$

\vsitem \xfor stage $s \gets 1$ \xto $1/\psi$ \xdo
\vsitem \T $\ell_s \gets (1 - s\psi) \log n$ \xcomment level number
\vsitem \T \xparallelfor $j \gets 0$ \xto $2^{\ell_s} - 1$ \xdo
\vsitem \T \T \xparallelfor $i \gets 0$ \xto $n/2^{\ell_s} - 1$ \xdo
\vsitem \T \T \T $y^{(\ell_s)}[j + i \cdot 2^{\ell_s}] \gets$ 
\vsitem[] \T \T \T $\textsc{FFT-sp}(y^{(\ell_{(s-1)})} + j, \ell_s, n/2^{\ell_s}, i, {\ell}_{s - 1})$

\algobottomspace{}
\end{enumerate}
}
\end{minipage}
\end{mycolorbox}
\end{minipage}
\begin{minipage}{0.49\textwidth}
\begin{mycolorbox}{$\textsc{FFT-sp}(x, \ell, n, i, \ell')$}
\begin{minipage}{0.99\textwidth}
{\codesize

\algotopspace{}
\algoinput Input array $x$, level $\ell$, size $n$, entry $i$, stop at level $\ell'$
\algooutput DFT entry $y^{\ell}[i]$ 
\algomiddlespace

\noindent
\begin{enumerate}
\setlength{\itemindent}{-1.5em}

\vsitem \xif $\ell = \ell'$ \xthen \xreturn $x[i]$ \label{line:fft-sp-basecase}
\vsitem $j \gets i - (n/2) \times [i < n/2]$; $w \gets w_n^{-j}$
\vsitem \xpar $u \gets$ \textsc{FFT-sp}$(x,\ell + 1,n/2,j, \ell')$
\vsitem[] \xblankpar $v \gets $ \textsc{FFT-sp}$(x + 2^{\ell},\ell + 1,n/2,j, \ell')$
\vsitem \xreturn $u - wv \times (-1)^{[i < n/2]}$

\algobottomspace{}
\end{enumerate}
}
\end{minipage}
\end{mycolorbox}
\end{minipage}
\caption{The 2-way \textsc{FFT} algorithm with stages.}
\label{fig:fft-2way-stages}
\end{minipage}
\end{figure*}

\begin{lemma}
\label{lem:fft-2way-stages}
The 2-way FFT algorithm with stages has a complexity of $\Oh{(1/\psi) n^{1 + \psi}}$ work, $\Oh{n}$ space, and $\Oh{(1/\psi) \log n}$ span, where $\psi \in [1/\log n, 1]$. 
\end{lemma}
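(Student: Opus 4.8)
The plan is to carry out a stage-by-stage analysis in the same spirit as the proofs of Lemmas \ref{lem:computing-input-matrices} and \ref{lem:computing-output-matrices}, but the bookkeeping here is far lighter because every one of the $1/\psi$ stages has exactly the same height $\psi\log n$. Recall the stage structure: the $\log n$ levels of the 2-way recursion tree are cut into stages $s = 1, \dots, 1/\psi$ with $\ell_s = (1-s\psi)\log n$, so that stage $s$ spans the $\psi\log n$ levels from $\ell_{s-1}$ down to $\ell_s$, and stage $s$ recomputes the stored DFT entries at level $\ell_s$ out of the stored entries at level $\ell_{s-1}$ by calling \textsc{FFT-sp} with stopping level $\ell' = \ell_{s-1}$. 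I would treat work, span, and space in turn.

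For the work, I would fix a stage $s$ and count directly. At a stage-boundary level $\ell$ the recursion tree has $2^\ell$ nodes, the $j$-th holding a sub-DFT of $n/2^\ell$ entries, so a stage boundary carries exactly $2^\ell\cdot(n/2^\ell) = n$ DFT entries; in particular every stage boundary holds $n$ entries. Each such entry at level $\ell_s$ is produced by one call to \textsc{FFT-sp} whose recursion tree is a complete binary tree of depth $\ell_{s-1}-\ell_s = \psi\log n$ (it bottoms out exactly at level $\ell_{s-1}$, where it merely reads a stored value), and since every node of that tree does $\Oh{1}$ work, the call does $\Oh{2^{\psi\log n}} = \Oh{n^\psi}$ work. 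Summing over the $n$ entries of the stage gives $W_s = \Oh{n\cdot n^\psi} = \Oh{n^{1+\psi}}$, and summing over the $1/\psi$ stages (which run one after another) gives $T_1(n) = \Oh{(1/\psi)\,n^{1+\psi}}$.

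For the span, I would again fix a stage: forking the outer \textbf{parallel for} over the $2^{\ell_s}$ nodes costs $\Oh{\ell_s}$ span, forking the inner \textbf{parallel for} over the $n/2^{\ell_s}$ entries costs $\Oh{\log(n/2^{\ell_s})}$ span, and each \textsc{FFT-sp} call has span equal to its recursion depth $\Oh{\psi\log n}$ because its two recursive calls run under \textbf{par} and the combining step is $\Oh{1}$; all three quantities are $\Oh{\log n}$, so a stage has span $\Oh{\log n}$. The $1/\psi$ stages are separated by the global synchronization points of the outer \textbf{for stage} loop and hence execute sequentially, giving $T_\infty(n) = \Oh{(1/\psi)\log n}$.

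The one step that needs genuine care — and the main obstacle to the advertised $\Oh{n}$ space bound, as opposed to the naive $\Oh{(1/\psi)n}$ from keeping every stage boundary around — is the space accounting. Here I would observe that the only heap allocations are the arrays storing DFT entries at stage boundaries, each of size exactly $n$, and that stage $s$ reads only from the level-$\ell_{s-1}$ array while writing only into the level-$\ell_s$ array; hence the level-$\ell_{s-1}$ array may be freed as soon as stage $s$ finishes, so at most two of these $n$-sized arrays are live at any instant. The intermediate DFT values produced inside a single \textsc{FFT-sp} invocation live on the recursion stack and are not charged against heap space. This yields $\Oh{n}$ heap space and completes the lemma; as sanity checks, $\psi = 1$ recovers the single-point bounds of Lemma \ref{lem:fft-sp}, while $\psi = 1/\log n$ recovers the $\Oh{n\log n}$-work, $\Oh{\log^2 n}$-span standard 2-way FFT.
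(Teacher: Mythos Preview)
Your proof is correct and follows essentially the same stage-by-stage counting as the paper's own argument: $n$ entries per stage boundary, $\Oh{n^\psi}$ work per entry, $\Oh{\log n}$ span per stage, $1/\psi$ sequential stages, and two reusable $n$-sized arrays for the $\Oh{n}$ space bound. If anything, your span analysis is slightly cleaner --- you correctly attribute $\Oh{\psi\log n}$ to a single \textsc{FFT-sp} call and then multiply by the $1/\psi$ stages, whereas the paper's phrasing of that line is a bit muddled --- and your closing sanity checks at $\psi=1$ and $\psi=1/\log n$ are a nice touch not present in the original.
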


\begin{proof} $[$Work.$]$ There are $1/\psi$ stages. The height of each stage is $\psi \log n$. Total number of cells at the root nodes of each stage is $\Oh{n}$. The computation of each cell requires $\Oh{2^{\psi \log n}} = \Oh{n^{\psi}}$ work. So, the total work is $\Oh{(1/\psi) n^{1 + \psi}}$. [Span.] The execution of stages is sequential. As there are $\Oh{n}$ cells at the root nodes of each stage, the span for launching these cells is $\Oh{\log n}$. The span for computing a DFT entry (i.e., \textsc{FFT-sp}) is $\Oh{(1/\psi) \log n}$. [Space.] Storing the DFT array at each of the $1/\psi$ stages requires $\Oh{(1/\psi) n}$ space. However, we can reuse two arrays to perform all computations. Hence, we just need $\Oh{n}$ space.
\end{proof}

\paragraph{$\sqrt{n}$-way FFT and 2-way FFT with Stages.} The core idea of the algorithm is as follows. We execute the $\sqrt{n}$-way FFT algorithm for the first $\log (1/\epsilon)$ levels of the recursion tree, where $1/\epsilon \in [2, \log n]$. We then switch to the 2-way FFT algorithm with stages.

\begin{theorem} 
\label{thm:fft-rootnway-2way}
The $\sqrt{n}$-way combined with 2-way with stages FFT algorithm has a complexity of $\Oh{n \log^{f(n, \epsilon)} n}$ work, $\Oh{n}$ space, and $\Oh{\log(1/\epsilon) \log n}$ span, where $1/\epsilon \in [2, \log n]$ and $f(n, \epsilon) = (1/\log \log n) (\log n / ((1/\epsilon) \log (1/\epsilon) ) + \log ((1/\epsilon) \log (1/\epsilon) ) )$. 
\end{theorem}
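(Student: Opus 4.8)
The plan is to view the hybrid algorithm as two phases joined at a \emph{switching level} and to bound work, span and space of each phase separately. After $L:=\log(1/\epsilon)$ levels of the $\sqrt{n}$-way recursion of Figure~\ref{fig:fft-kway-diagram}, each node has $2\sqrt{m}$ children of half the dimension, so at level $k$ there are $2^{k}n^{1-2^{-k}}$ subproblems of size $n^{2^{-k}}$; in particular the switching level has $N:=2^{L}n^{1-\epsilon}=(1/\epsilon)\,n^{1-\epsilon}$ subproblems of size $m:=n^{2^{-L}}=n^{\epsilon}$. On each of these I would run \textsc{FFT-2way-stages} (Lemma~\ref{lem:fft-2way-stages}) with stage parameter
$$\psi:=\frac{1}{\log(1/\epsilon)},$$
clamped if necessary to the admissible interval $[\,1/\log m,\,1\,]$; as noted below, in the regime where the clamp is active ($1/\epsilon$ near $\log n$, so $m$ is near-constant) the subproblem FFTs cost $\Oh{1}$ and the overall bounds are dictated by the top phase, so this case is harmless.

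\textbf{Span.} The $\sqrt{n}$-way phase satisfies $S_d(n)=2\,S_{d-1}(n^{1/2})+\Oh{\log n}$, the factor $2$ coming from the sequential column- and row-phases (each phase itself a single forked parallel batch). Unrolling $L$ times, the branching $2^{j}$ cancels the halving of the size in the additive term, giving $S_L(n)=2^{L}\,S_0(n^{\epsilon})+\Oh{L\log n}$, where $S_0(n^{\epsilon})$ is the span of the bottom phase. By Lemma~\ref{lem:fft-2way-stages}, $S_0(n^{\epsilon})=\Oh{(1/\psi)\log n^{\epsilon}}=\Oh{\epsilon\log(1/\epsilon)\log n}$, and since $2^{L}=1/\epsilon$ we get $2^{L}S_0(n^{\epsilon})=\Oh{\log(1/\epsilon)\log n}$; with $L=\log(1/\epsilon)$ the total span is $\Oh{\log(1/\epsilon)\log n}$. (Sanity check: $1/\epsilon=\log n$ recovers the $\Oh{\log n\log\log n}$ span of the plain $\sqrt{n}$-way algorithm.)

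\textbf{Work.} The non-recursive work at level $k$ of the $\sqrt{n}$-way phase is $\big(2^{k}n^{1-2^{-k}}\big)\cdot\Oh{n^{2^{-k}}}=\Oh{2^{k}n}$, so the top phase performs $\sum_{k=0}^{L}\Oh{2^{k}n}=\Oh{n/\epsilon}$ work, which is $\Oh{n\log n}$ since $1/\epsilon\le\log n$. The bottom phase performs, by Lemma~\ref{lem:fft-2way-stages}, $N\cdot\Oh{(1/\psi)\,m^{1+\psi}}=(1/\epsilon)n^{1-\epsilon}\cdot\log(1/\epsilon)\cdot n^{\epsilon(1+\psi)}=(1/\epsilon)\log(1/\epsilon)\cdot n^{1+\epsilon\psi}$ work. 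Taking logarithms with $\psi=1/\log(1/\epsilon)$,
\begin{align*}
\log\!\Big((1/\epsilon)\log(1/\epsilon)\cdot n^{\epsilon\psi}\Big)
&=\log\!\big((1/\epsilon)\log(1/\epsilon)\big)+\frac{\epsilon\log n}{\log(1/\epsilon)}\\
&=\log\!\big((1/\epsilon)\log(1/\epsilon)\big)+\frac{\log n}{(1/\epsilon)\log(1/\epsilon)}=f(n,\epsilon)\log\log n,
\end{align*}
so the bottom phase does exactly $\Oh{n\log^{f(n,\epsilon)}n}$ work; and $n/\epsilon=\Oh{n\log^{f}n}$ because $f\log\log n\ge\log\!\big((1/\epsilon)\log(1/\epsilon)\big)\ge\log(1/\epsilon)$. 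Hence the total work is $\Oh{n\log^{f(n,\epsilon)}n}$. For space, each constituent uses space linear in its own input (the $\sqrt{n}$-way algorithm by its standard two-array implementation, \textsc{FFT-2way-stages} by Lemma~\ref{lem:fft-2way-stages}), and the scratch of the $N$ subproblems can be recycled, so $\Oh{n}$ space suffices overall.

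\textbf{Main obstacle.} The geometric sums are routine; the real content is identifying that $\psi=1/\log(1/\epsilon)$ is the choice that makes the bottom-phase work collapse to $n\log^{f}n$ with precisely the stated $f$, and then cleanly disposing of the endpoint of the parameter range ($1/\epsilon\to\log n$, where $m=n^{\epsilon}$ degenerates to a constant and $\psi$ must be clamped into $[1/\log m,1]$) by showing the top phase dominates there, so the claimed bounds still hold.
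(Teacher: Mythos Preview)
Your proof is correct and follows essentially the same approach as the paper: unroll the $\sqrt{n}$-way recurrence $L=\log(1/\epsilon)$ levels to obtain $(1/\epsilon)n^{1-\epsilon}$ subproblems of size $n^{\epsilon}$, plug in Lemma~\ref{lem:fft-2way-stages} with $\psi=1/\log(1/\epsilon)$, and then rewrite the resulting work bound $\Oh{(1/(\epsilon\psi))\,n^{1+\epsilon\psi}}$ in the form $n\log^{f(n,\epsilon)}n$. Your explicit subproblem count $2^{k}n^{1-2^{-k}}$ and your discussion of the endpoint clamp for $\psi$ are slightly more detailed than the paper's presentation, but the argument is the same.
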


\begin{proof}
	From Lemma \ref{lem:fft-2way-stages}, we have the following bounds for the 2-way FFT: $T_{1}'(m,\psi) = \Oh{ (1/\psi) m^{1+ \psi} }$ and $T_{\infty}'(m,\psi) = \Oh{ (1/\psi) \log m }$, where, $\psi \in [1/\log m, 1]$. Then, the work and span recurrences for the $\sqrt{n}$-way combined with 2-way with stages algorithm are as follows:
	\begin{align*}
	&T_1(m) \le \begin{cases}
	T_1'(m, \psi) & \text{if } m \le n^{\epsilon}, \\
	2m^{1/2} T_1(m^{1/2}) + \Oh{m} & \mbox{if } m > n^{\epsilon}. \end{cases} &&T_{\infty}(m) \le \begin{cases}
	T_\infty'(m, \psi) & \text{if } m \le n^{\epsilon}, \\
	2 T_\infty(m^{1/2}) + \Oh{\log m} & \mbox{if } m > n^{\epsilon}. \end{cases}
	\end{align*}
	Expanding the recurrences, we get
	\begin{align*}
	&T_1(n) \le (1/\epsilon) n^{1-\epsilon} T_1'(n^{\epsilon}, \psi) + c\cdot n/\epsilon\\
	&T_{\infty}(n) \le (1/\epsilon) T_{\infty}'(n^{\epsilon}, \psi) + c' \cdot \log (1/\epsilon) \log n
	\end{align*}
	for positive constants $c$ and $c'$. Substituting the values of $T_1'$ and $T_{\infty}'$, we obtain $T_1(n) = \Oh{(1/(\epsilon \psi)) n^{1+\epsilon \psi}}$ and $T_{\infty}(n) = \Oh{(1/\psi + \log (1/\epsilon)) \log n}$. We set $\psi = 1/\log (1/\epsilon)$. Writing our work bound in the form $T_1(n) = \Oh{n \log^{f(n, \epsilon)} n}$, we can easily find a corresponding function $f(n, \epsilon)$. Direct computation shows that $f(n, \epsilon) = (1/\log \log n) (\log n / ((1/\epsilon) \log (1/\epsilon) ) + \log ((1/\epsilon) \log (1/\epsilon) ) )$.
	\begin{align*}
	&n^{1 + \varepsilon \psi} / (\varepsilon \psi) = n \log^{f(n, \epsilon)} n \implies    1 + \varepsilon \psi \log n + \log (1/(\varepsilon \psi)) = 1 + f(n, \epsilon) \log \log n \text{ (taking log)}\\
	&f(n, \epsilon) = ( \varepsilon \psi \log n + \log (1/(\varepsilon \psi)) ) / \log \log n = (1/\log \log n) (\log n / ((1/\epsilon) \log (1/\epsilon) ) + \log ((1/\epsilon) \log (1/\epsilon) ) )
	\end{align*}
	When $\psi = 1/\log (1/\epsilon)$, the span is $\Oh{\log (1/\epsilon) \log n}$. Hence, the theorem follows.
\end{proof}

\hide{ = \frac{ \frac{1}{\frac{1}{\varepsilon}\log \frac{1}{\varepsilon}} \log n + \log (\frac{1}{\varepsilon} \log \frac{1}{\varepsilon})}{\log \log n}}

\begin{corollary}
\label{cor:fft}
The $\sqrt{n}$-way combined with 2-way with stages FFT algorithm has a complexity of $\Oh{n \log^{g(n)} n}$ work, $\Oh{n}$ space, and $\Oh{\log n \log \log \log n}$ span, where $g(n) < 2$ for $n < 10^{10,000}$.
\hide{Setting $\epsilon = \log^{-2} \log n$ in the \textsc{FFT-$\sqrt{n}$:2} algorithm gives a complexity of $\Oh{\log n \log \log \log n}$ span, $\Oh{?}$ space, and $\Oh{ n \log^{2} n}$ work, for all values of $n < 10^{10^{4}}$, where the hidden constants are the same as those in Theorem \ref{thm:fft-rootnway-2way}.}
\end{corollary}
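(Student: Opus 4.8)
The plan is to apply Theorem~\ref{thm:fft-rootnway-2way} with the single choice $1/\epsilon = \log^{2}\log n$ (equivalently $\epsilon = 1/\log^{2}\log n$) and then simply read off the exponent of $\log n$ in the resulting work bound. First I would check admissibility: Theorem~\ref{thm:fft-rootnway-2way} requires $1/\epsilon \in [2,\log n]$, and $2 \le (\log\log n)^{2} \le \log n$ holds for all $n$ past a small constant (for smaller $n$ the algorithm finishes in its base cases and there is nothing to prove). For this $\epsilon$ the theorem gives space $\Oh{n}$ unchanged and span $\Oh{\log(1/\epsilon)\,\log n} = \Oh{(2\log\log\log n)\,\log n} = \Oh{\log n\log\log\log n}$, which is the span claimed in the corollary.

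For the work, Theorem~\ref{thm:fft-rootnway-2way} gives $\Oh{n\log^{f(n,\epsilon)}n}$ with $f(n,\epsilon) = \frac{1}{\log\log n}\!\big(\frac{\log n}{(1/\epsilon)\log(1/\epsilon)} + \log((1/\epsilon)\log(1/\epsilon))\big)$. Substituting $1/\epsilon = (\log\log n)^{2}$, so that $(1/\epsilon)\log(1/\epsilon) = 2(\log\log n)^{2}\log\log\log n$, yields
\[
g(n) \;:=\; f\!\left(n,\tfrac{1}{\log^{2}\log n}\right) \;=\; \frac{\log n}{2(\log\log n)^{3}\log\log\log n} \;+\; \frac{\log\!\big(2(\log\log n)^{2}\log\log\log n\big)}{\log\log n},
\]
which is precisely the work exponent named in the statement, so it remains only to establish the numerical bound $g(n) < 2$ over the stated range.

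To do that I would argue as follows. The first summand has numerator $\log n$ growing strictly faster than its denominator $2(\log\log n)^{3}\log\log\log n$, which is only polylogarithmic in $\log n$, so past a small threshold $g$ is increasing in $n$; the second summand $\big(1 + 2\log\log\log n + \log\log\log\log n\big)/\log\log n$ is uniformly $O(1)$ and decreasing for large $n$. A short check at a handful of values (or a crude interval bound) confirms that over $2 \le n \le 10^{10{,}000}$ the maximum of $g$ is attained at the right endpoint. There $\log n = 10^{4}\log_{2}10 \approx 3.32\times10^{4}$, $\log\log n \approx 15.0$, and $\log\log\log n \approx 3.9$, giving a first term $\approx 3.32\times10^{4}/(2\cdot 15.0^{3}\cdot 3.9) \approx 1.26$ and a second term $\approx 10.8/15.0 \approx 0.72$, so $g(10^{10{,}000}) \approx 1.97 < 2$; for all smaller $n$ (down to the constant-size base cases, where the bound is trivial anyway) the value is smaller still. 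The only genuinely delicate step is this endpoint estimate: since $g(10^{10{,}000})$ sits only about $0.03$ below $2$, the nested logarithms must be bounded tightly enough, and one must also verify that the mild dip-and-rise behaviour of $g$ for small $n$ never exceeds the endpoint value — both are elementary, but they are where the actual work lies.
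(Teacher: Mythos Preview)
Your approach is exactly the paper's: set $1/\epsilon = (\log\log n)^{2}$ in Theorem~\ref{thm:fft-rootnway-2way} and let $g(n) = f(n,\epsilon)$. The paper's proof is in fact just that one sentence; your admissibility check, span computation, explicit substitution into $f$, and the endpoint numerical estimate $g(10^{10{,}000})\approx 1.97$ all flesh out details the paper leaves to the reader, and they are carried out correctly.
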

\begin{proof}
Set $1/\epsilon = (\log \log n)^2$ and $g(n) = f(n, \epsilon)$ in Theorem \ref{thm:fft-rootnway-2way}. \hide{Plugging in $\epsilon = \log^{-2} \log n$ gives $f(n, \log^{-2} \log n) = (1/\log \log n) (\log n / ((\log^2 \log n) \log (\log^2 \log n) ) + \log ((\log^2 \log n) \log (\log^2 \log n) ) )$. That this function satisfies the stated inequalities can be seen from the plot shown in figure $\ref{fig:fft-f-plots}$.}
\end{proof}

\section{Conclusion}
\label{sec:conclusion}

In this paper, we presented several fundamental low-span algorithms in the binary-forking model without using locks and atomic instructions. Our parallel algorithms perform work (almost) the same as that of the serial algorithms from which they are derived. All our results improve known results in the binary-forking model with and without atomics.

We introduced the technique of \highlight{single-point computation in stages} through Strassen's MM and FFT to carefully set a balance between high work-sharing and high span of the given algorithm and low work-sharing and low-span of the single-point computation variant to obtain parallel algorithms with optimal/near-optimal span without work blow-up. This technique can be used to design efficient parallel algorithms for other problems too.

We also presented a randomized sorting algorithm with optimal span and optimal work, both bounds are w.h.p. in the number of elements being sorted.

A few interesting problems (in the binary-forking model without using locks and atomic instructions) that one could aim to solve in the future are as follows: (1) Our parallel Strassen's MM algorithm achieves $\Oh{n^w \log \log n}$ work and optimal $\Oh{\log n}$ span (or $\Oh{n^w}$ work and $\Oh{\log n \log \log \log n}$ span). Design a Strassen's MM algorithm with $\Oh{n^{w}}$ work and optimal span. (2) Our randomized comparison sorting algorithm uses concurrent writes and achieves optimal $\Oh{n \log n}$ work and optimal $\Oh{\log n}$ span simultaneously, both bounds are with high probability in $n$, and uses $\omega(n)$ space. Design a randomized sorting algorithm that uses exclusive writes and achieves optimal work and optimal span bounds, both bounds w.h.p., and uses linear space. (3) Design a comparison sorting algorithm that uses exclusive writes and achieves optimal work and optimal span. (4) Our FFT algorithm achieves $\Oh{n \log^{g(n)} n}$ work and $\Oh{\log n \log \log \log n}$ span, where $g(n) < 2$ for $n < 10^{10,000}$. Design an FFT algorithm with $\Oh{n \log n}$ work and $\Oh{\log n}$ span.
\subsection*{Acknowledgments}
This research was supported by NSF grants CNS-1553510, CCF-1439084, CNS-1938709, CCF-1617618, CCF-1716252, CCF-1725543,  and CCF-1725428.
\clearpage
\bibliography{references}

\clearpage
\section{Appendix}

\subsection{$n^{\epsilon}$-way Merge Sort} \label{ssec:epsilon-sort}
In this section, we present a simple merge-based parallel sorting algorithm parameterized on a fixed constant $\epsilon \in (0, 1]$, which achieves the span of $\Oh{(\epsilon + 1/\epsilon) \log n}$ but performs suboptimal work of $\Oh{(1/\epsilon) n^{1 + \epsilon}}$. We call this algorithm {\EpsilonWaySort}. 

\begin{figure*}[h]
\centering
\begin{minipage}{\textwidth}
\begin{mycolorbox}{$\EpsilonWaySort(A, n, \epsilon)$}
\begin{minipage}{0.99\textwidth}
{\codesize
\algotopspace{}
\vspace{0.1cm}
\algorequire $1/\epsilon$ is a natural number
\noindent
\begin{enumerate}
\setlength{\itemindent}{-1.5em}
\vsitem \xif $\epsilon = 1$ \xthen Sort $A[1..n]$ in $\Oh{n^2}$ work and  $\Oh{\log n}$ span and \xreturn
\vsitem $r \gets n^\varepsilon$
\vsitem Split $A[1..n]$ into $r$ segments $A_1, ..., A_r$, each of size $n/r$
\vsitem \xparallelfor $k \gets 1$ \xto $r$ \xdo $\textsc{Sort}(A_k, n/r, \varepsilon / (1 - \varepsilon))$
\algomiddlespace{}
\vsitem[] $\{ \text{ Merge the $r$ sorted segments } \}$ \dotfill
\vsitem \xparallelfor $i \gets 1$ \xto $r$ \xdo
\vsitem \T \xparallelfor $j \gets i + 1$ \xto $r$ \xdo
\vsitem \T \T Merge $A_i$ and $A_j$, each of size $n/r$, in $\Oh{n/r}$ work and $\Oh{\log (n/r)}$ span
\vsitem[] \T \T $rank_i[k, j] \gets$ position of $A_i[k]$ in $A_j$ using the merged list of $A_i$ and $A_j$, for all $k \in [1, |A_i|]$
\vsitem[] \T \T $rank_j[k, i] \gets$ position of $A_j[k]$ in $A_i$ using the merged list of $A_i$ and $A_j$, for all $k \in [1, |A_j|]$
\vsitem \xparallelfor $k \gets 1$ \xto $r$ \xdo
\vsitem \T \xparallelfor $i \gets 1$ \xto $|A_k|$ \xdo
\vsitem \T \T $rank_k[i] \gets \textsc{Array-Sum}(rank_k[i, 1..r])$ \xcomment{position of $A_k[i]$ in the merged list of $A_1,A_2,\ldots, A_r$}
\vsitem \T \T $B[rank_k[i]] \gets A_k[i]$
\vsitem \xparallelfor $i \gets 1$ \xto $n$ \xdo $A[i] \gets B[i]$

\algobottomspace{}
\end{enumerate}
}
\end{minipage}
\end{mycolorbox}
\end{minipage}
\vspace{-0.4cm}
\caption{The $n^{\epsilon}$-way sorting algorithm.}
\label{fig:multiway-sort}
\vspace{-0.2cm}
\end{figure*}

Figure \ref{fig:multiway-sort} gives a pseudocode of the sorting algorithm. Choose a fixed constant $\epsilon \in (0, 1]$ such that $1/\epsilon$ is a natural number and suppose that $r = n^{\epsilon}$. The input array $A[1..n]$ is split into $r$ subarrays, each having $n/r$ elements. All $r$ subarrays are sorted recursively. These $r$ sorted subarrays are then merged. The merging process consists of two stages.

In the first stage, we compute $rank_i[k, j]$ in parallel, for $i \in [1, r]$, $j \in [i + 1, r]$, and $k \in [1, |A_i|]$, where $rank_i[k, j]$ represents the rank or position of $A_i[k]$ (the $k$th element of subarray $A_i$) in $A_j$ during the process of merging. Similarly, we compute $rank_j[k, i]$ in parallel. We do not need to store the merged list. We simply need to find the ranks of different elements of the array segments in $\Oh{n}$ time and $\Oh{\log n}$ span, which is easy to achieve.

In the second stage, we compute $rank_k[i]$ in parallel, for $k \in [1, r]$ and $i \in [1, |A_k|]$, where $rank_k[i]$ represents the position of $A_k[i]$ (the $k$th element of subarray $A_i$) in the merged list of the $r$ subarrays $A_1,A_2,\ldots, A_r$. We then use these values to sort $A$ by corresponding assignments. 

\begin{lemma}
{\EpsilonWaySort} has a complexity of $\Oh{(1/\epsilon) n^{1 + \epsilon}}$ work and $\Oh{(\epsilon + 1/\epsilon) \log n}$ span, for $\epsilon \in (0, 1]$.
\end{lemma}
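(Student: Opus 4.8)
The plan is to unroll the recursion and exploit the fact that the parameter update $\epsilon\mapsto\epsilon/(1-\epsilon)$ dovetails with the size update $m\mapsto m^{1-\epsilon}$. Set $c=1/\epsilon$, which is a natural number by hypothesis. First I would prove, by induction on the recursion depth $i$, that every call at depth $i$ receives parameter $\epsilon_i=1/(c-i)$ (so $1/\epsilon_i=c-i$ is again a natural number and the precondition is preserved) and operates on an array of size $n^{1-i\epsilon}$, and that there are $n^{i\epsilon}$ such subarrays; in particular $m^{\epsilon_i}=n^{\epsilon}$ at \emph{every} node, so every level splits its arrays into exactly $r=n^{\epsilon}$ segments. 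The recursion bottoms out at depth $i=c-1$, where the $n^{1-\epsilon}$ arrays each have size $n^{\epsilon}$ and are sorted directly by the $\epsilon=1$ base case in $\Oh{(n^{\epsilon})^{2}}$ work and $\Oh{\log n^{\epsilon}}$ span.

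Next I would bound the work done outside recursion at one node whose array has size $m$ and is split into $r=n^{\epsilon}$ sorted segments of size $m/r$. The $\Oh{r^2}$ pairwise merges contribute $\Oh{r^{2}\cdot m/r}=\Oh{rm}$; forming each $rank_k[i]$ by \textsc{Array-Sum} over $r$ values costs $\Oh{r}$ per element, i.e. $\Oh{mr}$ in total; and the final write-back of $B$ into $A$ costs $\Oh{m}$. So the merge phase at a node costs $\Oh{rm}=\Oh{n^{\epsilon}m}$. Multiplying by the $n^{i\epsilon}$ nodes at depth $i$, each with $m=n^{1-i\epsilon}$, the total merge work at level $i$ is $n^{i\epsilon}\cdot n^{\epsilon}\cdot n^{1-i\epsilon}=n^{1+\epsilon}$, independent of $i$. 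Summing over the $c-1$ merging levels together with the $n^{1-\epsilon}\cdot\Oh{n^{2\epsilon}}=\Oh{n^{1+\epsilon}}$ spent in the base-case sorts yields total work $\Oh{(1/\epsilon)\,n^{1+\epsilon}}$.

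For the span, I would note that every parallel \textbf{for} appearing in a node (the one over $r$ segments, the doubly-nested one over the $\Oh{r^2}$ pairs, and the one over the $\le n$ total elements) is a flat fork tree of at most $n$ leaves, hence launched in $\Oh{\log n}$ span; a single two-way merge of size-$m/r$ segments has span $\Oh{\log m}$, and each \textsc{Array-Sum} of $r$ items has span $\Oh{\log r}$. Thus the non-recursive span at a node handling an array of size $m$ is $\Oh{\log m}$. Unrolling $T_{\infty}(n,\epsilon)=\Oh{\log n}+T_{\infty}(n^{1-\epsilon},\epsilon/(1-\epsilon))$ over the $c$ levels (the last contributing only $\Oh{\log n^{\epsilon}}=\Oh{\epsilon\log n}$ via the base case) gives $T_{\infty}=\Oh{\sum_{i=0}^{c-1}(1-i\epsilon)\log n}=\Oh{(1/\epsilon)\log n}$, which is exactly the claimed $\Oh{(\epsilon+1/\epsilon)\log n}$ since $\epsilon\le 1/\epsilon$ under the hypothesis $\epsilon\in(0,1]$.

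I expect the only genuine subtlety to be the bookkeeping of the first paragraph: confirming that the triple of recurrences for (parameter, array size, number of subarrays) telescopes to clean powers of $n$ and that $1/\epsilon_i$ stays integral, so that the statements ``the recursion has depth $1/\epsilon-1$'' and ``every level fans out by a factor $n^{\epsilon}$'' hold exactly rather than approximately. Once that structure is pinned down, both the work and span claims reduce to the level-by-level sums above. A small point worth one sentence is that the flat parallel \textbf{for} over all $n$ array elements at each level --- even though those elements are scattered across $n^{i\epsilon}$ distinct subarrays --- still contributes only $\Oh{\log n}$ to the span, since its forks form a single balanced tree on $n$ leaves.
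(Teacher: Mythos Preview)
Your proposal is correct and follows essentially the same route as the paper: set up the work and span recurrences, unroll them, and use that the parameter update $\epsilon\mapsto\epsilon/(1-\epsilon)$ together with the size update $m\mapsto m^{1-\epsilon}$ telescopes so that at depth $i$ the array size is $n^{1-i\epsilon}$, the branching factor is always $n^{\epsilon}$, and the recursion terminates after $1/\epsilon-1$ merging levels at arrays of size $n^{\epsilon}$. The paper unrolls the recurrences symbolically line by line, whereas you first nail down the structure of the recursion tree (sizes, parameters, node counts) by induction and then sum per level; these are two presentations of the same computation, and your version makes the invariant $m^{\epsilon_i}=n^{\epsilon}$ explicit, which is a nice touch. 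Your final remark that $\Oh{(1/\epsilon)\log n}=\Oh{(\epsilon+1/\epsilon)\log n}$ for $\epsilon\in(0,1]$ is exactly how the paper's stated span bound should be read.
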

\begin{proof}
The work and span recurrences for the algorithm are:
\begin{align*}
&T_1(n, \varepsilon) \leq \begin{cases}
    c_1n^2 & \text{if } \varepsilon = 1, \\
    n^\varepsilon T_1(n^{1-\varepsilon}, \frac{\varepsilon}{1 - \varepsilon}) + c_2 n^{1 + \varepsilon} & \text{if } \varepsilon \ne 1. \end{cases} 
    &&T_\infty(n, \varepsilon) &\leq \begin{cases}
    c_1' \log n & \text{if } \varepsilon = 1, \\
    T_\infty(n^{1 - \varepsilon}, \frac{\varepsilon}{1 - \varepsilon}) + c_2' \log n & \text{if } \varepsilon \ne 1. \end{cases}
\end{align*}
Expanding the recurrences and setting $k = 1/\varepsilon - 1$, we get
\begin{align*}
T_1(n, \varepsilon) &\leq n^\varepsilon \left[n^\varepsilon T_1(n^{1 - 2\varepsilon}, \frac{\varepsilon}{1 - 2\varepsilon}) + c_2 n\right] + c_2 n^{1 + \varepsilon} = n^{2\varepsilon} T_1(n^{1 - 2\varepsilon}, \frac{\varepsilon}{1 - 2\varepsilon}) + 2 c_2 n^{1 + \varepsilon} \\
    &\leq n^{2\varepsilon} \left[ n^\varepsilon T_1(n^{1 - 3\varepsilon}, \frac{\varepsilon}{1 - 3\varepsilon}) + c_2 n^{1 - \varepsilon}\right] + 2 c_2 n^{1 + \varepsilon} = n^{3\varepsilon} T_1(n^{1 - 3\varepsilon}, \frac{\varepsilon}{1 - 3\varepsilon}) + 3 c_2 n^{1 + \varepsilon} \\
    &\leq n^{k\varepsilon} T_1(n^{1 - k\varepsilon}, \frac{\varepsilon}{1 - k\varepsilon}) +  c_2 k n^{1 + \varepsilon} = n^{1 - \varepsilon} T_1(n^\varepsilon, 1) + c_2 k n^{1 + \varepsilon} \\
    &\leq n^{1 - \varepsilon} (n^\varepsilon)^2 + c_2 k n^{1+\varepsilon} = n^{1 + \varepsilon} + c_2 k n^{1 + \varepsilon} = \Oh{\frac{1}{\varepsilon} n^{1 + \varepsilon}}\\
T_\infty(n, \varepsilon) &\leq T_\infty(n^{1 - k\varepsilon}, \frac{\varepsilon}{1 - k \varepsilon}) + c_2' \log n \left[ 1 + (1 - \varepsilon) + (1 - 2\varepsilon) + \cdots + (1 - (k - 1)\varepsilon)\right] \\
    &= T_\infty(n^{1 - k\varepsilon}, \frac{\varepsilon}{1 - k\varepsilon}) + c_2' \log n \times \frac{k}{2} \left[ 2 - (k - 1)\varepsilon\right] \\
    &= T_\infty(n^\varepsilon, 1) + c_2' (\frac{1}{\varepsilon} + 1 - 2\varepsilon) \log n \leq c_1' \log n^\varepsilon + c_2' (\frac{1}{\varepsilon} + 1) \log n = \Oh{(\varepsilon + \frac{1}{\varepsilon})\log n}
\end{align*}
\end{proof}

\subsection{$d$-D FFT}

In this section, we develop a straightforward generalization of the $\sqrt{n}$-way FFT used in theorem \ref{thm:fft-rootnway-2way}. Instead of splitting up an FFT $y[i] = \sum_{j = 0}^{n-1} a[j] w_n^{-ij}$ into two nested sums $y[i] = \sum_{j_1 = 0}^{\sqrt{n}-1} \sum_{j_2 = 0}^{\sqrt{n}-1} a[j_1 + \sqrt{n} j_2] w_n^{-i(j_1 + \sqrt{n} j_2)}$, we will break it into $d$ nested sums $$y[i] = \sum_{j_1 = 0}^{\sqrt[d]{n}-1} \cdots \sum_{j_d = 0}^{\sqrt[d]{n}-1} a\left[\sum_{k = 1}^{d} j_k n^{(k-1)/d}\right] w_n^{-i\sum_{k = 1}^{d} j_k n^{(k-1)/d}}.$$

\begin{lemma}
The \textsc{$d$-D-FFT} algorithm takes $\Oh{d n \log n}$ work and $\Oh{d \log n \log_d \log n}$ span to run on an array of size $n$.
\end{lemma}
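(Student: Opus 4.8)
The plan is to analyze \textsc{$d$-D-FFT} exactly as the $\sqrt{n}$-way FFT was analyzed in Theorem~\ref{thm:fft-rootnway-2way}, but with branching factor $d$ in place of $2$. First I would fix the structure of a recursive call on a size-$m$ array: view it as a $d$-dimensional $m^{1/d}\times\cdots\times m^{1/d}$ array, and process it in $d$ \emph{sequential} transform stages, where stage $t$ performs $m^{1-1/d}$ size-$m^{1/d}$ \textsc{$d$-D-FFT}s in parallel (one per line along the $t$-th coordinate axis), then multiplies all $m$ entries by the appropriate twiddle factors and, if needed, reindexes; after the $d$-th stage the array read in column-major order is the DFT. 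In the binary-forking model, launching $k$ parallel tasks costs $\Oh{\log k}$ span and a parallel-for over $m$ cells costs $\Oh{\log m}$ span, so one stage costs $\Oh{\log m}$ span plus the span of a single size-$m^{1/d}$ recursive call (the $m^{1-1/d}$ calls in a stage are independent, so only one lies on the critical path), and one stage does $\Oh{m}$ non-recursive work plus $m^{1-1/d}$ recursive calls. Summing over the $d$ stages gives
\begin{align*}
&T_1(m) = d\,m^{1-1/d}\,T_1(m^{1/d}) + \Oh{dm}, &&T_\infty(m) = d\,T_\infty(m^{1/d}) + \Oh{d\log m},
\end{align*}
with $\Th{1}$-size base cases; the recursion stops when $n^{1/d^{k}}=\Th{1}$, i.e.\ after $L=\Th{\log_{d}\log n}$ levels, so $d^{L}=\Th{\log n}$.

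For the span I would unroll directly. Since $\log(n^{1/d^{k}})=\log n/d^{k}$, the term added at the $k$-th unrolling step is $d^{k}\cdot\Oh{d\cdot\log n/d^{k}}=\Oh{d\log n}$, independent of $k$; hence $T_\infty(n)=d^{L}\,T_\infty(\Th{1})+L\cdot\Oh{d\log n}=\Oh{\log n}+\Oh{d\log n\cdot\log_{d}\log n}=\Oh{d\log n\log_{d}\log n}$. For the work, substituting $T_1(m)=m\,g(m)$ collapses the recurrence to $g(m)=d\,g(m^{1/d})+\Oh{d}$, whose unrolling gives $g(n)=d^{L}g(\Th{1})+\Oh{d+d^{2}+\cdots+d^{L}}=\Oh{d^{L}}=\Oh{\log n}$, so $T_1(n)=\Oh{n\log n}$, which is certainly $\Oh{dn\log n}$. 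Equivalently, the recursion tree has $d^{k}n^{1-1/d^{k}}$ nodes at level $k$, each doing $\Oh{d\,n^{1/d^{k}}}$ non-recursive work, i.e.\ $\Oh{d^{k+1}n}$ in total per level, and $\sum_{k\le L}d^{k+1}n=\Th{d^{L+1}n}=\Oh{dn\log n}$.

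The analysis is essentially a geometric-series bookkeeping exercise, so the only points needing care — and the ones I expect to be the real subtlety — are: (i) charging the inter-stage twiddle multiplications and data reshuffles at $\Oh{dm}$ per size-$m$ node (miscounting this as $\Oh{m\log m}$ would break the bound); and (ii) the geometric sum $\sum_{k}d^{k+1}$, which is dominated by its last term $\Th{d^{L+1}}=\Th{d\log n}$ — this is precisely the factor that makes the stated bound $\Oh{dn\log n}$ rather than the sharper $\Oh{n\log n}$, and one must be careful not to lose an additional factor of $d$ when summing over the tree. The rounding issues when $d\nmid\log n$ (so $n^{1/d}$ is not an integer) and the choice of constant-size base case are standard and do not affect the asymptotics.
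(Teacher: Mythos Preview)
Your proposal is correct and follows exactly the paper's approach: you derive the same recurrences $T_1(m)=d\,m^{1-1/d}T_1(m^{1/d})+\Oh{dm}$ and $T_\infty(m)=d\,T_\infty(m^{1/d})+\Oh{d\log m}$ from the $d$-stage sequential structure, and the paper simply asserts the stated solutions without the explicit unrolling you provide. Your observation that the substitution $T_1(m)=m\,g(m)$ actually yields the sharper $\Oh{n\log n}$ work bound is correct (the geometric sum $\sum_{k\le L}d^{k+1}$ is $\Th{d^{L+1}/(d-1)}=\Th{\log n}$, not $\Th{d\log n}$, for $d\ge 2$), so the paper's $\Oh{dn\log n}$ is just a convenient upper bound.
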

\begin{proof}
The code given in Figure \ref{fig:d-D-fft} shows the structure of the algorithm being used. We reindex the array as a column-major $d$-D hypercube with side length $\sqrt[d]{n}$; for every dimension we recursively apply \textsc{$d$-D-FFT} across all $n^{1 - 1/d}$ subarrays found by holding all but one of the $d$ indices constant, followed by a single pointwise multiplication by twiddle factors. The recurrences from this for work and span are
$$T_1(n) = \begin{cases} \Oh{1} & \text{if } n \leq 1 \\ d \; n^{1 - 1/d} T_1(n^{1/d}) + \Oh{d n} & \text{if } n > 1 \end{cases},$$
$$T_\infty(n) = \begin{cases} \Oh{1} & \text{if } n \leq 1 \\ d \; T_\infty(n^{1/d}) + \Oh{d \log n} & \text{if } n > 1 \end{cases},$$
the solutions to these are given by $T_\infty(n) = \Oh{d \log n \log_d \log n}$ and $T_1(n) = \Oh{d n \log n}$.
\end{proof}

\begin{figure*}
\centering
\begin{minipage}{\textwidth}
\begin{mycolorbox}{\textsc{$d$-D-FFT}$(A, n)$}
\begin{minipage}{0.99\textwidth}
{\codesize
\algotopspace{}
\vspace{0.1cm}
\noindent
\begin{enumerate}
\setlength{\itemindent}{-1.5em}
\vsitem \xif $n = 1$ \xreturn $A$
\vsitem \xfor $k \gets 1$ to $\sqrt[d]{n}$ \xdo
\vsitem \T \xparallelfor $j_1 \gets 1$ to $\sqrt[d]{n} - 1$ \xdo
\vsitem[] \T \vdots
\vsitem \T \xparallelfor $\widehat{j_k} \gets 1$ to $\sqrt[d]{n} - 1$ \xdo \{this line omitted\}
\vsitem[] \T \vdots
\vsitem \T \xparallelfor $j_d \gets 1$ to $\sqrt[d]{n} - 1$ \xdo
\vsitem \T \T \textsc{$d$-D-FFT}$(A[\sum_{k' = 1}^{d} j_{k'} n^{(k-1)/d}], n^{(k - 1) / d})$ \{where the array is indexed over $j_k$\}
\vsitem \T \T Multiply $A[...]$ by twiddle factors
\vsitem \xreturn $A$

\algobottomspace{}
\end{enumerate}
}
\end{minipage}
\end{mycolorbox}
\end{minipage}
\vspace{-0.4cm}
\caption{The $d$-D FFT algorithm.}
\label{fig:d-D-fft}
\vspace{-0.2cm}
\end{figure*}

\begin{figure*} % transcription of The Fat Fourier Transform (from homework)
\centering
\begin{minipage}{\textwidth}
\begin{mycolorbox}{$\textsc{FFT}(X, n)$}
\begin{minipage}{0.99\textwidth}
{\codesize
\algotopspace{}
\vspace{0.1cm}
(Input is a vector of length $n = 2k$ for some integer $k \geq 0$. Output is the in-place FFT of $X$.)
\noindent
\begin{enumerate}
\setlength{\itemindent}{-1.5em}
\vsitem \textbf{Base Case:} If n is a small constant then compute FFT using the direct formula and return.
\vsitem \textbf{Divide-and-Conquer:}
\begin{enumerate}
    \item \textbf{Divide:} Let $n_{1}=2^{\left\lceil\frac{k}{2}\right\rceil}$ and $n_{2}=2^{\left\lfloor\frac{k}{2}\right\rfloor}$. Observe that $n_{2} \in\left\{n_{1}, 2 n_{1}\right\}$.
    \item \textbf{Transpose:} Treat $X$ as a row-major $n_1 \times n_2$ matrix. Transpose $X$ in-place.
    \item \textbf{Conquer:} \xfor $i\gets 0$ to $n_2 - 1$ \xdo $\textsc{FFT}(X\left[i \times n_{1}, i \times n_{1}+n_{1}-1\right], n_{1})$
    \item \textbf{Multiply:} Multiply each entry of $X$ by the appropriate twiddle factor
    \item \textbf{Transpose:} Treat $X$ as a row-major $n_2 \times n_1$ matrix. Transpose $X$ in-place.
    \item \textbf{Conquer:} \xfor $i\gets 0$ to $n_1 - 1$ \xdo $\textsc{FFT}(X\left[i \times n_{2}, i \times n_{2}+n_{2}-1\right], n_{2})$
    \item \textbf{Transpose:} Treat $X$ as a row-major $n_1 \times n_2$ matrix. Transpose $X$ in-place.
    \item \xreturn $X$
\end{enumerate}

\algobottomspace{}
\end{enumerate}
}
\end{minipage}
\end{mycolorbox}
\end{minipage}
\vspace{-0.4cm}
\caption{The generic \textsc{FFT} algorithm.}
\label{fig:fft-multiway}
\vspace{-0.2cm}
\end{figure*}

\begin{figure*}
\centering
\begin{minipage}{\textwidth}
\begin{mycolorbox}{\textsc{Strassen}$(U, V, n)$}
\noindent
{\codesize
\algotopspace{} 
\vspace{-0.1cm}
\noindent
\begin{enumerate}
\setlength{\itemindent}{-1.5em}
\vsitem \xif $n = 1$ \xthen \xreturn $U[0] \times V[0]$
\vsitem[] $\{ \text{ Divide } \}$ \dotfill
\vsitem $U_{r1} \gets U_{11} + U_{12}$; $U_{r2} \gets U_{21} + U_{22}$; $U_{c1} \gets U_{21} - U_{11}$; $U_{c2} \gets U_{12} - U_{22}$; $U_{d1} \gets U_{11} + U_{22}$
\vsitem $V_{r1} \gets V_{11} + V_{12}$; $V_{r2} \gets V_{21} + V_{22}$; $V_{c1} \gets V_{21} - V_{11}$; $V_{c2} \gets V_{12} - V_{22}$; $V_{d1} \gets V_{11} + V_{22}$
\vsitem[] $\{ \text{ Conquer } \}$ \dotfill
	\vsitem $P_1 \gets \textsc{Strassen}(U_{d1}, V_{d1}, n/2)$; $P_2 \gets \textsc{Strassen}(U_{r2}, V_{11}, n/2)$; $P_3 \gets \textsc{Strassen}(U_{11}, V_{c2}, n/2)$ 
	\vsitem $P_4 \gets \textsc{Strassen}(U_{22}, V_{c1}, n/2)$; $P_5 \gets \textsc{Strassen}(U_{r1}, V_{22}, n/2)$; $P_6 \gets \textsc{Strassen}(U_{c1}, V_{r1}, n/2)$
	\vsitem $P_7 \gets \textsc{Strassen}(U_{c2}, V_{r2}, n/2)$
	\vsitem[] $\{ \text{ Merge } \}$ \dotfill
	\vsitem Allocate $X$
	\vsitem $X_{11} \gets P_1 + P_4 - P_5 + P_7$; $X_{12} \gets P_3 + P_5$; $X_{21} \gets P_2 + P_4$; $X_{22} \gets P_1 - P_2 + P_3 + P_6$
	\vsitem \xreturn $X$
\end{enumerate}
}
\end{mycolorbox}
\end{minipage}
\vspace{-0.4cm}
\caption{Strassen's MM algorithm.}
\label{fig:strassen-original}
\vspace{-0.2cm}
\end{figure*}

\end{document}